\newtheorem{thm}{Theorem}[section]
\newtheorem{cor}[thm]{Corollary}
\newtheorem{pro}[thm]{Proposition}
\theoremstyle{definition}   % will be in roman format after this
\newtheorem{ex}[thm]{Example}
\newtheorem{rmk}[thm]{Remark}
\newtheorem{defi}[thm]{Definition}
\newcommand {\emptycomment}[1]{}
\newcommand{\be }{\begin{equation}}
\newcommand{\ee }{\end{equation}}
\newcommand{\iii}{{\mathbf{1}}}
\newcommand{\III}{{\mathbf{I}}}
\newcommand{\g}{\mathfrak g}
\newcommand{\huaB}{\mathcal{B}}%{{\mathcal{E}}}%{\mathcal{B}}
\newcommand{\huaL}{\mathcal{L}}
\newcommand{\huaC}{{\mathcal{C}}}%{\mathcal{C}}
\newcommand{\huaJ}{\mathcal{J}}
\newcommand{\huaO}{{\mathcal{O}}}
\newcommand{\huaN}{\mathcal{N}}
\newcommand{\frkJ}{\mathfrak J}
\newcommand{\frkL}{\mathfrak L}
\newcommand{\frkR}{\mathfrak R}
\newcommand{\Id}{{\rm{Id}}}
\newcommand{\br}[1]{   [ \cdot,    \cdot  ]   }
\newcommand{\End}{\mathrm{End}}
\newcommand{\K}{\mathbb{K}}
\newcommand{\R}{\mathbb{R}}
\newcommand{\swe}{\textswab{e}}
\newcommand{\swg}{\textswab{g}}
\newcommand{\rmG}{{\rm G}}
\begin{document}

\title[Leibniz $2$-algebras, linear $2$-racks and the Zamolodchikov Tetrahedron equation]
{Leibniz $2$-algebras, linear $2$-racks and the Zamolodchikov Tetrahedron equation}

\author{Nanyan Xu}
\address{Department of Mathematics, Jilin University, Changchun 130012, Jilin, China}
\email{xuny20@mails.jlu.edu.cn}
%\author{Rong Tang}
%\address{Department of Mathematics, Jilin University, Changchun 130012, Jilin, China}
%\email{tangrong@jlu.edu.cn}
\author{Yunhe Sheng}
\address{Department of Mathematics, Jilin University, Changchun 130012, Jilin, China}
\email{shengyh@jlu.edu.cn}

\vspace{-2cm}

%\date{\today}

\begin{abstract}
In this paper, first we show that a central Leibniz 2-algebra naturally gives rise to a solution of the Zamolodchikov Tetrahedron equation. Then we introduce the notion of linear 2-racks and show that a linear 2-rack also gives rise to a solution of the Zamolodchikov Tetrahedron equation. We show that a central Leibniz 2-algebra gives rise to a linear 2-rack if the underlying 2-vector space is splittable. Finally we discuss the relation between linear 2-racks and 2-racks, and show that a linear 2-rack gives rise to a 2-rack structure on the group-like category. A concrete example of strict 2-racks is constructed from an action of a strict 2-group.
\end{abstract}

\renewcommand{\thefootnote}{}
\footnotetext{2020 Mathematics Subject Classification.
%22E60, Lie algebras of Lie groups
17A32,%Leibniz algebras
17B38, %Yang-Baxter equations and Rota-Baxter operators
16T25, %Yang-Baxter equations
%65L99, %Numerical analysis/None of the above, but in this section
%16T05, %    Hopf algebras and their applications
%81R50, % Quantum groups and related algebraic methods applied to problems in quantum theory
%65L06, %Multistep, Runge-Kutta and extrapolation methods for ordinary differential equations
%53A55, %Differential invariants (local theory), geometric objects
%18M60%      Operads (general)
%17B37, %Quantum groups (quantized enveloping algebras) and related deformations81R50,
%81R12, %Groups and algebras in quantum theory and relations with integrable systems%16T26,
%17A30,
%17B62
}
\keywords{Zamolodchikov Tetrahedron equation, Leibniz 2-algebra, linear 2-rack, 2-group}

\maketitle

\tableofcontents

\allowdisplaybreaks

%\end{document}

\vspace{-1.5cm}
\section{Introduction}

The main purpose of this paper is to construct solutions of the Zamolodchikov Tetrahedron equation using central Leibniz 2-algebras and linear 2-racks. The relations between these algebraic structures are also investigated.

\subsection{The Yang-Baxter equation and related algebraic structures}

The Yang-Baxter equation is a fundamental equation in mathematical physics, originating from the study of integrable quantum systems and statistical mechanics in the 1960s-70s. Its name honors the foundational work of C. Yang \cite{Yang} and R. Baxter \cite{Baxter}. A solution of the Yang-Baxter equation on a vector space $V$ is an  invertible  linear map $\huaB:V\otimes V\rightarrow V\otimes V$ satisfying
\begin{eqnarray}\label{YBE}
(\huaB\otimes{\Id}_V)({\Id}_V\otimes\huaB)(\huaB\otimes{\Id}_V)
=({\Id}_V\otimes\huaB)(\huaB\otimes{\Id}_V)({\Id}_V\otimes\huaB).
\end{eqnarray}

Leibniz algebras, first discovered by Bloh under the name of D-algebras \cite{Blo}, and  rediscovered by Loday in his study of algebraic K-theory \cite{Loday1}. Specially, a (right) {\bf Leibniz algebra} is a vector space $\g$ together with a bilinear operation $[\cdot,\cdot]_\g$ such that the Leibniz identity holds:
\begin{eqnarray*}
\label{Leibniz}[[x,y]_\g,z]_\g=[[x,z]_\g,y]_\g+[x,[y,z]_\g]_\g,\,\,\,\forall x,y,z\in\g.
\end{eqnarray*}
In \cite{Lebed1}, Lebed showed that a central Leibniz algebra $(\g,[\cdot,\cdot]_\g,e)$, where $e$ is a central element, gives rise to a solution of the Yang-Baxter equation via the following formula:
\begin{eqnarray}\label{cen-Leibniz-sol}
&&\huaB(x\otimes y)=y\otimes x+e\otimes[x,y]_\g,\,\,\,\forall x,y\in\g.
\end{eqnarray}

A (right) rack is a set $X$ equipped with a map $\lhd:X\times X\to X$ such that the map~$\bullet\lhd x:X\rightarrow X,~y\mapsto y\lhd x$ is bijective for all $x\in X$, and
$$(x\lhd y)\lhd z=(x\lhd z)\lhd(y\lhd z), \quad\forall x,y,z\in X. $$
Racks play very important roles in the study of set-theoretical solutions of the Yang-Baxter equation \cite{AG,DRS,LV}.
Linear racks are rack objects in the category of coalgebras, see \cite{Carter,Krahmer,Lebed3} for more details. In \cite{Lebed3},  Lebed showed that a linear rack $(V,\Delta,\varepsilon,\lhd)$, where $(V,\Delta,\varepsilon)$ is a cocommutative coalgebra,  gives rise to a solution $\huaB:V\otimes V\rightarrow V\otimes V$  of the Yang-Baxter equation:
\begin{eqnarray}\label{quantum rack to solution formula}
\huaB\underline{}(u\otimes v)=v_{(1)}\otimes (u\lhd v_{(2)}),\,\,\,\forall u\otimes v\in V\otimes V.
\end{eqnarray}
Abramov and Zappala  generalized this construction to trilinear racks (reversible TSD) \cite{Abramov}.

Lebed showed in \cite{Lebed3} that a Leibniz algebra $\g$ naturally gives rise to a linear rack structure on $\g\oplus\K$, such that the solutions of the Yang-Baxter equation given by the linear rack $\g\oplus\K$ and the central Leibniz algebra $\g\oplus\K$ are the same.

\subsection{2-vector spaces and decategorification}

A {\bf $2$-vector space} is an internal category in the category of vector spaces \cite{BC}. That is, a $2$-vector space  $V=(V_0,V_1,s,t,\iii)$ consists of
  a vector space of objects $V_0$, a vector space of morphisms $V_1$, linear maps (source and target maps) $s,t: V_1 \to V_0$,
and a linear map (the identity map) $\iii: V_0 \to V_1$ satisfying some compatibility conditions.
%It is obvious that $\K=(\K,\K,\Id_\K,\Id_\K,\Id_\K)$ is a $2$-vector space. A linear functor $F:V\to W$ between two $2$-vector spaces consists of two linear maps $F_0:V_0\to W_0$ and $F_1:V_1\to W_1$ which are compatible with source, target and identity maps and denoted by $F=(F_0,F_1)$. Given two linear functors $F,G:V\to W$, a linear natural transformation $\alpha:F\Rightarrow G$ is a linear map $\alpha:V_0\to W_1$ which satisfies some compatible conditions. If $\alpha$ is additively an isomorphism, then it is a linear natural isomorphism.
One can put Lie/Leibniz algebra structures on 2-vector spaces, and obtain Lie/Leibniz 2-algebras (equivalently, 2-term $L_\infty$-algebras/Leibniz$_\infty$-algebras)  \cite{BC,SL}.

2-vector spaces are considered as the categorification of vector spaces. In other words, one can obtain vector spaces by decategorification of $2$-vector spaces.
One of the most natural ways to decategorify the information in a category is to take its Grothendieck group, which can be defined for an abelian category, an additive category, or an essentially small category with a bifunctor.
Given a bifunctor $\star:\huaC\times\huaC\to\huaC$ on an essentially small category $\huaC$, one can obtain the Grothendieck group of $\huaC$ with respect to the bifunctor $\star$ as the quotient group $$K^\star_0(\huaC)=\mathcal{F}(\huaO_\huaC)/\huaN_\star(\huaC),$$
where $\huaO_\huaC={\rm ob}(\huaC)/\cong$ is the set of all isomorphism classes of objects in $\huaC$, $\mathcal{F}(\huaO_\huaC)$ is the free abelian group generated by the set $\huaO_\huaC$, and $\huaN_\star(\huaC)$ is the normal subgroup generated by elements $\{\overline{a\star b}-\overline{a}-\overline{b}~|~a,b\in{\rm ob}(\huaC)\}$.

There is an alternative viewpoint regarding the Grothendieck group of $\huaC$ with respect to an associative
and commutative bifunctor $\star$.
Given an associative and commutative bifunctor $\star$,
$(\huaO_\huaC,\star)$ is an abelian semigroup,
where $\overline{a}{\star}\overline{b}=\overline{a\star b}$ for any $\overline{a},\overline{b}\in\huaO_\huaC$.
Then the Grothendieck group $K^\star_0(\huaC)$ with respect to the  bifunctor $\star$ is actually the group completion of the abelian semigroup $(\huaO_\huaC,\star)$. More details about Grothendieck groups are available in \cite{LM}.

For a $2$-vector space $V=(V_0,V_1,s,t,\iii)$,  the addition bifunctor $+$ is associative and commutative.
Moreover, the set of all isomorphism classes of objects $\overline{V_0}=V_0/\cong$ endowed with $+$ is not merely an abelian semigroup, but indeed an abelian group, which has $\overline{0}$ as the unit and $\overline{-x}$ as the inverse for each $\overline{x}$.
Therefore, the group completion of the abelian semigroup $(\overline{V_0},+)$ is
the abelian group $(\overline{V_0},+)$ itself, which implies that the Grothendieck group of a $2$-vector space $V$ is the abelian group $(\overline{V_0},+)$.
Furthermore, by inheriting the scalar multiplication structure from $V_0$, $\overline{V_0}$ becomes a vector space.
In this case, the vector space $\overline{V_0}$ is the decategorification of the $2$-vector space $V=(V_0,V_1,s,t,\iii)$.

\subsection{The Zamolodchikov Tetrahedron equation}

As an extension of the Yang-Baxter equation, the Zamolodchikov Tetrahedron equation is
\begin{eqnarray}\label{ZT-matrix-eq}
S_{123}\circ S_{145}\circ S_{246}\circ S_{356}&=&S_{356}\circ S_{246}\circ S_{145}\circ S_{123},
\end{eqnarray}
where $V$ is a vector space, $S\in\End(V^{\otimes 3})$ and $S_{ijk}\in\End(V^{\otimes 6})$ for $1\leqslant i<j<k\leqslant 6$ are maps that act as $S$ on the $ij~k$-th tensor components of the vector space $V^{\otimes 6}$, while acting as the identity on the remaining components. This equation was first proposed by Zamolodchikov \cite{Zam1,Zam2}, who generalized the ``triangle" structure of the Yang-Baxter equation to a ``tetrahedron" structure by introducing the braiding relationship in three-dimensional spaces and proposed the algebraic form of the equation together with a conjectured solution. % to this equation, partially validating its consistency through specific limiting cases.
 Later, Baxter rigorously proved this solution in \cite{Bax} and derived the partition function of the Zamolodchikov model for an $n\times\infty\times\infty$ lattice in \cite{Bax2}. %Notably, the case $n=\infty$ yields a result distinct from that reported %by Bazhanov and Stroganov in \cite{BS}.
The Zamolodchikov Tetrahedron equation is closely related to the integrable lattice models of statistical mechanics and quantum field theory, promoting significant progress in these research fields \cite{Kon,KOS,KS,MS}.
Recently, the relationship between the Zamolodchikov Tetrahedron equation and many algebraic structures has been explored \cite{BV,GSZ,IKT}, and the theory of set-theoretic solutions has also been greatly developed \cite{CK,IK1,IK2}.

In \cite{KV}, Kapranov and Voevodsky  %developed a theory of monoidal structures in $2$-categories and braidings (or quasi-symmetries) of these monoidal structures, where axiomatics of braidings involve convex polytopes. Then
 interpreted the Zamolodchikov Tetrahedron equation as the commutativity of a three-dimensional diagram with the shape of a permutohedron in a monoidal $2$-category.
They showed that the usual vertex formulation \eqref{ZT-matrix-eq} of the Zamolodchikov Tetrahedron equation can be obtained by choosing a special example of monoidal $2$-categories \cite[Example 6.8]{KV}.
Moreover, it is obvious that the $2$-category {\bf$2$Vect} with $2$-vector spaces as objects, linear functors as $1$-morphisms, and linear natural transformations as $2$-morphisms,
  is a monoidal $2$-category, and the associated Zamolodchikov Tetrahedron equation was described in \cite{BC}. More precisely,  the Zamolodchikov Tetrahedron equation on a $2$-vector space $V$ for a linear invertible functor $B:V\otimes V\to V\otimes V$ and a linear natural isomorphism $Y:(B\otimes\Id)(\Id\otimes B)(B\otimes\Id)\Rightarrow(\Id\otimes B)(B\otimes\Id)(\Id\otimes B)$  reads as follows:
 \begin{eqnarray*}
&\big[(\Id\otimes Y)\ast\Id_{(B\otimes\Id\otimes\Id)(\Id\otimes B\otimes\Id)(\Id\otimes\Id\otimes B)}\big]
\big[\Id_{(\Id\otimes B\otimes\Id)(\Id\otimes\Id\otimes B)}\ast(Y\otimes\Id)\ast\Id_{\Id\otimes\Id\otimes B}\big]&\\
&\big[\Id_{(\Id\otimes B\otimes\Id)(B\otimes\Id\otimes\Id)}\ast(\Id\otimes Y)\ast\Id_{B\otimes\Id\otimes\Id}\big]
\big[(Y\otimes\Id)\ast\Id_{(\Id\otimes\Id\otimes B)(\Id\otimes B\otimes\Id)(B\otimes\Id\otimes\Id)}\big]&\\
&=&\\
&\big[\Id_{(\Id\otimes\Id\otimes B)(\Id\otimes B\otimes\Id)(B\otimes\Id\otimes\Id)}\ast(\Id\otimes Y)\big]
\big[\Id_{\Id\otimes\Id\otimes B}\ast(Y\otimes\Id)\ast\Id_{(\Id\otimes\Id\otimes B)(\Id\otimes B\otimes\Id)}\big]&\\
&\big[\Id_{B\otimes\Id\otimes\Id}\ast(\Id\otimes Y)\ast\Id_{(B\otimes\Id\otimes\Id)(\Id\otimes B\otimes\Id)}\big]
\big[\Id_{(B\otimes\Id\otimes\Id)(\Id\otimes B\otimes\Id)(\Id\otimes\Id\otimes B)}\ast(Y\otimes\Id)\big],&
\end{eqnarray*}
where the both sides of the equality are understood as  the vertical composition of the linear natural transformations, and
 $\ast$ represents the horizontal composition of natural transformations, $\Id_{\Id\otimes\Id\otimes B}$ is the identity natural isomorphism of the functor $\Id\otimes\Id\otimes B$ and the same for the others. See the appendix for a diagram description of this equation.

Let the vector space $\overline{V_0}$ be the decategorification of the $2$-vector space $V=(V_0,V_1,s,t,\iii)$, and $(B,Y)$ a solution of the Zamolodchikov Tetrahedron equation on $V$.
Define a linear map $\overline{B}:\overline{V_0}\otimes\overline{V_0}\to \overline{V_0}\otimes\overline{V_0}$ as follows:
\begin{eqnarray}\label{deficat-of-ZTE}
\overline{B}(\overline{x}\otimes\overline{y})=\overline{B(x\otimes y)},
\quad\forall\overline{x}\otimes\overline{y}\in\overline{V_0}\otimes\overline{V_0}.
\end{eqnarray}
Here we identify the vector space $\overline{V_0}\otimes\overline{V_0}$ with $\overline{V_0\otimes V_0}$.
Then $\overline{B}$ satisfies the Yang-Baxter equation:
$$(\overline{B}\otimes\Id)(\Id\otimes\overline{B})(\overline{B}\otimes\Id)
=(\Id\otimes\overline{B})(\overline{B}\otimes\Id)(\Id\otimes\overline{B}),$$
since $Y$ is a linear natural isomorphism. Thus, the Yang-Baxter equation serves as the decategorification of the Zamolodchikov Tetrahedron equation.

\subsection{Main results and outline of the paper}

In this paper,  we show that a Leibniz 2-algebra (which is the categorification of a Leibniz algebra, and equivalent to a 2-term Leibniz$_\infty$-algebra \cite{ammardefiLeibnizalgebra}) with a central object naturally gives rise to a solution of the Zamolodchikov Tetrahedron equation. On the one hand, this generalizes Baez and Crans' construction in \cite{BC}; on the other hand, this can be viewed as the categorification of Lebed's construction of solutions of the Yang-Baxter equation using central Leibniz algebras given in \cite{Lebed1}.  We introduce the notion of linear 2-racks and show that a linear 2-rack also gives rise to a solution of the Zamolodchikov Tetrahedron equation, which can be viewed as the categorification of Lebed's construction of solutions of the Yang-Baxter equation using linear racks given in \cite{Lebed3}. We  show that a central Leibniz 2-algebra gives rise to a linear 2-rack if the underlying 2-vector space is splittable. Finally we  show that a linear 2-rack gives rise to a 2-rack structure on the group-like category, and construct a concrete example of strict 2-racks from an action of a strict 2-group. The results in this paper can be summarized by the following diagram, where the dotted arrows are what we obtained:

{\footnotesize\begin{equation}\label{diagram:main}
\begin{array}{l}
\xymatrix@!0@C=9.5ex@R=7.9ex{
&&
\txt{\rm solutions of the ZTE}
\ar@{-->}[ddd]_-{\txt{\\\tiny~{\rm Decate.}}}^-{\txt{\\\tiny\eqref{deficat-of-ZTE}}}
\ar@{==}[rrr]_-{\tiny\txt{\rm central Leibniz $2$-alg. \\splittable}}
&&&
&\txt{\rm solutions of the ZTE}
\ar@{-->}[ddd]_-{\txt{\\\tiny~{\rm Decate.}}}^-{\txt{\\\tiny\eqref{deficat-of-ZTE}}}\\
\txt{\rm central Leibniz $2$-algebra\\$(\huaL,[\cdot,\cdot],\huaJ,\swe)$}
\ar@{-->}[urr]^-{\rm Theorem~\ref{cen-Lei-to-sol}\quad}
\ar@{-->}[rrrr]^-{\qquad\,\,\rm Theorem~\ref{cen-lei-2-alg-to-lin-2-rack}}_-{\tiny\txt{\qquad\qquad 2-v.~s.~ splittable}}
\ar@{-->}[ddd]_-{{\rm Decate.}}^-{\text{{\rm Proposition~\ref{decate-Leibniz-2-alg}}}}
&&&
&\txt{\rm linear $2$-racks\\$(\huaL,\Delta,\varepsilon,\lhd,\frkR)$}
\ar@{-->}[urr]_-{~\rm Theorem~\ref{lin-2-rack-to-sol}}
\ar@{-->}[ddd]_-{{\rm Decate.}}^-{\text{{\rm Corollary~\ref{decate-lin-2-rack}}}}
&&\\
&&&&&&&\\
&&
\txt{\rm solutions of the YBE}
\ar@{=}[rrrr]_{\quad\rm ~Leibniz~algebra~splittable}
&&&
&\txt{\rm solutions of the YBE}\\
\txt{\rm central Leibniz algebra\\ $(\overline{\huaL_0},[\cdot,\cdot]_{\overline{\huaL_0}},\overline{\swe})$}
\ar[urr]_-{\eqref{cen-Leibniz-sol}}
\ar@{-->}[rrrr]^-{\rm Corollary~\ref{cor:cl-lr}}
&&&
&\txt{\rm linear rack\\ $(\overline{\huaL_0},\overline{\Delta},\overline{\varepsilon},{\overline{\lhd}})$}
\ar[urr]_-{\eqref{quantum rack to solution formula}}
&&
}
\end{array}
\end{equation}}

The paper is organized as follows. In Section \ref{sec:cL}, we construct solutions of the Zamolodchikov Tetrahedron equation using central Leibniz 2-algebras. In Section \ref{sec:l2rack}, we construct solutions of the Zamolodchikov Tetrahedron equation using linear 2-racks. In Section \ref{sec:passage}, we show that certain central Leibniz 2-algebras give rise to linear 2-racks. In Section \ref{sec:r}, we show that on the group-like category of a linear 2-rack, there is naturally a 2-rack structure.

\section{Central Leibniz $2$-algebras and the Zamolodchikov Tetrahedron equation}\label{sec:cL}

In this section, we show that central Leibniz 2-algebras give solutions of the Zamolodchikov Tetrahedron equation. %, which generalizes the construction using Lie 2-algebras given by Baez and Crans in  \cite{BC}.
 The notion of (left) Leibniz $2$-algebras was introduced in \cite{SL}, which is equivalent to 2-term Leibniz$_\infty$-algebras \cite{ammardefiLeibnizalgebra}. In this paper, we use right Leibniz $2$-algebras.

\begin{defi}%\cite{SL}
A (right) {\bf Leibniz $2$-algebra} consists of
\begin{itemize}
  \item a 2-vector space $\huaL=(\huaL_0,\huaL_1,s,t,\iii)$, i.e. an internal category in the category of vector spaces;
  \item a linear functor (bracket) $[\cdot,\cdot]:\huaL\otimes\huaL\to\huaL$;
  \item a linear natural isomorphism $\huaJ_{x\otimes y\otimes z}:[[x,y],z]\rightarrow[[x,z],y]+[x,[y,z]]$, satisfies the following Jacobiator identity:
  \begin{eqnarray}\label{Jacob}
  &&(\huaJ_{[x,w]\otimes y\otimes z}+\huaJ_{x\otimes[y,w]\otimes z}+\huaJ_{x\otimes y\otimes[z,w]})\circ([\huaJ_{x\otimes y\otimes w},\iii_z]+\iii_{[[x,y],[z,w]]})\circ\huaJ_{[x,y]\otimes z\otimes w}\\
  &=&([\huaJ_{x\otimes z\otimes w},\iii_y]+\iii_{[[x,z],[y,w]]+[[x,w],[y,z]]}+[\iii_x, \huaJ_{y\otimes z\otimes w}])\circ(\huaJ_{[x,z]\otimes y\otimes w}+\huaJ_{x\otimes[y,z]\otimes w})\circ[\huaJ_{x\otimes y\otimes z},\iii_w],\nonumber
  \end{eqnarray}
\end{itemize}
where $x,y,z,w\in\huaL_0$ and the above identity can be described as the following commutative diagram:
{\footnotesize\begin{equation*}\label{Leib-2-identity}
%\begin{array}{l}
\xymatrix@R=1.5pc@C=0.3pc{
 &
 \text{$[[[x,y],z],w]$}
 \ar[dl]_-{\text{$\huaJ_{[x,y]\otimes z\otimes w}$}}
 \ar[dr]^-{\text{$[\huaJ_{x\otimes y\otimes z},\iii_w]$}}
 &\\
 \text{$[[[x,y],w],z]+[[x,y],[z,w]]$}
 \ar[d]_-{\text{$[\huaJ_{x\otimes y\otimes w},\iii_z]+\iii_{[[x,y],[z,w]]}$}}
 &
 &\text{$[[[x,z],y],w]+[[x,[y,z]],w]$}
 \ar[d]^-{\text{$\huaJ_{[x,z]\otimes y\otimes w}+\huaJ_{x\otimes [y,z]\otimes w}$}}\\
 \txt{$[[[x,w],y],z]+[[x,[y,w]],z]$\\$+[[x,y],[z,w]]$}
 \ar[dr]_-{\text{$\huaJ_{[x,w]\otimes y\otimes z}+\huaJ_{x\otimes [y,w]\otimes z}+\huaJ_{x\otimes y\otimes[z,w]}\quad\qquad\qquad$}}
 &
 &\txt{$[[[x,z],w],y]+[[x,z],[y,w]]$\\$+[[x,w],[y,z]]+[x,[[y,z],w]]$}
 \ar[dl]^-{\text{$\qquad\qquad\qquad\qquad\qquad[\huaJ_{x\otimes z\otimes w},
 \iii_y]+\iii_{[[x,z],[y,w]]+[[x,w],[y,z]]}+[\iii_x, \huaJ_{y\otimes z\otimes w}]$}}\\
 &
 \txt{$[[[x,w],z],y]+[[x,w],[y,z]]$\\$+[[x,z],[y,w]]+[x,[[y,w],z]]$\\$+[[x,[z,w]],y]+[x,[y,[z,w]]]$}
 &
 }
 %\end{array}
 \end{equation*}
 }
\end{defi}

\begin{defi}\cite{BC}
Let $(\huaL,[\cdot,\cdot],\huaJ)$ and $(\huaL',[\cdot,\cdot]',\huaJ')$ be two Leibniz $2$-algebras. A {\bf homomorphism} $(F_0,F_1,F_2):\huaL\to\huaL'$ between two Leibniz $2$-algebras consists of:
\begin{enumerate}
  \item[(i)] a linear functor $F=(F_0,F_1):\huaL\to\huaL'$   between the underlying $2$-vector spaces;
  \item[(ii)] a linear natural isomorphism $F_2:[\cdot,\cdot]'\Rightarrow(F\otimes F)\to F\circ [\cdot,\cdot]$ such that for any $x,y,z\in\huaL_0$ the following diagram with respect to the composition of morphisms is commutative:
     {\footnotesize \begin{displaymath}
\xymatrix@C=3.7ex@R=3.5ex{
  \txt{$[[F(x),F(y)]',F(z)]'$}
  \ar[rr]^-{\huaJ'_{F(x)\otimes F(y)\otimes F(z)}}
  \ar[d]_-{[F_2(x,y),\iii_{F(z)}]'}
  &
  &\txt{$[[F(x),F(z)]',F(y)]'+[F(x),[F(y),F(z)]']'$}
  \ar[d]^-{[F_2(x,z),\iii_{F(y)}]'+[\iii_{F(x)},F_2(y,z)]'}\\
\txt{$[F[x,y],F(z)]'$}
\ar[d]_-{F_2([x,y],z)}
  &\rotatebox{165}{{\txt{\Huge $\circlearrowright$}}}
  &\txt{$[F[x,z],F(y)]'+[F(x),F[y,z]]'$}
  \ar[d]^-{F_2([x,z],y)+F_2(x,[y,z])}\\
  \txt{$F[[x,y],z]$}
  \ar[rr]_-{F(\huaJ_{x\otimes y\otimes z})}
  &
  &\txt{$F([[x,z],y]+[x,[y,z]]).$}
}
\end{displaymath}}
\end{enumerate}
\end{defi}

\begin{defi}
Let $(\huaL,[\cdot,\cdot],\huaJ)$ be a Leibniz $2$-algebra. If an object $\swe\in\huaL_0$ satisfies
\begin{eqnarray}
%[\swe,x]=&0&=[x,\swe],\\
&&{[}\iii_\swe,f]=0=[f,\iii_\swe],\qquad\forall f\in\huaL_1,\label{central-element}
\end{eqnarray}
then $\swe$ is called a {\bf central object} and $(\huaL,[\cdot,\cdot],\huaJ,\swe)$ is called a ${\bf central~Leibniz~2}$-${\bf algebra}$.
\end{defi}

In fact, since the linear functor $[\cdot,\cdot]$ preserves the source map $s$, for any $x\in\huaL_0$, we have
\begin{eqnarray}
&&[\swe,x]=[s(\iii_\swe),s(\iii_x)]=s[\iii_\swe,\iii_x]=0=s[\iii_x,\iii_\swe]=[s(\iii_x),s(\iii_\swe)]=[x,\swe],\label{central-s}
%{[}\swe,x{]}=[t(\Id_\swe),t(\Id_x)]=t[\Id_\swe,\Id_x]=&0&=t[\Id_x,\Id_\swe]=[t(\Id_x),t(\Id_\swe)]=[x,\swe].\label{central-t}
\end{eqnarray}

%\emptycomment{
\begin{ex}\label{cen-Lei-to-cen-Lei-2-alg}
Let $(\g,[\cdot,\cdot]_\g,e)$ be a right central Leibniz algebra, i.e., $[x,e]_\g=0=[e,x]_\g$ holds for any $x\in\g$.
If a skew-symmetric bilinear form $\omega\in\wedge^2\g^*$ satisfies the following invariant condition:
\begin{eqnarray*}
\omega([x,y]_\g,z)=\omega(y,[x,z]_\g+[z,x]_\g),\quad\forall x,y,z\in\g,
\end{eqnarray*}
then $\swg=(\g,\g\oplus\R,s,t,\iii)$ is a $2$-vector space and $(\swg,[\cdot,\cdot],\huaJ)$ is a Leibniz $2$-algebra showed in \cite{TS}, where for any $x,y,z\in\g$ and $(x,a),(y,b)\in\g\oplus\R$,% $[\cdot,\cdot]$ and $\huaJ$ are defined as follows:
\begin{eqnarray*}
&&s(x,a)=x,\qquad\qquad\qquad\,\,\,\,
t(x,a)=x,\qquad\qquad\qquad\quad\,
\iii_x=(x,0),\\
&&\,\,\,{[}x,y]=[x,y]_\g,\qquad
{[}(x,a),(y,b)]=([x,y]_\g,0),\qquad
\huaJ_{x\otimes y\otimes z}=([[x,y]_\g,z]_\g, \omega([x,y]_\g,z)).%\omega actually is a 3-cocycle
\end{eqnarray*}
It is easy to see that $e\in\g$ is a central object of this Leibniz $2$-algebra.
\end{ex}
%}

\begin{ex}\label{cen-ext-Lei-2-alg}
Let $(\huaL,[\cdot,\cdot],\huaJ)$ be a Leibniz $2$-algebra.
Consider the $2$-vector space $\K\oplus\huaL$, where $\K=(\K,\K,\Id_\K,\Id_\K,\Id_\K)$ is a $2$-vector space.
Define $[\cdot,\cdot]_\oplus:(\K\oplus\huaL)\otimes(\K\oplus\huaL)\to(\K\oplus\huaL)$ as follows:
\begin{eqnarray*}
&&[(a,x),(b,y)]_\oplus=(0,[x,y]),\qquad
{[}(a,f),(b,g){]}_\oplus=(0,[f,g]),
\end{eqnarray*}
where $(a,x),(b,y)\in\K\oplus\huaL_0$ and $(a,f),(b,g)\in\K\oplus\huaL_1$.
Define a linear natural isomorphism $$\frkJ_{(a,x)\otimes(b,y)\otimes(c,z)}:[[(a,x),(b,y)]_\oplus,(c,z)]_\oplus\to[[(a,x),(c,z)]_\oplus,(b,y)]_\oplus+[(a,x),[(b,y),(c,z)]_\oplus]_\oplus$$
by $\frkJ_{(a,x)\otimes(b,y)\otimes(c,z)}=(0,\huaJ_{x\otimes y\otimes z}),$
where $(a,x),(b,y),(c,z)\in\K\oplus\huaL_0$.
Then $(\K\oplus\huaL,[\cdot,\cdot]_\oplus,\frkJ)$ is a Leibniz $2$-algebra
and $(1,0)$ is a central object of $\K\oplus\huaL$.
\end{ex}

The following theorem shows that
using central Leibniz $2$-algebras, one can obtain solutions of the Zamolodchikov Tetrahedron equation.
\begin{thm}\label{cen-Lei-to-sol}
Let $(\huaL,[\cdot,\cdot],\huaJ,\swe)$ be a central Leibniz $2$-algebra.
Define $B:\huaL\otimes\huaL\to\huaL\otimes\huaL$ by
\begin{eqnarray*}
B(x\otimes y)&=&y\otimes x+\swe\otimes[x,y],\qquad\quad\,\,\,\forall x,y\in\huaL_0,\\
B(f\otimes g)&=&g\otimes f+\iii_\swe\otimes[f,g],\qquad\,\,\,\,\,\forall f,g\in\huaL_1,
\end{eqnarray*}
and define $Y:(B\otimes\Id)(\Id\otimes B)(B\otimes\Id)\Rightarrow(\Id\otimes B)(B\otimes\Id)(\Id\otimes B)$ by
$$Y_{x\otimes y\otimes z}=\iii_{z\otimes y\otimes x+\swe\otimes[y,z]\otimes x+\swe\otimes y\otimes[x,z]+z\otimes \swe\otimes[x,y]}+\iii_\swe\otimes\iii_\swe\otimes\huaJ_{x\otimes y\otimes z},\quad\forall x,y,z\in\huaL_0.$$
Then $(B,Y)$ is a solution of the Zamolodchikov Tetrahedron equation.
\end{thm}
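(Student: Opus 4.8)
The plan is to verify directly that the pair $(B,Y)$ satisfies the Zamolodchikov Tetrahedron equation stated in the introduction, by checking two separate pieces of data: first that $B$ is a well-defined linear invertible functor and $Y$ a well-defined linear natural isomorphism between the two composite functors, and then that the two vertical composites of natural transformations appearing on each side of the ZTE agree as morphisms $B_{123}B_{145}B_{246}B_{356}\Rightarrow B_{356}B_{246}B_{145}B_{123}$ on $\huaL^{\otimes 6}$. I would begin with the functoriality of $B$: since $[\cdot,\cdot]:\huaL\otimes\huaL\to\huaL$ is a linear functor it preserves $s,t,\iii$, so $B(f\otimes g)=g\otimes f+\iii_\swe\otimes[f,g]$ is compatible with source and target maps and sends identities to identities; invertibility of $B$ follows from solving for the ``inverse'' formula using that $[\swe,y]=0$ (so $B(y\otimes x)=x\otimes y+\swe\otimes[y,x]$ and one can cancel the central correction term since $[\swe,\cdot]=0$ forces a triangular structure). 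Then I would check that $Y_{x\otimes y\otimes z}$ as written does have source $(B\otimes\Id)(\Id\otimes B)(B\otimes\Id)(x\otimes y\otimes z)$ and target $(\Id\otimes B)(B\otimes\Id)(\Id\otimes B)(x\otimes y\otimes z)$: this is a bookkeeping computation expanding both triple composites using the object formula for $B$ and the centrality relations \eqref{central-element}, \eqref{central-s}, which kill all the ``central times central'' and ``double bracket'' cross terms except for the single $\iii_\swe\otimes\iii_\swe\otimes\huaJ_{x\otimes y\otimes z}$ summand; naturality of $Y$ in morphisms is automatic because $Y$ is built from $\iii$'s and from $\huaJ$, which is itself natural.

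The heart of the proof is the ZTE identity itself. I would evaluate both the left-hand and right-hand vertical composites on a general simple tensor $a_1\otimes\cdots\otimes a_6\in\huaL_0^{\otimes 6}$ (morphisms following by naturality and linearity), tracking where each factor is sent. Because $B$ on objects is ``swap plus a central correction'' and the central object kills all higher brackets (by centrality, $[\swe,x]=0$ and $[\iii_\swe,f]=0$), the composites $B_{123}B_{145}B_{246}B_{356}$ act on objects exactly as the corresponding permutation composite that gives the Yang–Baxter relation for $\overline B$, i.e.\ as a permutation of the six tensor slots together with lower-order terms each carrying exactly one or zero bracket. The natural transformation $Y$ inserts a single Jacobiator $\huaJ$ in the appropriate triple of slots at each of the four stages on each side. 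So after expansion, each side of the ZTE becomes a sum of identity $2$-morphisms (on the permuted-plus-correction objects, which match on the two sides precisely because $\overline B$ satisfies YBE — this is the content already recorded in the introduction) plus a sum of terms of the form $\iii_\swe\otimes\iii_\swe\otimes\iii_\swe\otimes(\text{Jacobiator applied in three chosen slots})$. The identity-part contributions cancel/match trivially, and what remains to be shown is that the four-fold composite of Jacobiators on the left equals the four-fold composite on the right — and this is exactly the Jacobiator identity \eqref{Jacob} for $(\huaL,[\cdot,\cdot],\huaJ)$, applied with the four arguments $x,y,z,w$ read off from the relevant slots, the $\iii_\swe$'s in the remaining three slots being passive spectators.

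Concretely, the key step is to set up a correspondence between the permutohedron-shaped pentagon-like diagram expressing the ZTE for $(B,Y)$ and the hexagonal Jacobiator diagram displayed in the definition of a Leibniz $2$-algebra. I expect the main obstacle to be purely combinatorial: matching the indexing conventions — which triple of the six tensor slots the $S$-operator $S_{ijk}$ acts on at each stage, and hence which three of the four ``variables'' feed into each occurrence of $\huaJ$ — so that the left and right composites of $Y$'s land on the same chain of Jacobiators that appears in \eqref{Jacob}. Once that dictionary is fixed (I would organize it by noting that the object-level action of $B$ realizes the standard braid moves on the decategorified space $\overline{\huaL_0}$, and tracking the unique slot that accumulates the bracket corrections), the verification reduces to a term-by-term comparison, and the Jacobiator identity closes the argument. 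A useful simplification throughout is that every term containing two or more nested brackets on $\swe$, or a bracket $[\iii_\swe,-]$, vanishes by \eqref{central-element}–\eqref{central-s}, so only the ``first-order'' corrections survive and the computation stays manageable.
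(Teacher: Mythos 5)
Your overall strategy coincides with the paper's: show $B$ is an invertible linear functor (with inverse built from the opposite-sign bracket correction, using centrality), check that $Y$ has the right source and target and is natural, and then reduce the Zamolodchikov Tetrahedron equation to the Jacobiator identity \eqref{Jacob} by observing that centrality kills every term except the ones of the form $\swe\otimes\swe\otimes\swe\otimes(-)$, so that only a single chain of Jacobiators survives on each side. That is exactly how the paper argues, with the combinatorial bookkeeping carried out via braid diagrams on four strands.

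There is, however, one concrete setup error you should fix. The equation being verified in this theorem is the \emph{categorified} ZTE of the introduction: an equality of two vertical composites of four whiskered copies of $Y$, as natural transformations between functors $\huaL^{\otimes 4}\to\huaL^{\otimes 4}$, evaluated on four object variables $x,y,z,w$ --- and it is precisely these four variables that become the four arguments of \eqref{Jacob}. It is \emph{not} the vertex-form identity \eqref{ZT-matrix-eq} for operators $S_{ijk}$ on $\huaL^{\otimes 6}$; that form is the decategorified shadow. Your framing in terms of $B_{123}B_{145}B_{246}B_{356}$ acting on $a_1\otimes\cdots\otimes a_6\in\huaL_0^{\otimes 6}$ sets up the wrong equation, even though your later description of the surviving terms (three $\iii_\swe$ factors plus one Jacobiator slot) implicitly reverts to the correct four-slot picture. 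Relatedly, your claim that the identity parts of the two sides match ``because $\overline B$ satisfies YBE'' is circular as stated: in the paper that fact is a \emph{consequence} of $Y$ existing; what one actually does is compute both six-fold functor composites on $x\otimes y\otimes z\otimes w$ directly and observe that all summands agree except those in the final slot, which are then matched by the Jacobiator hexagon. With the domain corrected to $\huaL^{\otimes 4}$ and that step replaced by the direct computation, your argument is the paper's proof.
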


\begin{proof}
Firstly, we show that $B$ is a linear invertible functor.
%Actually, $B$ defined as above is invertible.
For any $f\otimes g\in\huaL_1\otimes\huaL_1$, since
the linear functor $[\cdot,\cdot]$ preserves the source map $s$ and the target map $t$, we have
\begin{eqnarray*}
\big(B\circ(s\otimes s)\big)(f\otimes g)
&=&B(s(f)\otimes s(g))
=s(g)\otimes s(f)+\swe\otimes[s(f),s(g)]\\
&=&s(g)\otimes s(f)+s(\iii_\swe)\otimes s[f,g]
=(s\otimes s)(g\otimes f+\iii_\swe\otimes[f,g])\\
&=&\big((s\otimes s)\circ B\big)(f\otimes g),\\
\big(B\circ(t\otimes t)\big)(f\otimes g)
&=&B(t(f)\otimes t(g))
=t(g)\otimes t(f)+\swe\otimes[t(f),t(g)]\\
&=&t(g)\otimes t(f)+t(\iii_\swe)\otimes t[f,g]
=(t\otimes t)(g\otimes f+\iii_\swe\otimes[f,g])\\
&=&\big((t\otimes t)\circ B\big)(f\otimes g),
\end{eqnarray*}
which implies that $B$ preserves the source and target maps.
For any $x\otimes y\in\huaL_0\otimes\huaL_0$, since
the linear functor $[\cdot,\cdot]$ preserves the identity map, we have
\begin{eqnarray*}
\big(B\circ(\iii\otimes\iii)\big)(x\otimes y)
&=&B(\iii_x\otimes \iii_y)
=\iii_y\otimes \iii_x+\iii_\swe\otimes[\iii_x,\iii_y]\\
&=&\iii_y\otimes \iii_x+\iii_\swe\otimes \iii_{[x,y]}
=(\iii\otimes\iii)(y\otimes x+\swe\otimes[x,y])\\
&=&\big((\iii\otimes\iii)\circ B\big)(x\otimes y),
\end{eqnarray*}
which implies that $B$ preserves the identity map.
Given $f\otimes g,f'\otimes g'\in\huaL_1\otimes\huaL_1$ satisfying $(t\otimes t)(f\otimes g)=(s\otimes s)(f'\otimes g')$, since
the linear functor $[\cdot,\cdot]$ preserves the composition, we have
\begin{eqnarray*}
B(f'\otimes g')\circ B(f\otimes g)
&=&(g'\otimes f'+\iii_\swe\otimes[f',g'])\circ(g\otimes f+\iii_\swe\otimes[f,g])\\
&=&(g'\otimes f')\circ(g\otimes f)+(\iii_\swe\otimes[f',g'])\circ(\iii_\swe\otimes[f,g])\\
&=&g'g\otimes f'f+\iii_\swe\otimes([f',g']\circ[f,g])\\
&=&g'g\otimes f'f+\iii_\swe\otimes[f'f,g'g]\\
&=&B(f'f\otimes g'g)
=B\big((f'\otimes g')\circ(f\otimes g)\big),
\end{eqnarray*}
which implies that $B$ preserves the composition. Then we obtain that $B$ is a linear functor.
Define $\widetilde{B}:\huaL\otimes\huaL\to\huaL\otimes\huaL$ as follows:
\begin{eqnarray*}
\widetilde{B}(x\otimes y)&=&y\otimes x-[y,x]\otimes\swe,\,\qquad\,\,\,\,\forall x,y\in\huaL_0,\\
\widetilde{B}(f\otimes g)&=&g\otimes f-[g,f]\otimes\iii_\swe,\qquad\forall f,g\in\huaL_1.
\end{eqnarray*}
Similarly, $\widetilde{B}$ is also a linear functor.
For any $x\otimes y\in\huaL_0$ and $f\otimes g\in\huaL_1$, we have
\begin{eqnarray*}
(\widetilde{B}\circ B)(x\otimes y)
&=&\widetilde{B}(y\otimes x+\swe\otimes[x,y])\\
&=&x\otimes y-[x,y]\otimes\swe+[x,y]\otimes\swe-[[x,y],\swe]\otimes\swe\\
&\overset{\eqref{central-s}}=&x\otimes y,\\
(\widetilde{B}\circ B)(f\otimes g)
&=&\widetilde{B}(g\otimes f+\iii_\swe\otimes[f,g])\\
&=&f\otimes g-[f,g]\otimes\iii_\swe+[f,g]\otimes\iii_\swe-[[f,g],\iii_\swe]\otimes\iii_\swe\\
&\overset{\eqref{central-element}}=&f\otimes g,\\
(B\circ\widetilde{B})(x\otimes y)
&=&B(y\otimes x-[y,x]\otimes\swe)\\
&=&x\otimes y+\swe\otimes[y,x]-\swe\otimes[y,x]-\swe\otimes[[y,x],\swe]\\
&\overset{\eqref{central-s}}=&x\otimes y,\\
(B\circ\widetilde{B})(f\otimes g)
&=&B(g\otimes f-[g,f]\otimes\iii_\swe)\\
&=&f\otimes g+\iii_\swe\otimes[g,f]-\iii_\swe\otimes[g,f]-\iii_\swe\otimes[[g,f],\iii_\swe]\\
&\overset{\eqref{central-element}}=&f\otimes g,
\end{eqnarray*}
which means that the linear functor $B$ is invertible.

Secondly, we show that $Y$ is a linear natural isomorphism.
For any $x,y,z\in\huaL_0$, we have
\begin{eqnarray*}
(B\otimes \Id)(\Id\otimes B)(B\otimes \Id)(x\otimes y\otimes z)
&=&z\otimes y\otimes x+\swe\otimes[y,z]\otimes x+\swe\otimes y\otimes[x,z]+z\otimes \swe\otimes[x,y]\\
&&+\swe\otimes\swe\otimes[[x,y],z],\\
(\Id\otimes B)(B\otimes \Id)(\Id\otimes B)(x\otimes y\otimes z)
&=&z\otimes y\otimes x+\swe\otimes[y,z]\otimes x+\swe\otimes y\otimes[x,z]+z\otimes \swe\otimes[x,y]\\
&&+\swe\otimes\swe\otimes[[x,z],y]+\swe\otimes\swe\otimes[x,[y,z]],
\end{eqnarray*}
which implies that
\begin{eqnarray*}
s(Y_{x\otimes y\otimes z})&=&(B\otimes \Id)(\Id\otimes B)(B\otimes \Id)(x\otimes y\otimes z),\\
t(Y_{x\otimes y\otimes z})&=&(\Id\otimes B)(B\otimes \Id)(B\otimes \Id)(x\otimes y\otimes z),
\end{eqnarray*}
that is, $Y$ is compatible with the source and target maps.
By direct calculation, for any $f\otimes g\otimes h:x\otimes y\otimes z\to x'\otimes y'\otimes z'$, we have the following commutative diagram:
\begin{displaymath}
\xymatrix@C=1.2ex@R=0.5ex{
  \txt{$(B\otimes \Id)(\Id\otimes B)(B\otimes \Id)
  (x\otimes y\otimes z)$}
  \ar[rr]^-{Y_{x\otimes y\otimes z}}
  \ar[dd]_-{(B\otimes \Id)(\Id\otimes B)(B\otimes \Id)
  (f\otimes g\otimes h)}
  &
  &\txt{$(\Id\otimes B)(B\otimes \Id)(\Id\otimes B)
  (x\otimes y\otimes z)$}
  \ar[dd]^-{(\Id\otimes B)(B\otimes \Id)(\Id\otimes B)
  (f\otimes g\otimes h)}\\
  &\rotatebox{165}{{\txt{\Huge $\circlearrowright$}}}&\\
  \txt{$(B\otimes \Id)(\Id\otimes B)(B\otimes \Id)
  (x'\otimes y'\otimes z')$}
  \ar[rr]_-{Y_{x'\otimes y'\otimes z'}}
  &
  &\txt{$(\Id\otimes B)(B\otimes \Id)(\Id\otimes B)
  (x'\otimes y'\otimes z')$}
}
\end{displaymath}
which implies that $Y$ is a linear natural isomorphism.

Finally, for any $x,y,z,w\in\huaL_0$, we have
\begin{eqnarray*}
&&(B\otimes\Id\otimes\Id)(\Id\otimes B\otimes\Id)(B\otimes\Id\otimes\Id)
(\Id\otimes\Id\otimes B)(\Id\otimes B\otimes\Id)(B\otimes\Id\otimes\Id)
(x\otimes y\otimes z\otimes w)\\
&=&w\otimes z\otimes y\otimes x+\swe\otimes[z,w]\otimes y\otimes x
+\swe\otimes z\otimes[y,w]\otimes x+w\otimes\swe\otimes[y,z]\otimes x
+\swe\otimes\swe\otimes[[y,z],w]\otimes x\\
&&+\swe\otimes z\otimes y\otimes[x,w]+\swe\otimes\swe\otimes[y,z]\otimes[x,w]
+w\otimes\swe\otimes y\otimes[x,z]+\swe\otimes\swe\otimes[y,w]\otimes[x,z]\\
&&+\swe\otimes\swe\otimes y\otimes[[x,z],w]+w\otimes z\otimes\swe\otimes[x,y]
+\swe\otimes[z,w]\otimes\swe\otimes[x,y]+\swe\otimes z\otimes\swe\otimes[[x,y],w]\\
&&+w\otimes\swe\otimes\swe\otimes[[x,y],z]+\swe\otimes\swe\otimes\swe\otimes[[[x,y],z],w],
\end{eqnarray*}
which can be illustrated as the  diagram:  \[
\xy
   (0,0)*{}="00";
   (5,0)*{}="10";
   (10,0)*{}="20";
   (15,0)*{}="30";
   (0,5)*{}="01";
   (5,5)*{}="11";
   (10,5)*{}="21";
   (15,5)*{}="31";
   (0,10)*{}="02";
   (5,10)*{}="12";
   (10,10)*{}="22";
   (15,10)*{}="32";
   (0,15)*{}="03";
   (5,15)*{}="13";
   (10,15)*{}="23";
   (15,15)*{}="33";
   (0,-5)*{}="0-1";
   (5,-5)*{}="1-1";
   (10,-5)*{}="2-1";
   (15,-5)*{}="3-1";
   (0,-10)*{}="0-2";
   (5,-10)*{}="1-2";
   (10,-10)*{}="2-2";
   (15,-10)*{}="3-2";
   (0,-15)*{}="0-3";
   (5,-15)*{}="1-3";
   (10,-15)*{}="2-3";
   (15,-15)*{}="3-3";
   (2,13)*{}="213";
   (3,12)*{}="312";
   (7,8)*{}="78";
   (8,7)*{}="87";
   (12,3)*{}="123";
   (13,2)*{}="132";
   (2,-2)*{}="c2-2";
   (3,-3)*{}="c3-3";
   (7,-7)*{}="7-7";
   (8,-8)*{}="8-8";
   (2,-12)*{}="2-12";
   (3,-13)*{}="3-13";
   "0-3";"2-1" **[red]@{-};
   "2-1";"20" **[red]@{-};
   "20";"31" **[red]@{-};
   "31";"33" **[red]@{-};
   "1-3";"3-13" **[green]@{-};
   "2-12";"0-2" **[green]@{-};
   "0-2";"0-1" **[green]@{-};
   "0-1";"10" **[green]@{-};
   "10";"11" **[green]@{-};
   "11";"22" **[green]@{-};
   "22";"23" **[green]@{-};
   "2-3";"2-2" **[blue]@{-};
   "2-2";"8-8" **[blue]@{-};
   "7-7";"c3-3" **[blue]@{-};
   "c2-2";"00" **[blue]@{-};
   "00";"02" **[blue]@{-};
   "02";"13" **[blue]@{-};
   "3-3";"30" **@{-};
   "30";"132" **@{-};
   "123";"87" **@{-};
   "78";"312" **@{-};
   "213";"03" **@{-};
   (0,17)*{\txt{\tiny $x$}}="x";
   (5,17)*{\txt{\tiny $y$}}="y";
   (10,17)*{\txt{\tiny $z$}}="z";
   (15,17)*{\txt{\tiny $w$}}="w";
   (0,-17)*{\txt{\tiny $\swe$}}="d1";
   (5,-17)*{\txt{\tiny $\swe$}}="d2";
   (10,-17)*{\txt{\tiny $\swe$}}="d3";
   (20,-17)*{\txt{\tiny $[[[x,y],z],w]$}}="d4";
\endxy
\]
Here in the diagram, only terms of the form ``$\swe\otimes\swe\otimes\swe\otimes-$'' are shown, since the omitted terms were found to have no effect in subsequent calculations.
Similarly, both sides of the Zamolodchikov Tetrahedron equation can be shown as following diagrams:
\begin{equation*}
  \xy 0;/r.13pc/:
    (-5,50)*+{
  \xy
   (0,0)*{}="00";
   (5,0)*{}="10";
   (10,0)*{}="20";
   (15,0)*{}="30";
   (0,5)*{}="01";
   (5,5)*{}="11";
   (10,5)*{}="21";
   (15,5)*{}="31";
   (0,10)*{}="02";
   (5,10)*{}="12";
   (10,10)*{}="22";
   (15,10)*{}="32";
   (0,15)*{}="03";
   (5,15)*{}="13";
   (10,15)*{}="23";
   (15,15)*{}="33";
   (0,-5)*{}="0-1";
   (5,-5)*{}="1-1";
   (10,-5)*{}="2-1";
   (15,-5)*{}="3-1";
   (0,-10)*{}="0-2";
   (5,-10)*{}="1-2";
   (10,-10)*{}="2-2";
   (15,-10)*{}="3-2";
   (0,-15)*{}="0-3";
   (5,-15)*{}="1-3";
   (10,-15)*{}="2-3";
   (15,-15)*{}="3-3";
   (2,13)*{}="213";
   (3,12)*{}="312";
   (7,8)*{}="78";
   (8,7)*{}="87";
   (12,3)*{}="123";
   (13,2)*{}="132";
   (2,-2)*{}="c2-2";
   (3,-3)*{}="c3-3";
   (7,-7)*{}="7-7";
   (8,-8)*{}="8-8";
   (2,-12)*{}="2-12";
   (3,-13)*{}="3-13";
   "0-3";"2-1" **[red]@{-};
   "2-1";"20" **[red]@{-};
   "20";"31" **[red]@{-};
   "31";"33" **[red]@{-};
   "1-3";"3-13" **[green]@{-};
   "2-12";"0-2" **[green]@{-};
   "0-2";"0-1" **[green]@{-};
   "0-1";"10" **[green]@{-};
   "10";"11" **[green]@{-};
   "11";"22" **[green]@{-};
   "22";"23" **[green]@{-};
   "2-3";"2-2" **[blue]@{-};
   "2-2";"8-8" **[blue]@{-};
   "7-7";"c3-3" **[blue]@{-};
   "c2-2";"00" **[blue]@{-};
   "00";"02" **[blue]@{-};
   "02";"13" **[blue]@{-};
   "3-3";"30" **@{-};
   "30";"132" **@{-};
   "123";"87" **@{-};
   "78";"312" **@{-};
   "213";"03" **@{-};
   %(0,17)*{\txt{\tiny $x$}}="x";
   %(5,17)*{\txt{\tiny $y$}}="y";
   %(10,17)*{\txt{\tiny $z$}}="z";
   %(15,17)*{\txt{\tiny $w$}}="w";
   %(0,-17)*{\txt{\tiny $\swe$}}="d1";
   %(5,-17)*{\txt{\tiny $\swe$}}="d2";
   %(10,-17)*{\txt{\tiny $\swe$}}="d3";
   %(23,-17)*{\txt{\tiny $[[[x,y],z],w]$}}="d4";
\endxy
    }="Z1";
    (-45,15)*+{
 \xy
   (0,0)*{}="00";
   (5,0)*{}="10";
   (10,0)*{}="20";
   (15,0)*{}="30";
   (0,5)*{}="01";
   (5,5)*{}="11";
   (10,5)*{}="21";
   (15,5)*{}="31";
   (0,10)*{}="02";
   (5,10)*{}="12";
   (10,10)*{}="22";
   (15,10)*{}="32";
   (0,15)*{}="03";
   (5,15)*{}="13";
   (10,15)*{}="23";
   (15,15)*{}="33";
   (0,-5)*{}="0-1";
   (5,-5)*{}="1-1";
   (10,-5)*{}="2-1";
   (15,-5)*{}="3-1";
   (0,-10)*{}="0-2";
   (5,-10)*{}="1-2";
   (10,-10)*{}="2-2";
   (15,-10)*{}="3-2";
   (0,-15)*{}="0-3";
   (5,-15)*{}="1-3";
   (10,-15)*{}="2-3";
   (15,-15)*{}="3-3";
   (2,13)*{}="213";
   (3,12)*{}="312";
   (7,8)*{}="78";
   (8,7)*{}="87";
   (12,3)*{}="123";
   (13,2)*{}="132";
   (7,-2)*{}="7-2";
   (8,-3)*{}="8-3";
   (2,-7)*{}="2-7";
   (3,-8)*{}="3-8";
   (7,-12)*{}="7-12";
   (8,-13)*{}="8-13";
   "0-3";"0-2" **[red]@{-};
   "0-2";"31" **[red]@{-};
   "31";"33" **[red]@{-};
   "1-3";"2-2" **[green]@{-};
   "2-2";"2-1" **[green]@{-};
   "2-1";"8-3" **[green]@{-};
   "7-2";"10" **[green]@{-};
   "10";"11" **[green]@{-};
   "11";"22" **[green]@{-};
   "22";"23" **[green]@{-};
   "2-3";"8-13" **[blue]@{-};
   "7-12";"3-8" **[blue]@{-};
   "2-7";"0-1" **[blue]@{-};
   "0-1";"02" **[blue]@{-};
   "02";"13" **[blue]@{-};
   "3-3";"30" **@{-};
   "30";"132" **@{-};
   "123";"87" **@{-};
   "78";"312" **@{-};
   "213";"03" **@{-};
   (0,17)*{\txt{\tiny $x$}}="x";
   (5,17)*{\txt{\tiny $y$}}="y";
   (10,17)*{\txt{\tiny $z$}}="z";
   (15,17)*{\txt{\tiny $w$}}="w";
   (0,-17)*{\txt{\tiny $\swe$}}="d1";
   (5,-17)*{\txt{\tiny $\swe$}}="d2";
   (10,-17)*{\txt{\tiny $\swe$}}="d3";
   (26,-17)*{\txt{\tiny $[[[x,y],z],w]$}}="d4";
\endxy
    }="Z2";
    (-85,-25)*+{
 \xy
   (0,0)*{}="00";
   (5,0)*{}="10";
   (10,0)*{}="20";
   (15,0)*{}="30";
   (0,5)*{}="01";
   (5,5)*{}="11";
   (10,5)*{}="21";
   (15,5)*{}="31";
   (0,10)*{}="02";
   (5,10)*{}="12";
   (10,10)*{}="22";
   (15,10)*{}="32";
   (0,15)*{}="03";
   (5,15)*{}="13";
   (10,15)*{}="23";
   (15,15)*{}="33";
   (0,-5)*{}="0-1";
   (5,-5)*{}="1-1";
   (10,-5)*{}="2-1";
   (15,-5)*{}="3-1";
   (0,-10)*{}="0-2";
   (5,-10)*{}="1-2";
   (10,-10)*{}="2-2";
   (15,-10)*{}="3-2";
   (0,-15)*{}="0-3";
   (5,-15)*{}="1-3";
   (10,-15)*{}="2-3";
   (15,-15)*{}="3-3";
   (2,13)*{}="213";
   (3,12)*{}="312";
   (12,8)*{}="128";
   (13,7)*{}="137";
   (7,3)*{}="73";
   (8,2)*{}="82";
   (12,-2)*{}="12-2";
   (13,-3)*{}="13-3";
   (2,-7)*{}="2-7";
   (3,-8)*{}="3-8";
   (7,-12)*{}="7-12";
   (8,-13)*{}="8-13";
   "0-3";"0-2" **[red]@{-};
   "0-2";"1-1" **[red]@{-};
   "1-1";"10" **[red]@{-};
   "10";"32" **[red]@{-};
   "32";"33" **[red]@{-};
   "1-3";"2-2" **[green]@{-};
   "2-2";"2-1" **[green]@{-};
   "2-1";"30" **[green]@{-};
   "30";"31" **[green]@{-};
   "31";"137" **[green]@{-};
   "128";"22" **[green]@{-};
   "22";"23" **[green]@{-};
   "2-3";"8-13" **[blue]@{-};
   "7-12";"3-8" **[blue]@{-};
   "2-7";"0-1" **[blue]@{-};
   "0-1";"02" **[blue]@{-};
   "02";"13" **[blue]@{-};
   "3-3";"3-1" **@{-};
   "3-1";"13-3" **@{-};
   "12-2";"82" **@{-};
   "73";"11" **@{-};
   "11";"12" **@{-};
   "12";"312" **@{-};
   "213";"03" **@{-};
   (0,17)*{\txt{\tiny $x$}}="x";
   (5,17)*{\txt{\tiny $y$}}="y";
   (10,17)*{\txt{\tiny $z$}}="z";
   (15,17)*{\txt{\tiny $w$}}="w";
   (0,-17)*{\txt{\tiny $\swe$}}="d1";
   (5,-17)*{\txt{\tiny $\swe$}}="d2";
   (10,-17)*{\txt{\tiny $\swe$}}="d3";
   (42,-18)*{\txt{\tiny $[[[x,y],w],z]+[[x,y],[z,w]]$}}="d4";
\endxy
    }="Z3";
    (-35,-70)*+{
 \xy
   (0,0)*{}="00";
   (5,0)*{}="10";
   (10,0)*{}="20";
   (15,0)*{}="30";
   (0,5)*{}="01";
   (5,5)*{}="11";
   (10,5)*{}="21";
   (15,5)*{}="31";
   (0,10)*{}="02";
   (5,10)*{}="12";
   (10,10)*{}="22";
   (15,10)*{}="32";
   (0,15)*{}="03";
   (5,15)*{}="13";
   (10,15)*{}="23";
   (15,15)*{}="33";
   (0,-5)*{}="0-1";
   (5,-5)*{}="1-1";
   (10,-5)*{}="2-1";
   (15,-5)*{}="3-1";
   (0,-10)*{}="0-2";
   (5,-10)*{}="1-2";
   (10,-10)*{}="2-2";
   (15,-10)*{}="3-2";
   (0,-15)*{}="0-3";
   (5,-15)*{}="1-3";
   (10,-15)*{}="2-3";
   (15,-15)*{}="3-3";
   (12,13)*{}="1213";
   (13,12)*{}="1312";
   (7,8)*{}="78";
   (8,7)*{}="87";
   (2,3)*{}="c23";
   (3,2)*{}="c32";
   (7,-2)*{}="7-2";
   (8,-3)*{}="8-3";
   (12,-7)*{}="12-7";
   (13,-8)*{}="13-8";
   (7,-12)*{}="7-12";
   (8,-13)*{}="8-13";
   "0-3";"00" **[red]@{-};
   "00";"33" **[red]@{-};
   "1-3";"3-1" **[green]@{-};
   "3-1";"32" **[green]@{-};
   "32";"1312" **[green]@{-};
   "1213";"23" **[green]@{-};
   "2-3";"8-13" **[blue]@{-};
   "7-12";"1-2" **[blue]@{-};
   "1-2";"1-1" **[blue]@{-};
   "1-1";"20" **[blue]@{-};
   "20";"21" **[blue]@{-};
   "21";"87" **[blue]@{-};
   "78";"12" **[blue]@{-};
   "12";"13" **[blue]@{-};
   "3-3";"3-2" **@{-};
   "3-2";"13-8" **@{-};
   "12-7";"8-3" **@{-};
   "7-2";"c32" **@{-};
   "c23";"01" **@{-};
   "01";"03" **@{-};
   (0,17)*{\txt{\tiny $x$}}="x";
   (5,17)*{\txt{\tiny $y$}}="y";
   (10,17)*{\txt{\tiny $z$}}="z";
   (15,17)*{\txt{\tiny $w$}}="w";
   (0,-17)*{\txt{\tiny $\swe$}}="d1";
   (5,-17)*{\txt{\tiny $\swe$}}="d2";
   (10,-17)*{\txt{\tiny $\swe$}}="d3";
   (42,-17)*{\txt{\tiny $[[[x,w],y],z]+[[x,[y,w]],z]$}}="d4";
   (24,-22)*{\txt{\tiny $+[[x,y],[z,w]]$}}="d4";
\endxy
    }="Z4";
(-5,-115)*+{
 \xy
   (0,0)*{}="00";
   (5,0)*{}="10";
   (10,0)*{}="20";
   (15,0)*{}="30";
   (0,5)*{}="01";
   (5,5)*{}="11";
   (10,5)*{}="21";
   (15,5)*{}="31";
   (0,10)*{}="02";
   (5,10)*{}="12";
   (10,10)*{}="22";
   (15,10)*{}="32";
   (0,15)*{}="03";
   (5,15)*{}="13";
   (10,15)*{}="23";
   (15,15)*{}="33";
   (0,-5)*{}="0-1";
   (5,-5)*{}="1-1";
   (10,-5)*{}="2-1";
   (15,-5)*{}="3-1";
   (0,-10)*{}="0-2";
   (5,-10)*{}="1-2";
   (10,-10)*{}="2-2";
   (15,-10)*{}="3-2";
   (0,-15)*{}="0-3";
   (5,-15)*{}="1-3";
   (10,-15)*{}="2-3";
   (15,-15)*{}="3-3";
   (12,13)*{}="1213";
   (13,12)*{}="1312";
   (7,8)*{}="78";
   (8,7)*{}="87";
   (2,3)*{}="c23";
   (3,2)*{}="c32";
   (12,-2)*{}="12-2";
   (13,-3)*{}="13-3";
   (7,-7)*{}="7-7";
   (8,-8)*{}="8-8";
   (12,-12)*{}="12-12";
   (13,-13)*{}="13-13";
   "0-3";"00" **[red]@{-};
   "00";"33" **[red]@{-};
   "1-3";"1-2" **[green]@{-};
   "1-2";"30" **[green]@{-};
   "30";"32" **[green]@{-};
   "32";"1312" **[green]@{-};
   "1213";"23" **[green]@{-};
   "2-3";"3-2" **[blue]@{-};
   "3-2";"3-1" **[blue]@{-};
   "3-1";"13-3" **[blue]@{-};
   "12-2";"20" **[blue]@{-};
   "20";"21" **[blue]@{-};
   "21";"87" **[blue]@{-};
   "78";"12" **[blue]@{-};
   "12";"13" **[blue]@{-};
   "3-3";"13-13" **@{-};
   "12-12";"8-8" **@{-};
   "7-7";"1-1" **@{-};
   "1-1";"10" **@{-};
   "10";"c32" **@{-};
   "c23";"01" **@{-};
   "01";"03" **@{-};
   %(0,17)*{\txt{\tiny $x$}}="x";
   %(5,17)*{\txt{\tiny $y$}}="y";
   %(10,17)*{\txt{\tiny $z$}}="z";
   %(15,17)*{\txt{\tiny $w$}}="w";
   %(0,-17)*{\txt{\tiny $\swe$}}="d1";
   %(5,-17)*{\txt{\tiny $\swe$}}="d2";
   %(10,-17)*{\txt{\tiny $\swe$}}="d3";
   %(23,-17)*{\txt{\tiny $[[[x,w],z],y]+[[x,w],[y,z]]$}}="d4";
   %(23,-23)*{\txt{\tiny $+[[x,z],[y,w]]+[x,[[y,w],z]]$}}="d4";
   %(23,-28)*{\txt{\tiny $+[[x,[z,w]],y]+[x,[y,[z,w]]]$}}="d4";
\endxy
    }="Z5";
   (35,50)*+{
\xy
   (0,0)*{}="00";
   (5,0)*{}="10";
   (10,0)*{}="20";
   (15,0)*{}="30";
   (0,5)*{}="01";
   (5,5)*{}="11";
   (10,5)*{}="21";
   (15,5)*{}="31";
   (0,10)*{}="02";
   (5,10)*{}="12";
   (10,10)*{}="22";
   (15,10)*{}="32";
   (0,15)*{}="03";
   (5,15)*{}="13";
   (10,15)*{}="23";
   (15,15)*{}="33";
   (0,-5)*{}="0-1";
   (5,-5)*{}="1-1";
   (10,-5)*{}="2-1";
   (15,-5)*{}="3-1";
   (0,-10)*{}="0-2";
   (5,-10)*{}="1-2";
   (10,-10)*{}="2-2";
   (15,-10)*{}="3-2";
   (0,-15)*{}="0-3";
   (5,-15)*{}="1-3";
   (10,-15)*{}="2-3";
   (15,-15)*{}="3-3";
   (2,13)*{}="213";
   (3,12)*{}="312";
   (7,8)*{}="78";
   (8,7)*{}="87";
   (2,3)*{}="c23";
   (3,2)*{}="c32";
   (12,-2)*{}="12-2";
   (13,-3)*{}="13-3";
   (7,-7)*{}="7-7";
   (8,-8)*{}="8-8";
   (2,-12)*{}="2-12";
   (3,-13)*{}="3-13";
   "0-3";"30" **[red]@{-};
   "30";"33" **[red]@{-};
   "1-3";"3-13" **[green]@{-};
   "2-12";"0-2" **[green]@{-};
   "0-2";"00" **[green]@{-};
   "00";"22" **[green]@{-};
   "22";"23" **[green]@{-};
   "2-3";"2-2" **[blue]@{-};
   "2-2";"8-8" **[blue]@{-};
   "7-7";"1-1" **[blue]@{-};
   "1-1";"10" **[blue]@{-};
   "10";"c32" **[blue]@{-};
   "c23";"01" **[blue]@{-};
   "01";"02" **[blue]@{-};
   "02";"13" **[blue]@{-};
   "3-3";"3-1" **@{-};
   "3-1";"13-3" **@{-};
   "12-2";"20" **@{-};
   "20";"21" **@{-};
   "21";"87" **@{-};
   "78";"312" **@{-};
   "213";"03" **@{-};
   (0,17)*{\txt{\tiny $x$}}="x";
   (5,17)*{\txt{\tiny $y$}}="y";
   (10,17)*{\txt{\tiny $z$}}="z";
   (15,17)*{\txt{\tiny $w$}}="w";
   (0,-17)*{\txt{\tiny $\swe$}}="d1";
   (5,-17)*{\txt{\tiny $\swe$}}="d2";
   (10,-17)*{\txt{\tiny $\swe$}}="d3";
   (26,-17)*{\txt{\tiny $[[[x,y],z],w]$}}="d4";
\endxy
   }="Y1";
   (100,15)*+{
 \xy
   (0,0)*{}="00";
   (5,0)*{}="10";
   (10,0)*{}="20";
   (15,0)*{}="30";
   (0,5)*{}="01";
   (5,5)*{}="11";
   (10,5)*{}="21";
   (15,5)*{}="31";
   (0,10)*{}="02";
   (5,10)*{}="12";
   (10,10)*{}="22";
   (15,10)*{}="32";
   (0,15)*{}="03";
   (5,15)*{}="13";
   (10,15)*{}="23";
   (15,15)*{}="33";
   (0,-5)*{}="0-1";
   (5,-5)*{}="1-1";
   (10,-5)*{}="2-1";
   (15,-5)*{}="3-1";
   (0,-10)*{}="0-2";
   (5,-10)*{}="1-2";
   (10,-10)*{}="2-2";
   (15,-10)*{}="3-2";
   (0,-15)*{}="0-3";
   (5,-15)*{}="1-3";
   (10,-15)*{}="2-3";
   (15,-15)*{}="3-3";
   (7,13)*{}="713";
   (8,12)*{}="812";
   (2,8)*{}="28";
   (3,7)*{}="37";
   (7,3)*{}="73";
   (8,2)*{}="82";
   (12,-2)*{}="12-2";
   (13,-3)*{}="13-3";
   (7,-7)*{}="7-7";
   (8,-8)*{}="8-8";
   (2,-12)*{}="2-12";
   (3,-13)*{}="3-13";
   "0-3";"30" **[red]@{-};
   "30";"33" **[red]@{-};
   "1-3";"3-13" **[green]@{-};
   "2-12";"0-2" **[green]@{-};
   "0-2";"01" **[green]@{-};
   "01";"23" **[green]@{-};
   "2-3";"2-2" **[blue]@{-};
   "2-2";"8-8" **[blue]@{-};
   "7-7";"1-1" **[blue]@{-};
   "1-1";"10" **[blue]@{-};
   "10";"21" **[blue]@{-};
   "21";"22" **[blue]@{-};
   "22";"812" **[blue]@{-};
   "713";"13" **[blue]@{-};
   "3-3";"3-1" **@{-};
   "3-1";"13-3" **@{-};
   "12-2";"82" **@{-};
   "73";"37" **@{-};
   "28";"02" **@{-};
   "02";"03" **@{-};
   (0,17)*{\txt{\tiny $x$}}="x";
   (5,17)*{\txt{\tiny $y$}}="y";
   (10,17)*{\txt{\tiny $z$}}="z";
   (15,17)*{\txt{\tiny $w$}}="w";
   (0,-17)*{\txt{\tiny $\swe$}}="d1";
   (5,-17)*{\txt{\tiny $\swe$}}="d2";
   (10,-17)*{\txt{\tiny $\swe$}}="d3";
   (43,-17)*{\txt{\tiny $[[[x,z],y],w]+[[x,[y,z]],w]$}}="d4";
\endxy
    }="Y2";
   (135,-28)*+{
 \xy
   (0,0)*{}="00";
   (5,0)*{}="10";
   (10,0)*{}="20";
   (15,0)*{}="30";
   (0,5)*{}="01";
   (5,5)*{}="11";
   (10,5)*{}="21";
   (15,5)*{}="31";
   (0,10)*{}="02";
   (5,10)*{}="12";
   (10,10)*{}="22";
   (15,10)*{}="32";
   (0,15)*{}="03";
   (5,15)*{}="13";
   (10,15)*{}="23";
   (15,15)*{}="33";
   (0,-5)*{}="0-1";
   (5,-5)*{}="1-1";
   (10,-5)*{}="2-1";
   (15,-5)*{}="3-1";
   (0,-10)*{}="0-2";
   (5,-10)*{}="1-2";
   (10,-10)*{}="2-2";
   (15,-10)*{}="3-2";
   (0,-15)*{}="0-3";
   (5,-15)*{}="1-3";
   (10,-15)*{}="2-3";
   (15,-15)*{}="3-3";
   (7,13)*{}="713";
   (8,12)*{}="812";
   (2,8)*{}="28";
   (3,7)*{}="37";
   (12,3)*{}="123";
   (13,2)*{}="132";
   (7,-2)*{}="7-2";
   (8,-3)*{}="8-3";
   (13,-8)*{}="13-8";
   (12,-7)*{}="12-7";
   (2,-12)*{}="2-12";
   (3,-13)*{}="3-13";
   "0-3";"1-2" **[red]@{-};
   "1-2";"1-1" **[red]@{-};
   "1-1";"31" **[red]@{-};
   "31";"33" **[red]@{-};
   "1-3";"3-13" **[green]@{-};
   "2-12";"0-2" **[green]@{-};
   "0-2";"01" **[green]@{-};
   "01";"23" **[green]@{-};
   "2-3";"2-2" **[blue]@{-};
   "2-2";"3-1" **[blue]@{-};
   "3-1";"30" **[blue]@{-};
   "30";"132" **[blue]@{-};
   "123";"21" **[blue]@{-};
   "21";"22" **[blue]@{-};
   "22";"812" **[blue]@{-};
   "713";"13" **[blue]@{-};
   "3-3";"3-2" **@{-};
   "3-2";"13-8" **@{-};
   "12-7";"8-3" **@{-};
   "7-2";"10" **@{-};
   "10";"11" **@{-};
   "11";"37" **@{-};
   "28";"02" **@{-};
   "02";"03" **@{-};
   (0,17)*{\txt{\tiny $x$}}="x";
   (5,17)*{\txt{\tiny $y$}}="y";
   (10,17)*{\txt{\tiny $z$}}="z";
   (15,17)*{\txt{\tiny $w$}}="w";
   (0,-17)*{\txt{\tiny $\swe$}}="d1";
   (5,-17)*{\txt{\tiny $\swe$}}="d2";
   (10,-17)*{\txt{\tiny $\swe$}}="d3";
   (42,-17)*{\txt{\tiny $[[[x,z],w],y]+[[x,z],[y,w]]$}}="d4";
   (41,-22)*{\txt{\tiny $+[[x,w],[y,z]]+[x,[[y,z],w]]$}}="d4";
\endxy
    }="Y3";
   (100,-75)*+{
 \xy
   (0,0)*{}="00";
   (5,0)*{}="10";
   (10,0)*{}="20";
   (15,0)*{}="30";
   (0,5)*{}="01";
   (5,5)*{}="11";
   (10,5)*{}="21";
   (15,5)*{}="31";
   (0,10)*{}="02";
   (5,10)*{}="12";
   (10,10)*{}="22";
   (15,10)*{}="32";
   (0,15)*{}="03";
   (5,15)*{}="13";
   (10,15)*{}="23";
   (15,15)*{}="33";
   (0,-5)*{}="0-1";
   (5,-5)*{}="1-1";
   (10,-5)*{}="2-1";
   (15,-5)*{}="3-1";
   (0,-10)*{}="0-2";
   (5,-10)*{}="1-2";
   (10,-10)*{}="2-2";
   (15,-10)*{}="3-2";
   (0,-15)*{}="0-3";
   (5,-15)*{}="1-3";
   (10,-15)*{}="2-3";
   (15,-15)*{}="3-3";
   (7,13)*{}="713";
   (8,12)*{}="812";
   (12,8)*{}="128";
   (13,7)*{}="137";
   (7,3)*{}="73";
   (8,2)*{}="82";
   (2,-2)*{}="c2-2";
   (3,-3)*{}="c3-3";
   (7,-7)*{}="7-7";
   (8,-8)*{}="8-8";
   (12,-12)*{}="12-12";
   (13,-13)*{}="13-13";
   "0-3";"0-1" **[red]@{-};
   "0-1";"32" **[red]@{-};
   "32";"33" **[red]@{-};
   "1-3";"1-2" **[green]@{-};
   "1-2";"2-1" **[green]@{-};
   "2-1";"20" **[green]@{-};
   "20";"82" **[green]@{-};
   "73";"11" **[green]@{-};
   "11";"12" **[green]@{-};
   "12";"23" **[green]@{-};
   "2-3";"3-2" **[blue]@{-};
   "3-2";"31" **[blue]@{-};
   "31";"137" **[blue]@{-};
   "128";"812" **[blue]@{-};
   "713";"13" **[blue]@{-};
   "3-3";"13-13" **@{-};
   "12-12";"8-8" **@{-};
   "7-7";"c3-3" **@{-};
   "c2-2";"00" **@{-};
   "00";"03" **@{-};
   (0,17)*{\txt{\tiny $x$}}="x";
   (5,17)*{\txt{\tiny $y$}}="y";
   (10,17)*{\txt{\tiny $z$}}="z";
   (15,17)*{\txt{\tiny $w$}}="w";
   (0,-17)*{\txt{\tiny $\swe$}}="d1";
   (5,-17)*{\txt{\tiny $\swe$}}="d2";
   (10,-17)*{\txt{\tiny $\swe$}}="d3";
   (42,-17)*{\txt{\tiny $[[[x,w],z],y]+[[x,[z,w]],y]$}}="d4";
   (41,-22)*{\txt{\tiny $+[[x,z],[y,w]]+[[x,w],[y,z]]$}}="d4";
   (25,-27)*{\txt{\tiny $+[x,[[y,z],w]]$}}="d4";
\endxy
    }="Y4";
   (50,-120.5)*+{
 \xy
   (0,0)*{}="00";
   (5,0)*{}="10";
   (10,0)*{}="20";
   (15,0)*{}="30";
   (0,5)*{}="01";
   (5,5)*{}="11";
   (10,5)*{}="21";
   (15,5)*{}="31";
   (0,10)*{}="02";
   (5,10)*{}="12";
   (10,10)*{}="22";
   (15,10)*{}="32";
   (0,15)*{}="03";
   (5,15)*{}="13";
   (10,15)*{}="23";
   (15,15)*{}="33";
   (0,-5)*{}="0-1";
   (5,-5)*{}="1-1";
   (10,-5)*{}="2-1";
   (15,-5)*{}="3-1";
   (0,-10)*{}="0-2";
   (5,-10)*{}="1-2";
   (10,-10)*{}="2-2";
   (15,-10)*{}="3-2";
   (0,-15)*{}="0-3";
   (5,-15)*{}="1-3";
   (10,-15)*{}="2-3";
   (15,-15)*{}="3-3";
   (12,13)*{}="1213";
   (13,12)*{}="1312";
   (7,8)*{}="78";
   (8,7)*{}="87";
   (12,3)*{}="123";
   (13,2)*{}="132";
   (2,-2)*{}="c2-2";
   (3,-3)*{}="c3-3";
   (7,-7)*{}="7-7";
   (8,-8)*{}="8-8";
   (12,-12)*{}="12-12";
   (13,-13)*{}="13-13";
   "0-3";"0-1" **[red]@{-};
   "0-1";"10" **[red]@{-};
   "10";"11" **[red]@{-};
   "11";"33" **[red]@{-};
   "1-3";"1-2" **[green]@{-};
   "1-2";"2-1" **[green]@{-};
   "2-1";"20" **[green]@{-};
   "20";"31" **[green]@{-};
   "31";"32" **[green]@{-};
   "32";"1312" **[green]@{-};
   "1213";"23" **[green]@{-};
   "2-3";"3-2" **[blue]@{-};
   "3-2";"30" **[blue]@{-};
   "30";"132" **[blue]@{-};
   "123";"87" **[blue]@{-};
   "78";"12" **[blue]@{-};
   "12";"13" **[blue]@{-};
   "3-3";"13-13" **@{-};
   "12-12";"8-8" **@{-};
   "7-7";"c3-3" **@{-};
   "c2-2";"00" **@{-};
   "00";"03" **@{-};
   (0,17)*{\txt{\tiny $x$}}="x";
   (5,17)*{\txt{\tiny $y$}}="y";
   (10,17)*{\txt{\tiny $z$}}="z";
   (15,17)*{\txt{\tiny $w$}}="w";
   (0,-17)*{\txt{\tiny $\swe$}}="d1";
   (5,-17)*{\txt{\tiny $\swe$}}="d2";
   (10,-17)*{\txt{\tiny $\swe$}}="d3";
   (42,-17)*{\txt{\tiny $[[[x,w],z],y]+[[x,[z,w]],y]$}}="d4";
   (41,-22)*{\txt{\tiny $+[[x,z],[y,w]]+[[x,w],[y,z]]$}}="d4";
   (41,-27)*{\txt{\tiny $+[x,[[y,w],z]]+[x,[y,[z,w]]]$}}="d4";
\endxy
    }="Y5";
            (7,50)*{}="X1";(10,50)*{}="X2";{\ar@{=} "X1";"X2"};
            (7,-115)*{}="X1";(10,-115)*{}="X2";{\ar@{=} "X1";"X2"};
            (-15,30)*{}="X1";(-35,15)*{}="X2";{\ar@{->} "X1";"X2"};
            (30,30)*{}="X1";(50,15)*{}="X2";{\ar@{->} "X1";"X2"};
            (-70,-10)*{}="X1";(-90,-25)*{}="X2";{\ar@{->} "X1";"X2"};
            (74,-10)*{}="X1";(92,-25)*{}="X2";{\ar@{->} "X1";"X2"};
            (-100,-50)*{}="X1";(-80,-70)*{}="X2";{\ar@{->} "X1";"X2"};
            (105,-50)*{}="X1";(85,-70)*{}="X2";{\ar@{->} "X1";"X2"};
            (-50,-100)*{}="X1";(-20,-120)*{}="X2";{\ar@{->} "X1";"X2"};
            (65,-100)*{}="X1";(40,-120)*{}="X2";{\ar@{->} "X1";"X2"};
  \endxy
\end{equation*}
which implies that $(B,Y)$ is a solution of the Zamolodchikov Tetrahedron equation as a consequence of the Jacobiator identity \eqref{Jacob}.
\end{proof}

\begin{ex}\label{cen-Lei-2-alg-ex1}
Consider the central Leibniz $2$-algebra $(\swg,[\cdot,\cdot],\huaJ,e)$ introduced in Example \ref{cen-Lei-to-cen-Lei-2-alg}, where $\swg=(\g,\g\oplus\R,s,t,\iii)$ is a Leibniz $2$-algebra and $(\g,[\cdot,\cdot]_\g,e)$ is a central Leibniz algebra.
By Theorem \ref{cen-Lei-to-sol}, we obtain a solution $(B,Y)$ of the Zamolodchikov Tetrahedron equation on $\swg$, where the linear invertible functor
$$B:\swg\otimes\swg\to\swg\otimes\swg$$
and the linear natural isomorphism
$$Y:(B\otimes \Id)(\Id\otimes B)(B\otimes \Id)\Rightarrow(\Id\otimes B)(B\otimes \Id)(\Id\otimes B)$$
are defined respectively as follows:
\begin{eqnarray}
B(x\otimes y)&=&y\otimes x+e\otimes[x,y]_\g,\label{cen-lei-to-cen-lei-2-alg-sol-B0}\\
B((x,a)\otimes(y,b))&=&(y,b)\otimes(x,a)+(e,0)\otimes([x,y]_\g,0),\label{cen-lei-to-cen-lei-2-alg-sol-B1}\\
Y_{x\otimes y\otimes z}&=&\iii_{z\otimes y\otimes x+e\otimes[y,z]_\g\otimes x+e\otimes y\otimes[x,z]_\g+z\otimes e\otimes[x,y]_\g}\\
&&+(e,0)\otimes(e,0)\otimes([x,y]_\g,z]_\g, \omega([x,y]_\g,z)),\label{cen-lei-to-cen-lei-2-alg-sol-Y}\nonumber
\end{eqnarray}
where $x,y,z\in\g$ and $(x,a),(y,b)\in\g\oplus\R$.
\end{ex}

\begin{ex}\label{cen-Lei-2-alg-ex2}
Consider the central Leibniz $2$-algebra $(\K\oplus\huaL,[\cdot,\cdot]_\oplus,\frkJ,(1,0))$ introduced in Example \ref{cen-ext-Lei-2-alg}.
Then by Theorem \ref{cen-Lei-to-sol}, we obtain a solution $(B,Y)$ of the Zamolodchikov Tetrahedron equation on $\K\oplus\huaL$, where the linear invertible functor $$B:(\K\oplus\huaL)\otimes(\K\oplus\huaL)\to(\K\oplus\huaL)\otimes(\K\oplus\huaL)$$
and the linear natural isomorphism
$$Y:(B\otimes \Id)(\Id\otimes B)(B\otimes \Id)\Rightarrow(\Id\otimes B)(B\otimes \Id)(\Id\otimes B)$$
are defined respectively as follows:
\begin{eqnarray}
B\big((a,x)\otimes(b,y)
&=&(b,y)\otimes(a,x)+(1,0)\otimes(0,[x,y]),\label{B-on-K+L_0}\\
B\big((a,f)\otimes(b,g)\big)
&=&(b,g)\otimes(a,f)+(1,0)\otimes(0,[f,g]),\label{B-on-K+L_1}\\
Y_{(a,x)\otimes(b,y)\otimes(c,z)}
&=&\iii_{(c,z)\otimes(b,y)\otimes(a,x)+(1,0)\otimes(0,[y,z])\otimes(a,x)+(1,0)\otimes(b,y)\otimes(0,[x,z])+(c,z)\otimes (1,0)\otimes(0,[x,y])}\label{Y-on-K+L_0}\\
&&+(1,0)\otimes(1,0)\otimes(0,\huaJ_{x\otimes y\otimes z}),\nonumber
\end{eqnarray}
where $(a,x),(b,y),(c,z)\in\K\oplus\huaL_0$ and $(a,f),(b,g)\in\K\oplus\huaL_1$.
\end{ex}

\begin{rmk}
In \cite{BC}, Baez and Crans used a Lie $2$-algebra to give a solution of the Zamolodchikov Tetrahedron equation of the same formula as in the above example. Theorem \ref{cen-Lei-to-sol} demonstrates that, in the process of constructing the solution, the antisymmetry of $[\cdot,\cdot]$ does not play a role, and this construction can be enhanced to central Leibniz $2$-algebras.
\end{rmk}

At the end of this section, we show that one can obtain a central Leibniz algebra through decategorification of a central Leibniz 2-algebra, and the corresponding solution of the Yang-Baxter equation is exactly the decategorification of the solution of the Zamolodchikov Tetrahedron equation given in Theorem \ref{cen-Lei-to-sol}.

\begin{pro}\label{decate-Leibniz-2-alg}
Let $\huaL=(\huaL_0,\huaL_1,s,t,\iii)$ be a $2$-vector space,  and $(\huaL,[\cdot,\cdot],\huaJ)$ a Leibniz $2$-algebra.
Define a linear operator $[\cdot,\cdot]_{\overline{\huaL_0}}:\overline{\huaL_0}\otimes\overline{\huaL_0}\to\overline{\huaL_0}$ as follows:
$$[\overline{x},\overline{y}]_{\overline{\huaL_0}}=\overline{[x,y]},\quad\forall \overline{x},\overline{y}\in\overline{\huaL_0},$$
where the vector space $\overline{\huaL_0}$ is the decategorification of the $2$-vector space $\huaL$.
Then $(\overline{\huaL_0},[\cdot,\cdot]_{\overline{\huaL_0}})$ is a Leibniz algebra, which is the decategorification of the Leibniz $2$-algebra $(\huaL,[\cdot,\cdot],\huaJ)$.
In particular, if $\swe$ is a central object of $(\huaL,[\cdot,\cdot],\huaJ)$, then $\overline{\swe}\in\overline{\huaL_0}$ is a central element of $(\overline{\huaL_0},[\cdot,\cdot]_{\overline{\huaL_0}})$.

Moreover, we have the following commutative diagram:
\begin{center}
\begin{displaymath}
\xymatrix@C=3ex@R=0.5ex{
  \txt{\rm central Leibniz $2$-algebra\\ $(\huaL,[\cdot,\cdot],\huaJ,\swe)$}
  \ar@{-->}[rr]^-{{\rm Theorem}~\ref{cen-Lei-to-sol}}
  \ar@{-->}[dd]_-{{\rm Decategorification}}
  &
  &\txt{\rm solution of the\\Zamolodchikov Tetrahedron equation\\$(B,Y)$}
  \ar@{-->}[dd]^-{\eqref{deficat-of-ZTE}}_-{{\rm Decategorification}}\\
  &\rotatebox{165}{{\txt{\Huge $\circlearrowright$}}}&\\
  \txt{\rm central Leibniz algebra\\ $(\overline{\huaL_0},[\cdot,\cdot]_{\overline{\huaL_0}},\overline{\swe})$}
  \ar[rr]_-{\eqref{cen-Leibniz-sol}}
  &
  &\txt{\rm solution of the\\Yang-Baxter equation\\$\huaB=\overline{B}$}
}
\end{displaymath}
\end{center}
 \end{pro}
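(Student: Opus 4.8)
The plan is to push the categorical structure of the Leibniz $2$-algebra down to $\overline{\huaL_0}$ through the decategorification map $x\mapsto\overline{x}$. Recall from the introduction that $\overline{\huaL_0}=\huaL_0/{\cong}$ is a vector space, with $x\mapsto\overline{x}$ linear (in particular additive), that $\overline{\huaL_0\otimes\huaL_0}$ is identified with $\overline{\huaL_0}\otimes\overline{\huaL_0}$, and that two objects of $\huaL$ are isomorphic exactly when they are joined by a morphism.

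First I would verify that the formula $[\overline{x},\overline{y}]_{\overline{\huaL_0}}=\overline{[x,y]}$ is well defined: if $f\colon x\to x'$ and $g\colon y\to y'$ are morphisms of $\huaL$, then applying the linear functor $[\cdot,\cdot]$ yields a morphism $[f,g]\colon[x,y]\to[x',y']$, so $\overline{[x,y]}=\overline{[x',y']}$; hence $\overline{[x,y]}$ depends only on $\overline{x}$ and $\overline{y}$, and bilinearity of $[\cdot,\cdot]_{\overline{\huaL_0}}$ follows from the bilinearity of $[\cdot,\cdot]$ on objects together with additivity of $x\mapsto\overline{x}$. For the Leibniz identity, the Jacobiator component $\huaJ_{x\otimes y\otimes z}$ is a morphism of $\huaL$ from $[[x,y],z]$ to $[[x,z],y]+[x,[y,z]]$, so
$$\overline{[[x,y],z]}=\overline{[[x,z],y]+[x,[y,z]]}=\overline{[[x,z],y]}+\overline{[x,[y,z]]},$$
which is exactly $[[\overline{x},\overline{y}]_{\overline{\huaL_0}},\overline{z}]_{\overline{\huaL_0}}=[[\overline{x},\overline{z}]_{\overline{\huaL_0}},\overline{y}]_{\overline{\huaL_0}}+[\overline{x},[\overline{y},\overline{z}]_{\overline{\huaL_0}}]_{\overline{\huaL_0}}$. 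Since $[\cdot,\cdot]_{\overline{\huaL_0}}$ is by construction the operation induced on isomorphism classes by the bracket functor $[\cdot,\cdot]$, the resulting Leibniz algebra $(\overline{\huaL_0},[\cdot,\cdot]_{\overline{\huaL_0}})$ is the decategorification of $(\huaL,[\cdot,\cdot],\huaJ)$. For the centrality claim, if $\swe$ is a central object then \eqref{central-element} and the derived identity \eqref{central-s} give $[\swe,x]=0=[x,\swe]$ in $\huaL_0$ for all $x$, whence $[\overline{\swe},\overline{x}]_{\overline{\huaL_0}}=\overline{0}=0=[\overline{x},\overline{\swe}]_{\overline{\huaL_0}}$, so $\overline{\swe}$ is a central element.

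It then remains to check that the square commutes. Let $(B,Y)$ be the solution of the Zamolodchikov Tetrahedron equation attached to $(\huaL,[\cdot,\cdot],\huaJ,\swe)$ by Theorem~\ref{cen-Lei-to-sol}, and let $\overline{B}$ be its decategorification given by \eqref{deficat-of-ZTE} under the identification $\overline{\huaL_0\otimes\huaL_0}\cong\overline{\huaL_0}\otimes\overline{\huaL_0}$. Then, for $x,y\in\huaL_0$,
$$\overline{B}(\overline{x}\otimes\overline{y})=\overline{B(x\otimes y)}=\overline{y\otimes x+\swe\otimes[x,y]}=\overline{y}\otimes\overline{x}+\overline{\swe}\otimes[\overline{x},\overline{y}]_{\overline{\huaL_0}},$$
which is precisely the value of the Yang-Baxter solution $\huaB$ produced from the central Leibniz algebra $(\overline{\huaL_0},[\cdot,\cdot]_{\overline{\huaL_0}},\overline{\swe})$ via \eqref{cen-Leibniz-sol}. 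Therefore $\overline{B}=\huaB$ and the diagram commutes; this is consistent with the fact that $\overline{B}$ satisfies the Yang-Baxter equation, which follows both from $Y$ being a linear natural isomorphism and from Lebed's result \cite{Lebed1} applied to $(\overline{\huaL_0},[\cdot,\cdot]_{\overline{\huaL_0}},\overline{\swe})$.

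None of the steps is genuinely hard; the argument is bookkeeping that transports the facts ``$[\cdot,\cdot]$ is a functor'', ``$\huaJ$ is a natural isomorphism'', and ``$\swe$ is central'' through the decategorification map. The only points requiring a little care are that $\overline{\huaL_0}$ is a vector space with the operations as stated and that decategorification is compatible with the tensor product of $2$-vector spaces, i.e. the identification $\overline{\huaL_0\otimes\huaL_0}\cong\overline{\huaL_0}\otimes\overline{\huaL_0}$; both are recorded in the introduction and are simply invoked here.
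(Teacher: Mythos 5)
Your proposal is correct and follows essentially the same route as the paper's proof: well-definedness via functoriality of the bracket (you vary both arguments at once where the paper varies one at a time using $[f,\iii_y]$, an inessential difference), the Leibniz identity from the Jacobiator being a natural isomorphism, centrality from the identity $[\swe,x]=0=[x,\swe]$ on objects, and the same one-line computation identifying $\overline{B}$ with the Yang--Baxter solution $\huaB$ of \eqref{cen-Leibniz-sol}. No gaps.
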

\begin{proof}
First, we show that the bracket $[\cdot,\cdot]_{\overline{\huaL_0}}$ is well-defined.
It is well known that, in the $2$-vector space $\huaL$, every morphism $f\in\huaL_1$ is an isomorphism, whose inverse is $\iii_{t(f)-s(f)}-f$.
Therefore, $\overline{x}=\overline{x'}$ in $\overline{\huaL_0}$ means that there is a morphism $f:x\to x'$. Then we obtain a morphism $[f,\iii_y]:[x,y]\to[x',y]$, which implies that $\overline{[x,y]}=\overline{[x',y]}$. Therefore, we have
$$[\overline{x},\overline{y}]_{\overline{\huaL_0}}
=\overline{[x,y]}
=\overline{[x',y]}
=[\overline{x'},\overline{y}]_{\overline{\huaL_0}},$$
that is, the bracket $[\cdot,\cdot]_{\overline{\huaL_0}}$ is well-defined.
The Leibniz identity holds since the Jacobiator $\huaJ$ is a linear natural isomorphism. Thus $(\overline{\huaL_0},[\cdot,\cdot]_{\overline{\huaL_0}})$ is a Leibniz algebra.

If $\swe$ is a central object of the Leibniz $2$-algebra $(\huaL,[\cdot,\cdot],\huaJ)$, then by \eqref{central-s} we have
$$[\overline{\swe},\overline{x}]_{\overline{\huaL_0}}=\overline{[\swe,x]}
=\overline{0}
=\overline{[x,\swe]}=[\overline{x},\overline{\swe}]_{\overline{\huaL_0}},
\quad\forall\overline{x}\in\overline{\huaL_0},$$
which implies that $\overline{\swe}$ is a central object of the Leibniz algebra $(\overline{\huaL_0},[\cdot,\cdot]_{\overline{\huaL_0}})$.

Moreover, by identifying the vector space $\overline{\huaL_0}\otimes\overline{\huaL_0}$ with $\overline{\huaL_0\otimes \huaL_0}$, we obtain that the solution $\huaB$ of the Yang-Baxter equation on the vector space $\overline{\huaL_0}$ induced by \eqref{cen-Leibniz-sol} is the same as the decategorification of the solution ${B}$ of the Zamolodchikov Tetrahedron equation induced by \eqref{deficat-of-ZTE}:
  \begin{eqnarray*}
    \huaB(\bar{x}\otimes \bar{y})&\overset{\eqref{cen-Leibniz-sol}}=&\bar{y}\otimes \bar{x}+\bar{\swe}\otimes [\bar{x}, \bar{y}]_{\bar{\huaL_0}}\\
    &=&\overline{y\otimes x+\swe\otimes[x,y]}\\
    &=&\overline{B(x\otimes y)}\\
    &\overset{\eqref{deficat-of-ZTE}}=&\overline{B}(\bar{x}\otimes \bar{y}),
  \end{eqnarray*}
where $\bar{x},\bar{y}\in\overline{\huaL_0}$.
\end{proof}

\section{Linear 2-racks and the Zamolodchikov Tetrahedron equation}\label{sec:l2rack}

In this section, we introduce the notion of linear 2-racks, and show that a linear 2-rack naturally gives rise to a solution of the Zamolodchikov Tetrahedron equation.

%\subsection{Solutions of the Zamolodchikov Tetrahedron equation}
\begin{defi}
A (right) {\bf linear $2$-rack} structure on a $2$-vector space $V=(V_0,V_1,s,t,\iii)$ consists of
\begin{itemize}
    \item[\rm(i)] a linear functor $\Delta:V\to V\otimes V$ satisfies:
  \begin{eqnarray}
  (\Delta\otimes\Id)\circ\Delta&=&(\Id\otimes\Delta)\circ\Delta,\label{coasso}\\
  \tau\circ\Delta&=&\Delta,\label{cocomm}
  \end{eqnarray}
  where $\tau:V\otimes V\to V\otimes V$ is a linear functor defined by $\tau(x\otimes y)=y\otimes x$ for both objects and morphisms.
  Denote by $\Delta(x)=x_{(1)}\otimes x_{(2)}$ for either objects or morphisms.
  \item[\rm(ii)] a linear functor $\varepsilon:V\to \K$ satisfies:
  \begin{eqnarray}\label{counit}
  (\varepsilon\otimes\Id)\circ\Delta~=~\Id~=~(\Id\otimes\varepsilon)\circ\Delta;
  \end{eqnarray}
  \item[\rm(iii)] a linear functor $\lhd:V\otimes V\to V$ satisfies:
  \begin{eqnarray}
  \Delta\circ\lhd
  &=&(\lhd\otimes\lhd)\circ(\Id\otimes\tau\otimes\Id)\circ(\Delta\otimes\Delta),\label{rack-coproduct}\\
  \varepsilon\circ\lhd&=&\varepsilon\otimes\varepsilon,\label{rack-counit}
  \end{eqnarray}
  and there exists a linear functor $\widetilde{\lhd}:V\otimes V\to V$ such that
  \begin{eqnarray}
  \widetilde{\lhd}\circ(\lhd\otimes\Id)\circ(\Id\otimes\Delta)~=
  &\Id\otimes\varepsilon&=~\lhd\circ(\widetilde{\lhd}\otimes\Id)\circ(\Id\otimes\Delta),\label{rack-inv}
  \end{eqnarray}
  \item[\rm(iv)] a linear natural isomorphism $\frkR_{x\otimes y\otimes z}:(x\lhd y)\lhd z\to(x\lhd z_{(1)})\lhd(y\lhd z_{(z)})$ satisfies the following identity:
  \begin{eqnarray}\label{lin-distri}
  &&(\frkR_{x\otimes z_{(1)}\otimes w_{(1)}}\lhd\frkR_{x\otimes z_{(2)}\otimes w_{(2)}})\circ\frkR_{(x\lhd{z_{(1)}})\otimes(y\lhd z_{(2)})\otimes w}\circ(\frkR_{x\otimes y\otimes z}\lhd\iii_w)\\
  &=&\frkR_{(x\lhd w_{(1)(1)})\otimes(y\lhd w_{(1)(2)})\otimes(z\lhd w_{(2)})}\circ\big({\frkR_{x\otimes y\otimes w_{(1)}}\lhd(\iii_z\lhd\iii_{w_{(2)}})}\big)\circ{\frkR_{(x\lhd y)\otimes z\otimes w}},\nonumber
  \end{eqnarray}
\end{itemize}
where $x,y,z,w\in V_0$ and the above identity can be showed as the following commutative diagram:
\begin{equation*}
\xymatrix@R=1.5pc@C=0.05pc{
 &\txt{$\bigg((x\lhd y)\lhd z\bigg)\lhd w$}
 \ar[dl]_-{\frkR_{x\otimes y\otimes z}\lhd\iii_w}
 \ar[dr]^-{\frkR_{(x\lhd y)\otimes z\otimes w}}
 &\\
 \txt{$\bigg((x\lhd z_{(1)})\lhd(y\lhd z_{(2)})\bigg)\lhd w$}
 \ar[d]_-{\frkR_{(x\lhd{z_{(1)}})\otimes(y\lhd z_{(2)})\otimes w}}
 &
 &\txt{$\bigg((x\lhd y)\lhd w_{(1)}\bigg)\lhd(z\lhd w_{(2)})$}
 \ar[d]^-{\frkR_{x\otimes y\otimes w_{(1)}}\lhd(\iii_z\lhd\iii_{w_{(2)}})}\\
 \txt{$\bigg((x\lhd z_{(1)})\lhd w_{(1)}\bigg)\lhd\bigg((y\lhd z_{(2)})\lhd w_{(2)}\bigg)$}
 \ar[d]_-{\frkR_{x\otimes z_{(1)}\otimes w_{(1)}}\lhd\frkR_{x\otimes z_{(2)}\otimes w_{(2)}}}
 &
 &\txt{$\bigg((x\lhd w_{(1)(1)})\lhd(y\lhd w_{(1)(2)})\bigg)\lhd(z\lhd w_{(2)})$}
 \ar[d]^-{\frkR_{(x\lhd w_{(1)(1)})\otimes(y\lhd w_{(1)(2)})\otimes(z\lhd w_{(2)})}}\\
 \txt{$\bigg((x\lhd w_{(1)(1)})\lhd(z_{(1)}\lhd w_{(1)(2)})\bigg)$\\
 $\lhd\bigg((y\lhd w_{(2)(1)})\lhd(z_{(2)}\lhd w_{(2)(2)})\bigg)$}
 \ar@{=}[rr]_{\eqref{cocomm}~{\rm and}~\eqref{rack-coproduct}}
 &
 &\txt{$\bigg((x\lhd w_{(1)(1)})\lhd{(z\lhd w_{(2)})}_{(1)}\bigg)$\\
 $\lhd\bigg((y\lhd w_{(1)(2)})\lhd{(z\lhd w_{(2)})}_{(2)}\bigg)$}
 }
 \end{equation*}
\end{defi}

We will denote a linear $2$-rack by $(V,\Delta,\varepsilon,\lhd,\frkR)$.

\begin{rmk}
  (i) and (ii) mean that $(V,\Delta,\varepsilon)$ is the strict categorification of a cocommutative coassociative coalgebra.
\end{rmk}
\emptycomment{
\begin{ex}
Let $(X,\lhd,\frkR)$ be a $2$-rack, where $X=(X_0,X_1,s,t,\iii,\circ)$ is a small category.
Consider vector spaces $\K[X_0]$ and $\K[X_1]$ which have the elements of $X_0$ and $X_1$ as bases respectively.
Extend $s$ and $t$ to $\K[X_1]$ linearly and extend $\iii$ to $\K[X_1]$ linearly.
Then we obtain a $2$-vector space $\K[X]=(\K[X_0],\K[X_1],s,t,\iii)$.
Define two linear functors $\Delta:\K[X]\to\K[X]\otimes\K[X]$ and $\varepsilon:\K[X]\to\K$ as follows:
\begin{eqnarray*}
&&\Delta(x)=x\otimes x,\quad\quad\varepsilon(x)=1,
\end{eqnarray*}
where $x$ is either an object or a morphism in $\K[X]$.
Extend the functor $\lhd$ and natural isomorphism $\frkR$ linearly to $\K[X]$,
we obtain a linear $2$-rack $(\K[X],\Delta,\varepsilon,\lhd,\frkR)$.
\end{ex}
}
\emptycomment{
\begin{ex}
Consider the central Leibniz $2$-algebra $(\swg,[\cdot,\cdot],\huaJ,e)$ introduced in Example \ref{cen-Lei-to-cen-Lei-2-alg}, where the $2$-vector space $\swg=(\g,\g\oplus\R,s,t,i)$ and $(\g,[\cdot,\cdot]_\g,e)$ is a central Leibniz algebra.
Using the element $e\in\g$, we view
$$\g\cong\langle e\rangle\oplus\g/\langle e\rangle,\quad
\g\oplus\R\cong\langle(e,0)\rangle\oplus\Big(\g\oplus\R/\langle(e,0)\rangle\Big),$$
where $\langle e\rangle$ and $\langle(e,0)\rangle$ are vector spaces generated by $e$ and $(e,0)$ respectively.
Then there are two $2$-vector spaces $\swe=(\langle e\rangle,\langle(e,0)\rangle,s|_{\langle(e,0)\rangle},
t|_{\langle(e,0)\rangle},i|_{\langle e\rangle})$, $\swg/\swe=(\g/\langle e\rangle,\g\oplus\R/\langle(e,0)\rangle,\bar{s},\bar{t},\bar{i})$
and $\swg$ can be viewed as $\swe\oplus\swg/\swe$.
Define a linear functor $\Delta:\swg\to\swg\otimes\swg$ as follows:
\begin{eqnarray*}
\Delta(e)&=&e\otimes e,\\
\Delta(x)&=&x\otimes e+e\otimes x,\quad\forall x\in\g/\langle e\rangle,\\
\Delta\big((e,0)\big)&=&(e,0)\otimes(e,0),\\
\Delta\big((x,a)\big)&=&(x,a)\otimes(e,0)+(e,0)\otimes(x,a),\quad\forall (x,a)\in\g\oplus\R/\langle(e,0)\rangle.
\end{eqnarray*}
It is obvious that $\Delta$ satisfies \eqref{coasso} and \eqref{cocomm}.
Define a linear functor $\varepsilon:\swg\to\K$ as follows:
\begin{eqnarray*}
&&\quad\,\,\,\,\varepsilon(e)=1,\qquad\,\,\,
\varepsilon(x)=0,\quad\forall x\in\g/\langle e\rangle,\\
&&\varepsilon\big((e,0)\big)=1,\quad
\varepsilon\big((x,a)\big)=0,\quad\forall (x,a)\in\g\oplus\R/\langle(e,0)\rangle,
\end{eqnarray*}
which satisfies \eqref{counit}. For any $x,y\in\g/\langle e\rangle$ and $(x,a),(y,b)\in\g\oplus\R/\langle(e,0)\rangle$, define $\lhd:\swg\otimes\swg\to\swg$ as follows:
\begin{eqnarray*}
&&e\lhd e=e,\quad e\lhd x=0,\quad x\lhd e=x,\quad x\lhd y=[x,y],\\
&&(e,0)\lhd (e,0)=(e,0),\quad\, (e,0)\lhd(x,a)=0,\\
&&(x,a)\lhd (e,0)=(x,a),\quad\,(x,a)\lhd(y,b)=[(x,a),(y,b)],
\end{eqnarray*}
\end{ex}
}

\begin{ex}\label{ex-cen-lei-K+L-to lin-2-rack}
Consider the central Leibniz $2$-algebra $(\K\oplus\huaL,[\cdot,\cdot]_\oplus,\frkJ,(1,0))$ introduced in Example \ref{cen-ext-Lei-2-alg},
where $\huaL=(\huaL_0,\huaL_1,s,t,\iii)$ is a $2$-vector space and $(\huaL,[\cdot,\cdot],\huaJ)$ is a Leibniz $2$-algebra.
Define two linear functors $\Delta:\K\oplus\huaL\to(\K\oplus\huaL)\otimes(\K\oplus\huaL)$ and $\varepsilon:\K\oplus\huaL\to\K$ as follows:
\begin{eqnarray*}
\Delta\Big((a,x)\Big)&=&(a,x)\otimes(1,0)+(1,0)\otimes(0,x),\\
\Delta\Big((a,f)\Big)&=&(a,f)\otimes(1,0)+(1,0)\otimes(0,f),\\
\varepsilon\Big((a,x)\Big)&=&a,\\
\varepsilon\Big((a,f)\Big)&=&a.
\end{eqnarray*}
where $(a,x)\in\K\oplus\huaL_0$ and $(a,f)\in\K\oplus\huaL_1$. Define a linear functor $\lhd:(\K\oplus\huaL)\otimes(\K\oplus\huaL)\to\K\oplus\huaL$ as follows:
\begin{eqnarray*}
(a,x)\lhd(b,y)&=&(ab,bx+[x,y]),\quad\forall(a,x),(b,y)\in\K\oplus\huaL_0,\\
(a,f)\lhd(b,g)&=&(ab,bf+[f,g]),\quad\forall(a,f),(b,g)\in\K\oplus\huaL_1,
\end{eqnarray*}
and define a linear functor $\widetilde{\lhd}:(\K\oplus\huaL)\otimes(\K\oplus\huaL)\to\K\oplus\huaL$ by:
\begin{eqnarray*}
(a,x)~\widetilde{\lhd}~(b,y)&=&(ab,bx-[x,y]),\quad\forall(a,x),(b,y)\in\K\oplus\huaL_0,\\
(a,f)~\widetilde{\lhd}~(b,g)&=&(ab,bf-[f,g]),\quad\forall(a,f),(b,g)\in\K\oplus\huaL_1.
\end{eqnarray*}
Finally, define a linear natural isomorphism by
$$\frkR_{(a,x)\otimes(b,y)\otimes(c,z)}=(\iii_{abc},\iii_{bcx+b[x,z]+c[x,y]}+\huaJ_{x\otimes y\otimes z}).$$
Then   $(\K\oplus\huaL,\Delta,\varepsilon,\lhd,\frkR)$ is a linear $2$-rack.

%It is obvious that $\Delta$ and $\varepsilon$ satisfy \eqref{coasso}-\eqref{counit}. Then $\lhd$ satisfies \eqref{rack-coproduct} and \eqref{rack-counit}. which satisfies \eqref{rack-inv}. which satisfies \eqref{lin-distri}.
\end{ex}

\begin{thm}\label{lin-2-rack-to-sol}
\emptycomment{
Let $V$ be a $2$-vector space,
$\Delta:V\to V\otimes V$ and $\varepsilon:V\to\K$ two linear functors satisfying $\eqref{coasso}$, $\eqref{cocomm}$ and $\eqref{counit}$,
$\lhd:V\otimes V\to V$ a linear functor satisfies $\eqref{rack-coproduct}$ and $\eqref{rack-counit}$,
$\frkR_{x\otimes y\otimes z}:(x\lhd y)\lhd z\to(x\lhd z_{(1)})\lhd(y\lhd z_{(z)})$ a linear natural isomorphism.
Define a linear functor $B:V\otimes V\to V\otimes V$ by $B=(\Id\otimes\lhd)\circ(\tau\otimes\Id)\circ(\Id\otimes\Delta)$,
that is, $B(x\otimes y)=y_{(1)}\otimes(x\lhd y_{(2)})$ for either objects or morphisms.
Define a linear natural isomorphism $Y:(B\otimes\Id)(\Id\otimes B)(B\otimes\Id)
\Rightarrow(\Id\otimes B)(B\otimes\Id)(\Id\otimes B)$ by $Y_{x\otimes y\otimes z}=\Id_{z_{(1)}}\otimes\Id_{y_{(1)}\lhd z_{(2)}}\otimes\frkR_{x\otimes y_{(2)}\otimes z_{(3)}}$.
Then $(B,Y)$ is a solution of Zamolodchikov Tetrahedron Equation
if and only if $(V,\Delta,\varepsilon,\lhd,\frkR)$ is a linear $2$-rack.}
Let $(V,\Delta,\varepsilon,\lhd,\frkR)$ be a linear $2$-rack.
Define a linear functor $B:V\otimes V\to V\otimes V$ and a linear natural isomorphism $Y:(B\otimes\Id)(\Id\otimes B)(B\otimes\Id)
\Rightarrow(\Id\otimes B)(B\otimes\Id)(\Id\otimes B)$ respectively as follows:
\begin{eqnarray*}
B&=&(\Id\otimes\lhd)\circ(\tau\otimes\Id)\circ(\Id\otimes\Delta),\\
Y_{x\otimes y\otimes z}&=&\iii_{z_{(1)}}\otimes\iii_{y_{(1)}\lhd z_{(2)}}\otimes\frkR_{x\otimes y_{(2)}\otimes z_{(3)}},\qquad\forall x,y,z\in V_0.
\end{eqnarray*}
Then $(B,Y)$ is a solution of the Zamolodchikov Tetrahedron equation.
\end{thm}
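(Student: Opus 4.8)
The plan is to follow the three-step structure of the proof of Theorem~\ref{cen-Lei-to-sol}: show that $B$ is a linear invertible functor, that $Y$ is a well-defined linear natural isomorphism with the right source and target, and that the Zamolodchikov Tetrahedron equation for $(B,Y)$ collapses onto the coherence identity \eqref{lin-distri} for $\frkR$. For the first step, since $B$ is assembled from the linear functors $\Delta$, $\lhd$ and $\tau$ by composition and tensor product, it is automatically a linear functor; the content is invertibility. I would exhibit the explicit candidate inverse $\widetilde{B}:=(\widetilde{\lhd}\otimes\Id)\circ(\Id\otimes\tau)\circ(\Delta\otimes\Id)$, that is, $\widetilde{B}(u\otimes v)=(v~\widetilde{\lhd}~u_{(2)})\otimes u_{(1)}$ on objects and morphisms, where $\widetilde{\lhd}$ is the functor supplied in item~(iii) of the definition. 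One computes $B(\widetilde{B}(u\otimes v))=u_{(1)}\otimes\big((v~\widetilde{\lhd}~u_{(3)})\lhd u_{(2)}\big)$; after transposing $u_{(2)}$ and $u_{(3)}$ via cocommutativity \eqref{cocomm} the second tensor factor becomes $v\,\varepsilon(-)$ by the inversion law \eqref{rack-inv}, and then the counit law \eqref{counit} gives $u\otimes v$. The identity $\widetilde{B}\circ B=\Id$ is checked the same way, and $\widetilde{B}$ is a linear functor for the same reason as $B$, so $B$ is invertible.

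For the second step I would compute, using coassociativity \eqref{coasso}, the coproduct compatibility \eqref{rack-coproduct} and cocommutativity \eqref{cocomm},
\[
(B\otimes\Id)(\Id\otimes B)(B\otimes\Id)(x\otimes y\otimes z)=z_{(1)}\otimes(y_{(1)}\lhd z_{(2)})\otimes\big((x\lhd y_{(2)})\lhd z_{(3)}\big),
\]
\[
(\Id\otimes B)(B\otimes\Id)(\Id\otimes B)(x\otimes y\otimes z)=z_{(1)}\otimes(y_{(1)}\lhd z_{(3)})\otimes\big((x\lhd z_{(2)})\lhd(y_{(2)}\lhd z_{(4)})\big).
\]
Since $\frkR_{x\otimes y_{(2)}\otimes z_{(3)}}$ has source $(x\lhd y_{(2)})\lhd z_{(3)}$ and target $(x\lhd(z_{(3)})_{(1)})\lhd(y_{(2)}\lhd(z_{(3)})_{(2)})$, the first display is exactly $s(Y_{x\otimes y\otimes z})$, and the second equals $t(Y_{x\otimes y\otimes z})$ after one further transposition of Sweedler legs of $z$ licensed by \eqref{cocomm}; thus $Y$ is compatible with the source and target maps. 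Naturality of $Y$, i.e.\ commutativity of the square attached to a morphism $f\otimes g\otimes h$, then follows from functoriality of $\Delta$, $\varepsilon$, $\lhd$ together with naturality of $\frkR$, so $Y$ is a linear natural isomorphism.

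For the third step I would first evaluate the ``fully braided'' endofunctor $(B\otimes\Id\otimes\Id)(\Id\otimes B\otimes\Id)(B\otimes\Id\otimes\Id)(\Id\otimes\Id\otimes B)(\Id\otimes B\otimes\Id)(B\otimes\Id\otimes\Id)$ on $x\otimes y\otimes z\otimes w$, obtaining, via \eqref{coasso}, \eqref{cocomm}, \eqref{rack-coproduct}, \eqref{counit} and \eqref{rack-counit}, an expression whose last tensor factor is the fully right-nested product $\big(((x\lhd -)\lhd -)\lhd -\big)$ of appropriate Sweedler legs of $y$, $z$, $w$ and whose other factors are pinned down by these same laws. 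Both sides of the Zamolodchikov Tetrahedron equation are vertical composites of four whiskerings of $Y$; in each whiskering the only nontrivial $2$-cell contribution occurs in the tensor factor carrying the nested $\lhd$-product and is a whiskering of $\frkR$ by copies of $\lhd$ and $\iii$. Tracking these contributions around the associated permutohedron, the left-hand composite becomes precisely the left-hand side of \eqref{lin-distri} (with $y,z,w$ specialised to the relevant Sweedler legs) and the right-hand composite the right-hand side; hence the equation holds as a consequence of \eqref{lin-distri}. As in Theorem~\ref{cen-Lei-to-sol}, this is best displayed as one large commuting permutohedron diagram.

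The genuine obstacle is the Sweedler-index bookkeeping in the third step: one must verify that the auxiliary tensor factors generated by the six braidings on each side, together with the iterated comultiplications of $w$ that produce the doubly-indexed legs $w_{(1)(1)}$, $w_{(1)(2)}$, $w_{(2)}$ occurring in \eqref{lin-distri}, can be reconciled across the two sides using only \eqref{coasso}, \eqref{cocomm}, \eqref{counit}, \eqref{rack-coproduct} and \eqref{rack-counit}, so that the whole equation really does reduce to \eqref{lin-distri} with no leftover terms. A secondary technical point is the verification in the first step that $\widetilde{B}$ is a genuine two-sided inverse, which hinges on applying \eqref{rack-inv} in the presence of the counit.
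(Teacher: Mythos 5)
Your proposal is correct and follows essentially the same route as the paper: exhibit the explicit inverse $\widetilde{B}$ built from $\widetilde{\lhd}$ and verify $B\circ\widetilde{B}=\widetilde{B}\circ B=\Id$ via \eqref{coasso}, \eqref{cocomm}, \eqref{rack-inv} and \eqref{counit}, compute the two triple composites to identify the source and target of $Y$, and reduce the Tetrahedron equation to the coherence law \eqref{lin-distri} exactly as in Theorem \ref{cen-Lei-to-sol} (the paper likewise leaves the final Sweedler bookkeeping implicit). The only blemish is that your displayed composite for $\widetilde{B}$ omits a factor $\tau\otimes\Id$ (as written it yields $(u_{(1)}\,\widetilde{\lhd}\,v)\otimes u_{(2)}$ rather than your stated $(v\,\widetilde{\lhd}\,u_{(2)})\otimes u_{(1)}$), but the element-level formula you actually use is correct and, by cocommutativity, agrees with the paper's $(v\,\widetilde{\lhd}\,u_{(1)})\otimes u_{(2)}$.
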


\begin{proof}
It is obvious that ${B}(x\otimes y)=y_{(1)}\otimes(x {\lhd}~y_{(2)})$, where $x$ is either an object or a morphism.
Let $\widetilde{\lhd}:V\otimes V\to V$ be the linear functor such that \eqref{rack-inv} holds.
Define a linear functor $\widetilde{B}:V\otimes V\to V\otimes V$ by $$\widetilde{B}=(\widetilde{\lhd}\otimes\Id)\circ(\tau\otimes\Id)\circ(\Id\otimes\tau)\circ(\Delta\otimes\Id),$$ that is, $\widetilde{B}(x\otimes y)=(y~\widetilde{\lhd}~x_{(1)})\otimes x_{(2)}$ for either objects or morphisms.
Then for any $x\otimes y\in V_0\otimes V_0$, we have
\begin{eqnarray*}
(\widetilde{B}\circ B)(x\otimes y)
&=&\widetilde{B}\Big(y_{(1)}\otimes(x\lhd y_{(2)})\Big)
=\Big((x\lhd y_{(2)})~\widetilde{\lhd}~y_{(1)(1)}\Big)\otimes y_{(1)(2)}\\
&\overset{\eqref{coasso},\eqref{cocomm}}=&\Big((x\lhd y_{(1)(1)})~\widetilde{\lhd}~y_{(1)(2)}\Big)\otimes y_{(2)}
\overset{\eqref{rack-inv}}= x\otimes\varepsilon(y_{(1)})y_{(2)}\\
&\overset{\eqref{counit}}=&x\otimes y,\\
({B}\circ\widetilde{B})(x\otimes y)
&=&B\Big((y~\widetilde{\lhd}~x_{(1)})\otimes x_{(2)}\Big)
=x_{(2)(1)}\otimes\Big((y~\widetilde{\lhd}~x_{(1)})\lhd x_{(2)(2)}\Big)\\
&\overset{\eqref{coasso},\eqref{cocomm}}=&
x_{(1)}\otimes\Big((y~\widetilde{\lhd}~x_{(2)(1)})\lhd x_{(2)(2)}\Big)
\overset{\eqref{rack-inv}}= x_{(1)}\varepsilon(x_{(2)})\otimes y\\
&\overset{\eqref{counit}}=&x\otimes y.
\end{eqnarray*}
Similarly, $\widetilde{B}\circ B=\Id$ and ${B}\circ\widetilde{B}=\Id$ hold for any $f\otimes g\in V_1\otimes V_1$. Therefore, the linear functor $B$ is invertible.

For any $x,y,z\in V_0$, using \eqref{coasso}, \eqref{cocomm} and \eqref{rack-coproduct}, we have
\begin{eqnarray*}
(B\otimes \Id)(\Id\otimes B)(B\otimes \Id)(x\otimes y\otimes z)
&=&z_{(1)}\otimes(y_{(1)}\lhd z_{(2)})\otimes\big((x\lhd y_{(2)})\lhd z_{(3)}\big),\\
(\Id\otimes B)(B\otimes \Id)(\Id\otimes B)(x\otimes y\otimes z)
&=&z_{(1)}\otimes(y_{(1)}\lhd z_{(2)})\otimes\big((x\lhd z_{(3)(1)})\lhd(y_{(2)}\lhd z_{(3)(2)})\big),
\end{eqnarray*}
which implies that
\begin{eqnarray*}
s(Y_{x\otimes y\otimes z})&=&(B\otimes \Id)(\Id\otimes B)(B\otimes \Id)(x\otimes y\otimes z),\\
t(Y_{x\otimes y\otimes z})&=&(\Id\otimes B)(B\otimes \Id)(B\otimes \Id)(x\otimes y\otimes z),
\end{eqnarray*}
that is, $Y$ is compatible with the source and target maps.
By direct calculation, for any $f\otimes g\otimes h:x\otimes y\otimes z\to x'\otimes y'\otimes z'$, we have
\begin{center}
\begin{displaymath}
\xymatrix@C=3ex@R=0.3ex{
  \txt{$(B\otimes \Id)(\Id\otimes B)(B\otimes \Id)
  (x\otimes y\otimes z)$}
  \ar[rr]^-{Y_{x\otimes y\otimes z}}
  \ar[dd]_-{(B\otimes \Id)(\Id\otimes B)(B\otimes \Id)
  (f\otimes g\otimes h)}
  &
  &\txt{$(\Id\otimes B)(B\otimes \Id)(\Id\otimes B)
  (x\otimes y\otimes z)$}
  \ar[dd]^-{(\Id\otimes B)(B\otimes \Id)(\Id\otimes B)
  (f\otimes g\otimes h)}\\
  &\rotatebox{165}{{\txt{\Huge $\circlearrowright$}}}&\\
  \txt{$(B\otimes \Id)(\Id\otimes B)(B\otimes \Id)
  (x'\otimes y'\otimes z')$}
  \ar[rr]_-{Y_{x'\otimes y'\otimes z'}}
  &
  &\txt{$(\Id\otimes B)(B\otimes \Id)(\Id\otimes B)
  (x'\otimes y'\otimes z')$}
}
\end{displaymath}
\end{center}
which implies that $Y$ is a linear natural isomorphism.

Similar to the proof of Theorem \ref{cen-Lei-to-sol}, we obtain that $(B,Y)$ satisfies the Zamolodchikov Tetrahedron equation because $\eqref{lin-distri}$ holds, and we omit details.
\end{proof}

\begin{ex}
Consider the linear $2$-rack $(\K\oplus\huaL,\Delta,\varepsilon,\lhd,\frkR)$ obtained by a central Leibniz $2$-algebra $(\K\oplus\huaL,[\cdot,\cdot]_\oplus,\frkJ,(1,0))$, which is showed in Example \ref{ex-cen-lei-K+L-to lin-2-rack}.
By Theorem \ref{lin-2-rack-to-sol}, we obtain a solution $(B,Y)$ of the Zamolodchikov Tetrahedron equation on $\K\oplus\huaL$, where the lnear invertible functor $$B:(\K\oplus\huaL)\otimes(\K\oplus\huaL)\to(\K\oplus\huaL)\otimes(\K\oplus\huaL)$$
and the linear natural isomorphism
$$Y:(B\otimes \Id)(\Id\otimes B)(B\otimes \Id)\Rightarrow(\Id\otimes B)(B\otimes \Id)(\Id\otimes B)$$
are defined respectively by \eqref{B-on-K+L_0}-\eqref{Y-on-K+L_0},
i.e. the solution induced by the linear $2$-rack $(\K\oplus\huaL,\Delta,\varepsilon,\lhd,\frkR)$ is the same as the one induced by the central Leibniz $2$-algebra $(\K\oplus\huaL,[\cdot,\cdot]_\oplus,\frkJ,(1,0))$.
\end{ex}

\begin{rmk}
  Motivated by this example, one can expect certain relations between central Leibniz $2$-algebras and linear $2$-racks. In next section, we will show that one can obtain a linear $2$-rack from a central Leibniz $2$-algebra under certain conditions.
\end{rmk}

Similar to Proposition \ref{decate-Leibniz-2-alg}, one can obtain a linear rack via the decategorification of a linear $2$-rack, such that the solution  of the Yang-Baxter equation given by the linear rack and the solution obtained by the decategorification of the Zamolodchikov Tetrahedron equation are the same.
\begin{cor}\label{decate-lin-2-rack}
Let $V=(V_0,V_1,s,t,\iii)$ be a $2$-vector space,  and $(V,\Delta,\varepsilon,\lhd,\frkR)$ a linear $2$-rack.
Define linear maps
$\overline{\Delta}:\overline{V_0}\to\overline{V_0}\otimes\overline{V_0}$,
$\overline{\varepsilon}:\overline{V_0}\to\K$, and
$\overline{\lhd}:\overline{V_0}\otimes\overline{V_0}\to\overline{V_0}$ as follows:
\begin{eqnarray*}
&&\overline{\Delta}(\overline{x})=\overline{\Delta(x)},\qquad
\overline{\varepsilon}(\overline{x})=\varepsilon(x),\qquad
\overline{x}~\overline{\lhd}~\overline{y}=\overline{x\lhd y},
\end{eqnarray*}
where $\overline{x},\overline{y}\in\overline{V_0}$ and the vector space $\overline{V_0}$ is the decategorification of $V$.
Then $(\overline{V_0},\overline{\Delta},\overline{\varepsilon},\overline{\lhd})$ is a linear rack, which is  the decategorification of the linear $2$-rack $(V,\Delta,\varepsilon,\lhd,\frkR)$.

Moreover, we have the following commutative diagram:
\begin{center}
\begin{displaymath}
\xymatrix@C=3ex@R=0.5ex{
  \txt{\rm linear $2$-rack\\ $(V,\Delta,\varepsilon,\lhd,\frkR)$}
  \ar@{-->}[rr]^-{{\rm Theorem}~\ref{lin-2-rack-to-sol}}
  \ar@{-->}[dd]_-{{\rm Decategorification}}
  &
  &\txt{\rm the solution of the\\Zamolodchikov Tetrahedron equation\\$(B,Y)$}
  \ar@{-->}[dd]^-{\eqref{deficat-of-ZTE}}_-{{\rm Decategorification}}\\
  &\rotatebox{165}{{\txt{\Huge $\circlearrowright$}}}&\\
  \txt{\rm linear rack\\ $(\overline{V_0},\overline{\Delta},\overline{\varepsilon},\overline{\lhd})$}
  \ar[rr]_-{\eqref{quantum rack to solution formula}}
  &
  &\txt{\rm the solution of the\\Yang-Baxter equation\\$\huaB=\overline{B}$}
}
\end{displaymath}
\end{center}
 \end{cor}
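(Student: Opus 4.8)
The plan is to follow closely the argument of Proposition~\ref{decate-Leibniz-2-alg}, with the linear natural isomorphism $\frkR$ playing the role that the Jacobiator $\huaJ$ plays there.

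First I would check that the operations $\overline{\Delta}$, $\overline{\varepsilon}$, $\overline{\lhd}$ are well defined on isomorphism classes, together with the auxiliary operation $\overline{\widetilde{\lhd}}$ defined by $\overline{x}~\overline{\widetilde{\lhd}}~\overline{y}=\overline{x\,\widetilde{\lhd}\,y}$. The point is exactly the one used for the bracket in Proposition~\ref{decate-Leibniz-2-alg}: in a $2$-vector space every morphism is invertible, so $\overline{x}=\overline{x'}$ in $\overline{V_0}$ means there is a morphism $f\colon x\to x'$ in $V_1$; applying the linear functors $\Delta,\varepsilon,\lhd,\widetilde{\lhd}$ to $f$ (and to an identity morphism on a fixed argument) produces morphisms between the corresponding images, so the defining formulas are independent of the chosen representatives.

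Next I would deduce the linear rack axioms for $(\overline{V_0},\overline{\Delta},\overline{\varepsilon},\overline{\lhd})$. The identities \eqref{coasso}, \eqref{cocomm}, \eqref{counit}, \eqref{rack-coproduct}, \eqref{rack-counit}, \eqref{rack-inv} are equalities of linear functors; evaluating them on objects and passing to isomorphism classes turns them into the corresponding equalities of linear maps on $\overline{V_0}$ (under the identification $\overline{V_0}\otimes\overline{V_0}\cong\overline{V_0\otimes V_0}$). Hence $(\overline{V_0},\overline{\Delta},\overline{\varepsilon})$ is a cocommutative coassociative coalgebra, $\overline{\lhd}$ is a coalgebra morphism, and $\overline{\lhd}$ is invertible with inverse operation $\overline{\widetilde{\lhd}}$. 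The only axiom not already strict at the level of $V$ is the self-distributive law; here the linear natural isomorphism $\frkR_{x\otimes y\otimes z}$ supplies, for each choice of objects, a morphism $(x\lhd y)\lhd z\to(x\lhd z_{(1)})\lhd(y\lhd z_{(2)})$ in $V_1$, so the two objects represent the same class in $\overline{V_0}$ and $(\overline{x}~\overline{\lhd}~\overline{y})~\overline{\lhd}~\overline{z}=(\overline{x}~\overline{\lhd}~\overline{z}_{(1)})~\overline{\lhd}~(\overline{y}~\overline{\lhd}~\overline{z}_{(2)})$. Thus $(\overline{V_0},\overline{\Delta},\overline{\varepsilon},\overline{\lhd})$ is a linear rack, and by construction it is the decategorification of $(V,\Delta,\varepsilon,\lhd,\frkR)$.

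Finally, for the commutative square I would simply compute that the decategorification $\overline{B}$ of the solution $(B,Y)$ from Theorem~\ref{lin-2-rack-to-sol} coincides with the solution of the Yang--Baxter equation produced from the linear rack $(\overline{V_0},\overline{\Delta},\overline{\varepsilon},\overline{\lhd})$ by \eqref{quantum rack to solution formula}:
\begin{align*}
\overline{B}(\overline{x}\otimes\overline{y})
&\overset{\eqref{deficat-of-ZTE}}{=}\overline{B(x\otimes y)}
=\overline{y_{(1)}\otimes(x\lhd y_{(2)})}\\
&=\overline{y}_{(1)}\otimes(\overline{x}~\overline{\lhd}~\overline{y}_{(2)})
\overset{\eqref{quantum rack to solution formula}}{=}\huaB(\overline{x}\otimes\overline{y}),
\end{align*}
where $\overline{x},\overline{y}\in\overline{V_0}$; since $Y$ is a linear natural isomorphism this is consistent with the general fact (recorded just after \eqref{deficat-of-ZTE}) that $\overline{B}$ satisfies the Yang--Baxter equation. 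I do not expect a genuine obstacle: the argument is a line-by-line analogue of Proposition~\ref{decate-Leibniz-2-alg}. The only points needing a little care are the bookkeeping with the identification $\overline{V_0}\otimes\overline{V_0}\cong\overline{V_0\otimes V_0}$ in the treatment of $\overline{\Delta}$ and $\overline{\lhd}$, and the observation that it is the invertibility of $\frkR$ --- not merely its being a natural transformation --- that forces $\overline{\lhd}$ to be self-distributive.
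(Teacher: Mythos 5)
Your proposal is correct and follows essentially the same route as the paper, which itself only says the proof is ``similar to Proposition~\ref{decate-Leibniz-2-alg}'': well-definedness via the invertibility of morphisms in a $2$-vector space, the strict coalgebra and invertibility axioms passing directly to isomorphism classes, self-distributivity supplied by $\frkR$, and the same one-line computation identifying $\overline{B}$ with the map from \eqref{quantum rack to solution formula}. One minor remark: your closing emphasis on the invertibility of $\frkR$ is not really needed, since in a $2$-vector space every morphism is automatically invertible, so the mere existence of the natural transformation already identifies the two sides in $\overline{V_0}$.
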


\section{On the passage from central Leibniz $2$-algebras to linear $2$-racks}\label{sec:passage}

In \cite{Lebed3}, Lebed showed that a Leibniz algebra $\g$ gives rise to a linear rack structure on  $\g\oplus \K$. Moreover,  the solutions constructed respectively by the central Leibniz algebra $(\g\oplus \K, (0,1)) $ and the corresponding linear rack $\g\oplus \K$ are the same.
We understand $(\g\oplus \K, (0,1)) $ as the trivial central extension of the Leibniz algebra $\g$.

\emptycomment{\begin{equation}\label{cen-lei-lin-rack-sol}
 \begin{array}{l}
\xymatrix@!0@C=9.5ex@R=7ex{
&&\txt{\rm solutions of the YBE~$\huaB$}
&&\\
&& {\color{blue}{\txt{\Huge \rotatebox{165}{$\circlearrowright$}}}}
&&\\
\txt{\rm central Leibniz algebras\\$(\g\oplus \K, (0,1))$}
\ar[uurr]^-{\eqref{cen-Leibniz-sol}}
\ar[rrrr]_-{\text{\cite[Proposition~4.7]{Lebed3}}}
&&&
&\txt{\rm linear racks\\$\g\oplus \K$.}
\ar[uull]_-{\eqref{quantum rack to solution formula}}
&
}
\end{array}
\end{equation}
\begin{equation}\label{cen-lei-lin-rack-sol}
 \begin{array}{l}
\xymatrix@!0@C=9.5ex@R=7ex{
&&\txt{\rm solutions of the YBE~$\huaB$}
&&\\
&&{\color{blue}{\txt{\Huge \rotatebox{165}{$\circlearrowright$}\\\rm($\huaL$~is~splittable)}}}
&&\\
\txt{\rm central Leibniz algebras\\$(\huaL,[\cdot,\cdot],\swe)$}
\ar[uurr]^-{\eqref{cen-Leibniz-sol}}
\ar[rrrr]_-{\text{\cite[Proposition~4.7]{Lebed3}}}
&&&
&\txt{\rm linear racks\\$(\huaL,\Delta,\varepsilon,\lhd)$}
\ar[uull]_-{\eqref{quantum rack to solution formula}}
&
}
\end{array}
\end{equation}
}

Since we have shown that central Leibniz algebras and linear racks serve as decategorifications of central Leibniz $2$-algebras and linear $2$-racks respectively, there are natural questions:
\begin{itemize}
\item[\bf Q1:] Whether any central Leibniz algebra give rise to a linear rack.
  \item[{\bf Q2:}] Whether any central Leibniz 2-algebra give rise to a linear 2-rack. If there is certain connection, what is the relation between the corresponding solutions of the Zamolodchikov Tetrahedron equation constructed from previous sections.
\end{itemize}
In this section, we will show that if the underlying 2-vector space of a central Leibniz 2-algebra is splittable, then it gives rise to a linear 2-rack. Moreover, if the central Leibniz 2-algebra is splittable, then the corresponding solutions of the Zamolodchikov Tetrahedron equation constructed from the central Leibniz 2-algebra and the associated linear 2-rack are the same. Since any vector space is splittable, so we obtained that any central Leibniz algebra gives rise to a linear rack as a byproduct, and this result is an enhancement of the construction  by Lebed given in \cite[Proposition~4.7]{Lebed3}.

Let $V=(V_0,V_1,s,t,\iii)$ be a $2$-vector space.
For an arbitrary object $e\in V_0$, it is obvious that $\widetilde{e}=(\langle e\rangle,\langle\iii_e\rangle,s|_{\langle\iii_e\rangle},
t|_{\langle\iii_e\rangle},\iii|_{\langle e\rangle})$
is a $2$-vector space, where $\langle e\rangle$ and $\langle\iii_e\rangle$ are the subvector spaces generated by $e$ and $\iii_e$ respectively, $s|_{\langle\iii_e\rangle}$ is the restriction of the linear map $s$ to ${\langle\iii_e\rangle}$, similarly for
$t|_{\langle\iii_e\rangle}$ and $\iii|_{\langle e\rangle}$.
Moreover, $\overline{V}=(V_0/{\langle e\rangle},V_1/{\langle\iii_e\rangle},
\overline{s},\overline{t},\overline{\iii})$ is also a $2$-vector space, where $\overline{s},\overline{t},\overline{\iii}$ are defined as follows:
$$\overline{s}(\overline{f})=\overline{s(f)},\quad
\overline{t}(\overline{f})=\overline{t(f)},\quad
\overline{\iii}(\overline{x})=\overline{\iii(x)},\quad
\forall \overline{x}\in V_0/{\langle e\rangle}
,~\overline{f}\in V_1/{\langle\iii_e\rangle}.$$
Let $\pi_0:V_0\to V_0/{\langle e\rangle}$ and $\pi_1:V_1\to V_1/{\langle\Id_e\rangle}$ be the canonical projections.
Then the definitions of $\overline{s},\overline{t},\overline{\iii}$ show that
\begin{eqnarray}\label{overline-sti}
\overline{s}\circ\pi_1=\pi_0\circ s,\quad
\overline{t}\circ\pi_1=\pi_0\circ t,\quad
\overline{\iii}\circ\pi_0=\pi_1\circ\iii,
\end{eqnarray}
which implies that $\pi=(\pi_0,\pi_1)$ is a $2$-vector space homomorphism from ${V}$ to $\overline{V}$.

Let $(\huaL,[\cdot,\cdot],\huaJ,\swe)$ be a central Leibniz $2$-algebra.  Then $(\overline{\huaL},[\cdot,\cdot]_{\overline{\huaL}},\overline{\huaJ})$ is a Leibniz $2$-algebra, where $\overline{\huaL}=(\huaL_0/{\langle \swe\rangle},\huaL_1/{\langle\iii_\swe\rangle},
\overline{s},\overline{t},\overline{\iii})$ is the quotient 2-vector space, the linear functor $[\cdot,\cdot]_{\overline{\huaL}}$ and the linear natural isomorphism $\overline{\huaJ}$ are defined as follows:
\begin{eqnarray}
&&[\overline{x},\overline{y}]_{\overline{\huaL}}=[\pi_0(x),\pi_0(y)]_{\overline{\huaL}}=\pi_0([x,y])=\overline{[x,y]},\label{pi0}\\
&&[\overline{f},\overline{g}]_{\overline{\huaL}}=[\pi_1(f),\pi_1(g)]_{\overline{\huaL}}=\pi_1([f,g])=\overline{[f,g]},\label{pi1}\\
&&\overline{\huaJ}_{\overline{x},\overline{y},\overline{z}}=\overline{\huaJ}_{\pi_0(x),\pi_0(y),\pi_0(z)}=\pi_1(\huaJ_{x,y,z})=\overline{\huaJ_{x,y,z}},\label{huaJ}
\end{eqnarray}
where $\overline{x}=\pi_0(x), \overline{y}=\pi_0(y), \overline{z}=\pi_0(z)\in\overline{\huaL_0}$ and $\overline{f}=\pi_1(f), \overline{g}=\pi_1(g)\in\overline{\huaL_1}$.

Before we give the relation between central Leibniz 2-algebras and linear 2-racks, we introduce the following splittable condition.

\begin{defi}
\begin{itemize}
  \item[{\rm(i)}]Let $V=(V_0,V_1,s,t,\iii)$ be a $2$-vector space and $e\in V_0$.
If there exist a section $\sigma_0:V_0/{\langle e\rangle}\to V_0$ of $\pi_0$ and a section $\sigma_1:V_1/{\langle\iii_e\rangle}\to V_1$ of $\pi_1$ such that $\sigma=(\sigma_0,\sigma_1)$ is a $2$-vector space homomorphism from $\overline{V}$ to ${V}$,
then the 2-vector space $V$ is called {\bf splittable} with respect to the object $e$.

    \item[{\rm(ii)}]Let $(\huaL,[\cdot,\cdot],\huaJ,\swe)$ be a central Leibniz $2$-algebra. If there exist a section $\sigma_0:\huaL_0/{\langle \swe\rangle}\to \huaL_0$ of $\pi_0$ and a section $\sigma_1:\huaL_1/{\langle\iii_\swe\rangle}\to \huaL_1$ of $\pi_1$ such that $\sigma=(\sigma_0,\sigma_1)$ is a Leibniz $2$-algebra  homomorphism from $\overline{\huaL}$ to ${\huaL}$,
then the central Leibniz 2-algebra $(\huaL,[\cdot,\cdot],\huaJ,\swe)$ is called {\bf splittable} with respect to the central object $\swe$.
\end{itemize}
\end{defi}

Obviously, if a central Leibniz 2-algebra is splittable, then the underlying 2-vector space is automatically splittable, and the converse is not true. Now we are ready to give the main result in this section.

\emptycomment{
Given a splittable $2$-vector space $V$ with respect to the object $e\in V_0$,
one can construct a $2$-vector space isomorphism $F:V\to \widetilde{e}\oplus \overline{L}$ as follows:
%$$\phi:V_0\to{\langle e\rangle}\oplus V_0/{\langle e\rangle}\quad{\rm and}\quad\varphi:V_1\to{\langle\iii_e\rangle}\oplus V_1/{\langle\iii_e\rangle}$$
%by the sections $\sigma_0$ and $\sigma_1$ respectively as follows:
\begin{eqnarray*}
F(x)&=&(x-\sigma_0(\pi_0(x)),\pi_0(x)),\qquad\,\,\,
\forall x\in V_0,\\
F(f)&=&(f-\sigma_1(\pi_1(f)),\pi_1(f)),\qquad\,\,
\forall f\in V_1,\\
F^{-1}(ae,\overline{x})&=&ae+\sigma_0(\overline{x}),\qquad\qquad\qquad\,\,\,\,\,
\forall (ae,\overline{x})\in{\langle e\rangle}\oplus V_0/{\langle e\rangle},\\
F^{-1}(a\iii_e,\overline{f})&=&a\iii_e+\sigma_1(\overline{f}),\qquad\qquad\qquad\,\,
\forall (a\iii_e,\overline{f})\in{\langle\iii_e\rangle}\oplus V_1/{\langle\iii_e\rangle}.
\end{eqnarray*}
%Since $(\sigma_0,\sigma_1)$ is a $2$-vector space homomorphism, the following diagram commutes:
\emptycomment{
\begin{equation*}
  \xymatrix@R=2.5pc@C=2.5pc{
  \txt{$V_1$}
  \ar[r]^-{\varphi}
  \ar@[red]@<-0.5ex>[d]_-{s}
  \ar@[blue]@<.5ex>[d]^-{t}
  &\txt{${\langle\iii_e\rangle}\oplus V_1/{\langle\iii_e\rangle}$}
  \ar@[red]@<-0.5ex>[d]_-{s|_{\langle\iii_e\rangle}\oplus\overline{s}}
  \ar@[blue]@<0.5ex>[d]^-{t|_{\langle\iii_e\rangle}\oplus\overline{s}}
  &\txt{$V_0$}
  \ar[r]^-{\phi}
  \ar[d]_-{i}
  &\txt{${\langle e\rangle}\oplus V_0/{\langle e\rangle}$}
  \ar[d]^-{\iii|_{\langle e\rangle}\oplus\overline{\iii}}\\
  \txt{$V_0$}
  \ar[r]_-{\phi}
  &\txt{${\langle e\rangle}\oplus V_0/{\langle e\rangle}$}
  &\txt{$V_1$}
  \ar[r]_-{\varphi}
  &\txt{${\langle\iii_e\rangle}\oplus V_1/{\langle\iii_e\rangle}$}
  }
\end{equation*}
which implies that $(\phi,\varphi)$ is actually a $2$-vector space isomorphism.}
Therefore, a splittable $2$-vector space $V$ is isomorphic to the $2$-vector space $\widetilde{e}\oplus\overline{V}$.
}

\begin{thm}\label{cen-lei-2-alg-to-lin-2-rack}
Let $(\huaL,[\cdot,\cdot],\huaJ,\swe)$ be a central Leibniz $2$-algebra. %, where $\huaL=(\huaL_0,\huaL_1,s,t,\iii)$ is a $2$-vector space.
\begin{itemize}
  \item[{\rm(i)}]If the $2$-vector space $\huaL$ is splittable with respect to the object $\swe$, then there is a linear $2$-rack structure on the $2$-vector space $\huaL$.
    \item[{\rm(ii)}]If the central Leibniz $2$-algebra $(\huaL,[\cdot,\cdot],\huaJ,\swe)$ is splittable with respect to the central object $\swe$, then the solutions of the Zamolodchikov Tetrahedron equation constructed by these two structures are the same, which can be showed as following commutative diagram:
\begin{equation}\label{sol-cen-Lei=sol-lin-2-rack}
\begin{array}{l}
  \xymatrix@R=2.5pc@C=8.5pc{
  \txt{central Leibniz $2$-algebra\\$(\huaL,[\cdot,\cdot],\huaJ,\swe)$}
  \ar@{-->}[r]^-{\rm 2-v.~ s.~ splittable}
  \ar@{-->}[d]^-{{\rm Theorem}~\ref{cen-Lei-to-sol}}
  &\txt{linear $2$-rack\\$(\huaL,\Delta,\varepsilon,\lhd,\frkR)$}
  \ar@{-->}[d]^-{{\rm Theorem}~\ref{lin-2-rack-to-sol}}\\
  \txt{solution of the {\rm ZTE}\\$(B^{\rm Lei},Y^{\rm Lei})$}
  \ar@{==}[r]_-{\tiny\txt{\rm central Leibniz $2$-algebra splittable}}
  &\txt{solution of the {\rm ZTE}\\$(B^\lhd,Y^\lhd)$}
  }
  \end{array}
  \end{equation}
\end{itemize}
\end{thm}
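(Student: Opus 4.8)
The plan is to prove (i) by explicitly transporting the linear $2$-rack structure of Example~\ref{ex-cen-lei-K+L-to lin-2-rack} along the splitting isomorphism, and then to prove (ii) by a naturality argument showing the two solution-constructions intertwine. For (i), a splittable $2$-vector space $\huaL$ with respect to $\swe$ is isomorphic, as a $2$-vector space, to $\widetilde{\swe}\oplus\overline{\huaL}\cong\K\oplus\overline{\huaL}$ via the homomorphism $F$ built from the section $\sigma=(\sigma_0,\sigma_1)$ (set $F(x)=(x-\sigma_0(\pi_0(x)),\,\pi_0(x))$ on objects and analogously on morphisms). First I would check that this $F$ is a $2$-vector space isomorphism. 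Now the point is that $\overline{\huaL}=(\overline{\huaL_0},[\cdot,\cdot]_{\overline{\huaL}},\overline{\huaJ})$ is a Leibniz $2$-algebra by \eqref{pi0}--\eqref{huaJ}, so by Example~\ref{cen-ext-Lei-2-alg} the $2$-vector space $\K\oplus\overline{\huaL}$ carries the central Leibniz $2$-algebra structure $(\K\oplus\overline{\huaL},[\cdot,\cdot]_\oplus,\frkJ,(1,0))$, and by Example~\ref{ex-cen-lei-K+L-to lin-2-rack} it carries a linear $2$-rack structure $(\Delta_\oplus,\varepsilon_\oplus,\lhd_\oplus,\frkR_\oplus)$. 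Pulling this linear $2$-rack structure back along $F$ gives linear functors $\Delta=F^{-1}\otimes F^{-1}\circ\Delta_\oplus\circ F$, $\varepsilon=\varepsilon_\oplus\circ F$, $\lhd=F^{-1}\circ\lhd_\oplus\circ(F\otimes F)$, $\widetilde{\lhd}=F^{-1}\circ\widetilde{\lhd}_\oplus\circ(F\otimes F)$, and a linear natural isomorphism $\frkR$ obtained by whiskering $\frkR_\oplus$ with $F$; since $F$ is an isomorphism of $2$-vector spaces, all the defining axioms \eqref{coasso}--\eqref{lin-distri} transport verbatim, so $(\huaL,\Delta,\varepsilon,\lhd,\frkR)$ is a linear $2$-rack.

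For part (ii), when the \emph{central Leibniz $2$-algebra} (not merely its underlying $2$-vector space) is splittable, the isomorphism $F$ above is in fact a homomorphism of central Leibniz $2$-algebras from $(\huaL,[\cdot,\cdot],\huaJ,\swe)$ to $(\K\oplus\overline{\huaL},[\cdot,\cdot]_\oplus,\frkJ,(1,0))$; here one uses that $\sigma$ is a Leibniz $2$-algebra homomorphism to identify the bracket and Jacobiator, and that $\swe$ maps to $(1,0)$. The key observation is that both $(B^{\rm Lei},Y^{\rm Lei})$ from Theorem~\ref{cen-Lei-to-sol} and $(B^\lhd,Y^\lhd)$ from Theorem~\ref{lin-2-rack-to-sol} are \emph{natural} in the appropriate structure: a homomorphism of central Leibniz $2$-algebras conjugates the Theorem~\ref{cen-Lei-to-sol} solution, and likewise a homomorphism of linear $2$-racks conjugates the Theorem~\ref{lin-2-rack-to-sol} solution. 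Concretely, I would verify that $F\otimes F$ intertwines $B^{\rm Lei}_{\huaL}$ with $B^{\rm Lei}_{\K\oplus\overline{\huaL}}$ and $Y^{\rm Lei}_{\huaL}$ with $Y^{\rm Lei}_{\K\oplus\overline{\huaL}}$ (immediate from the explicit formulas $B(x\otimes y)=y\otimes x+\swe\otimes[x,y]$ and the corresponding $Y$, using $F(\swe)=(1,0)$ and $F([x,y])=[F(x),F(y)]_\oplus$), and that the same $F\otimes F$ intertwines $B^\lhd$ and $Y^\lhd$ (from $B(x\otimes y)=y_{(1)}\otimes(x\lhd y_{(2)})$ and the definitions of $\Delta,\lhd,\frkR$ as pullbacks along $F$). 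It then suffices to check the claimed equality $(B^{\rm Lei},Y^{\rm Lei})=(B^\lhd,Y^\lhd)$ on $\K\oplus\overline{\huaL}$, which is precisely the content of the example following Theorem~\ref{lin-2-rack-to-sol}: there the linear $2$-rack of Example~\ref{ex-cen-lei-K+L-to lin-2-rack} is shown to induce the same solution \eqref{B-on-K+L_0}--\eqref{Y-on-K+L_0} as the central Leibniz $2$-algebra of Example~\ref{cen-ext-Lei-2-alg}. Chasing the two squares (the pullback square for the linear $2$-rack solution, and the conjugation square for the Leibniz solution) around the identity on $\K\oplus\overline{\huaL}$ then yields $B^{\rm Lei}_{\huaL}=B^\lhd_{\huaL}$ and $Y^{\rm Lei}_{\huaL}=Y^\lhd_{\huaL}$.

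I expect the main obstacle to be bookkeeping rather than conceptual: one must be careful that the section $\sigma_0$ need not send $\overline{\swe}$ to anything canonical, so the linear $2$-rack structure $\Delta,\varepsilon,\lhd,\frkR$ on $\huaL$ depends on the choice of splitting, and in spelling out $\Delta$ and $\lhd$ explicitly on $\huaL_0$ and $\huaL_1$ one has to track the correction terms $x-\sigma_0(\pi_0(x))$ through the tensor factors and through the coproduct's comultiplication (the iterated coproducts $z_{(3)(1)}$, $w_{(1)(2)}$, etc., in \eqref{lin-distri} are where errors are easiest to make). A secondary subtlety is confirming that for part (ii) the \emph{whiskered} natural isomorphism $\frkR$ really does coincide, under $F$, with $\frkR_\oplus$ evaluated on transported objects, since $\frkR_\oplus$ involves the Jacobiator $\huaJ$ of $\overline{\huaL}$ and one must invoke that $\sigma_1$ respects $\huaJ$; but this is exactly the hypothesis that the central Leibniz $2$-algebra (not just the $2$-vector space) is splittable, which is why part (ii) requires the stronger assumption. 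Once these identifications are in place, the commutativity of \eqref{sol-cen-Lei=sol-lin-2-rack} follows formally.
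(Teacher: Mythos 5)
Your proposal is correct, but it is organized quite differently from the paper's proof, which constructs $\Delta,\varepsilon,\lhd,\widetilde{\lhd},\frkR$ on $\huaL$ by explicit formulas (namely \eqref{cen-lei-2-alg-coprod0}--\eqref{cen-lei-2-alg-frkR}) and verifies the linear $2$-rack axioms and the equality $(B^{\rm Lei},Y^{\rm Lei})=(B^{\lhd},Y^{\lhd})$ by direct computation on $\huaL$ itself. You instead transport the structure of Example \ref{ex-cen-lei-K+L-to lin-2-rack} along the splitting isomorphism $F:\huaL\to\K\oplus\overline{\huaL}$, $F(x)=(\Phi(x-\sigma_0(\overline{x})),\overline{x})$, and for (ii) reduce to the model case by naturality. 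The two routes produce literally the same structure: pulling back $\Delta_\oplus$, $\varepsilon_\oplus$, $\lhd_\oplus$, $\frkR_\oplus$ along $F$ reproduces exactly the paper's formulas, since $x=\Phi(x-\sigma_0(\overline{x}))\swe+\sigma_0(\overline{x})$. What your approach buys is conceptual economy — the axioms \eqref{coasso}--\eqref{lin-distri} and the coincidence of solutions only need to be checked once, on $\K\oplus\overline{\huaL}$ — and it makes transparent why (ii) needs the stronger hypothesis: only then is $F$ a morphism of central Leibniz $2$-algebras, so that $F\otimes F$ also intertwines the Theorem \ref{cen-Lei-to-sol} solutions. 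The cost is that you lean on Example \ref{ex-cen-lei-K+L-to lin-2-rack} and the example following Theorem \ref{lin-2-rack-to-sol}, which the paper asserts without detailed verification, and on the (easy but unstated) naturality of both solution constructions under strict homomorphisms; the total verification burden is therefore comparable, just relocated. One point to make explicit in your part (ii): the identity $\huaJ_{x\otimes y\otimes z}=\sigma_1(\overline{\huaJ}_{\overline{x}\otimes\overline{y}\otimes\overline{z}})$, equivalently $F(\huaJ_{x\otimes y\otimes z})=\frkJ_{F(x)\otimes F(y)\otimes F(z)}$, requires combining the homomorphism property of $\sigma$ with the centrality relations \eqref{central-element} and \eqref{central-s} (as in the paper's computation \eqref{two-bracket0}--\eqref{two-bracket1}); like the paper, you are implicitly treating $\sigma$ as a strict homomorphism ($\sigma_2=\Id$), which is the reading consistent with the paper's own use of \eqref{sigma0}--\eqref{sigma1}.
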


\begin{proof}
First, for any $x\in\huaL_0$ and $f\in\huaL_1$, we define $\Delta:\huaL\to\huaL\otimes\huaL$ as follows:
\begin{eqnarray}
\Delta(x)&=&x\otimes\swe+\swe\otimes\sigma_0\pi_0(x)=x\otimes\swe+\swe\otimes\sigma_0(\overline{x}),\label{cen-lei-2-alg-coprod0}\\
\Delta(f)&=&f\otimes\iii_\swe+\iii_\swe\otimes\sigma_1\pi_1(f)=f\otimes\iii_\swe+\iii_\swe\otimes\sigma_1(\overline{f})\label{cen-lei-2-alg-coprod1}.
\end{eqnarray}
By direct calculation and the fact that $\sigma$ and $\pi$ are linear functors, for any $f,g\in\huaL_1$, we obtain that
\begin{eqnarray*}
&&\,\,~(s\otimes s)\circ\Delta=\Delta\circ s,\qquad\,\,\,\,
(t\otimes t)\circ\Delta=\Delta\circ t,\\
&&\Delta(g)\circ\Delta(f)=\Delta(g\circ f),\quad
(\iii\otimes\iii)\circ\Delta=\Delta\circ\iii,
\end{eqnarray*}
which implies that $\Delta$ is a linear functor.

It is obvious that $2$-vector spaces $\widetilde{e}=(\langle e\rangle,\langle\iii_e\rangle,s|_{\langle\iii_e\rangle},
t|_{\langle\iii_e\rangle},\iii|_{\langle e\rangle})$ and $\K=(\K,\K,\Id_\K,\Id_\K,\Id_\K)$ are isomorphic by a linear invertible functor $\Phi:\widetilde{\swe}\to\K$, which is defined by
\begin{eqnarray*}
\Phi(a\swe)=a,\quad\Phi({a\iii_\swe})=a,\quad\forall a\in\K,\,a\swe\in\langle\swe\rangle,\,a\iii_\swe\in\langle\iii_\swe\rangle.
\end{eqnarray*}
Then for any $x\in\huaL_0$ and $f\in\huaL_1$, we define $\varepsilon:\huaL\to\K$ as follows:
\begin{eqnarray}
\varepsilon(x)&=&\Phi(x-\sigma_0\pi_0(x))=\Phi(x-\sigma_0(\overline{x})),\label{cen-lei-2-alg-coun0}\\
\varepsilon(f)&=&\Phi(f-\sigma_1\pi_1(f))=\Phi(f-\sigma_1(\overline{f})).\label{cen-lei-2-alg-coun1}
\end{eqnarray}
Similarly, by direct calculation, $\varepsilon$ is a linear functor since $\Phi$, $\sigma$ and $\pi$ are linear functors.
Moreover, \eqref{coasso}-\eqref{counit} hold by the fact that $\pi\circ\sigma=\Id_{\overline{\huaL}}$.

Second, for any $x,y\in\huaL_0$ and $f,g\in\huaL_1$, we define $\lhd:\huaL\otimes\huaL\to\huaL$ as follows:
\begin{eqnarray}
x\lhd y&=&\Phi(y-\sigma_0\pi_0(y))x+\sigma_0[\pi_0(x),\pi_0(y)]_{\overline{\huaL}}
=\Phi(y-\sigma_0(\overline{y}))x+\sigma_0[\overline{x},\overline{y}]_{\overline{\huaL}},\label{cen-lei-2-alg-lhd0}\\
f\lhd g&=&\Phi(g-\sigma_1\pi_1(g))f+\sigma_1[\pi_1(f),\pi_1(g)]_{\overline{\huaL}}
=\Phi(g-\sigma_1(\overline{g}))f+\sigma_1[\overline{f},\overline{g}]_{\overline{\huaL}}.\label{cen-lei-2-alg-lhd1}
\end{eqnarray}
Since $\Phi$ and $[\cdot,\cdot]$ are linear functors, by direct calculation, we will obtain that $\lhd$ is a linear functor and conditions \eqref{rack-coproduct}, \eqref{rack-counit} hold.

Now we define a linear functor $\widetilde{\lhd}:\huaL\otimes\huaL\to\huaL$ as follows:
\begin{eqnarray*}
x~\widetilde{\lhd}~y&=&\Phi(y-\sigma_0\pi_0(y))x-\sigma_0[\pi_0(x),\pi_0(y)]_{\overline{\huaL}}
=\Phi(y-\sigma_0(\overline{y}))x-[\overline{x},\overline{y}]_{\overline{\huaL}},\\
f~\widetilde{\lhd}~g&=&\Phi(g-\sigma_1\pi_1(g))f-\sigma_1[\pi_1(f),\pi_1(g)]_{\overline{\huaL}}
=\Phi(g-\sigma_1(\overline{g}))f-[\overline{f},\overline{g}]_{\overline{\huaL}},
\end{eqnarray*}
which will make \eqref{rack-inv} hold.

Third, for any $x,y,z\in\huaL_0$, we have
\begin{eqnarray*}
(x\lhd y)\lhd z&=&(\Phi(y-\sigma_0(\overline{y}))x+\sigma_0[\overline{x},\overline{y}]_{\overline{\huaL}})\lhd z\\
&=&\Phi(z-\sigma_0(\overline{z}))\Phi(y-\sigma_0(\overline{y}))x+\Phi(z-\sigma_0(\overline{z}))\sigma_0[\overline{x},\overline{y}]_{\overline{\huaL}}\\
&&+\Phi(y-\sigma_0(\overline{y}))\sigma_0[\overline{x},\overline{z}]_{\overline{\huaL}}+\sigma_0[[\overline{x},\overline{y}]_{\overline{\huaL}},\overline{z}]_{\overline{\huaL}},\\
(x\lhd z_{(1)})\lhd (y\lhd z_{(2)})&=&(x\lhd z)\lhd (y\lhd\swe)+(x\lhd\swe)\lhd (y\lhd\sigma_0(\overline{z}))\\
&=&(\Phi(z-\sigma_0(\overline{z}))x+\sigma_0[\overline{x},\overline{z}]_{\overline{\huaL}})\lhd y
+x\lhd\sigma_0[\overline{y},\overline{z}]_{\overline{\huaL}}\\
&=&\Phi(z-\sigma_0(\overline{z}))\Phi(y-\sigma_0(\overline{y}))x+\Phi(y-\sigma_0(\overline{y}))\sigma_0[\overline{x},\overline{z}]_{\overline{\huaL}}\\
&&+\Phi(z-\sigma_0(\overline{z}))\sigma_0[\overline{x},\overline{y}]_{\overline{\huaL}}+\sigma_0[[\overline{x},\overline{z}]_{\overline{\huaL}},\overline{y}]_{\overline{\huaL}}\\
&&+\Phi(\sigma_0[\overline{y},\overline{z}]_{\overline{\huaL}}-\sigma_0\pi_0\sigma_0[\overline{y},\overline{z}]_{\overline{\huaL}})x
+\sigma_0[\overline{x},[\overline{y},\overline{z}]_{\overline{\huaL}}]_{\overline{\huaL}}\\
&=&\Phi(z-\sigma_0(\overline{z}))\Phi(y-\sigma_0(\overline{y}))x+\Phi(y-\sigma_0(\overline{y}))\sigma_0[\overline{x},\overline{z}]_{\overline{\huaL}}\\
&&+\Phi(z-\sigma_0(\overline{z}))\sigma_0[\overline{x},\overline{y}]_{\overline{\huaL}}
+\sigma_0\big([[\overline{x},\overline{z}]_{\overline{\huaL}},\overline{y}]_{\overline{\huaL}}+[\overline{x},[\overline{y},\overline{z}]_{\overline{\huaL}}]_{\overline{\huaL}}\big).
\end{eqnarray*}
Thus we define $\frkR_{x\otimes y\otimes z}:(x\lhd y)\lhd z\to(x\lhd z_{(1)})\lhd (y\lhd z_{(2)})$ as follows:
\begin{eqnarray}\label{cen-lei-2-alg-frkR}
\frkR_{x\otimes y\otimes z}&=&\iii_{\Phi(z-\sigma_0(\overline{z}))\Phi(y-\sigma_0(\overline{y}))x+\Phi(z-\sigma_0(\overline{z}))\sigma_0[\overline{x},\overline{y}]_{\overline{\huaL}}
+\Phi(y-\sigma_0(\overline{y}))\sigma_0[\overline{x},\overline{z}]_{\overline{\huaL}}}+\sigma_1\big(\overline{\huaJ}_{\overline{x}\otimes\overline{y}\otimes\overline{z}}\big).
\end{eqnarray}
It is straightforward to see that for any linear maps $f:x\to x'$, $g:y\to y'$ and $h:z\to z'$, we have
$$((f\lhd h_{(1)})\lhd(g\lhd h_{(2)}))\circ\frkR_{x\otimes y\otimes z}=\frkR_{x'\otimes y'\otimes z'}\circ((f\lhd g)\lhd h),$$
which implies that $\frkR$ defined as above is a linear natural isomorphism.
Since the linear natural isomorphism $\overline{\huaJ}$ satisfies \eqref{Jacob} and $\sigma_1$ preserves the composition, we will obtain that $\frkR$ satisfies the condition \eqref{lin-distri}.
Therefore, $(\huaL,\Delta,\varepsilon,\lhd,\frkR)$ is a linear $2$-rack.

Finally, for any $x,y,z\in\huaL_0$ and $f,g\in\huaL_1$, on the one hand, by Theorem \ref{lin-2-rack-to-sol}, a solution $(B^{\lhd},Y^{\lhd})$ of the Zamolochikov Tetrahedron equation, which is induced by the linear $2$-rack $(\huaL,\Delta,\varepsilon,\lhd,\frkR)$, is as follows:
\begin{eqnarray*}
B^{\lhd}(x\otimes y)&=&y_{(1)}\otimes(x {\lhd}~y_{(2)})
\overset{\eqref{cen-lei-2-alg-coprod0}}=y\otimes(x\lhd\swe)+\swe\otimes(x\lhd\sigma_0(\overline{y}))
\overset{\eqref{cen-lei-2-alg-lhd0}}=y\otimes x+\swe\otimes\sigma_0[\overline{x},\overline{y}]_{\overline{\huaL}},\\
B^{\lhd}(f\otimes g)&=&g_{(1)}\otimes(f {\lhd}~g_{(2)})
\overset{\eqref{cen-lei-2-alg-coprod1}}=g\otimes(f\lhd\iii_\swe)+\swe\otimes(f\lhd\sigma_1(\overline{g}))
\overset{\eqref{cen-lei-2-alg-lhd1}}=g\otimes f+\iii_\swe\otimes\sigma_1[\overline{f},\overline{g}]_{\overline{\huaL}},\\
Y^{\lhd}_{x\otimes y\otimes z}&=&\iii_{z_{(1)}}\otimes\iii_{y_{(1)}\lhd z_{(2)}}\otimes\frkR_{x\otimes y_{(2)}\otimes z_{(3)}}\\
%&=&\iii_{z}\otimes\iii_{y_{(1)}\lhd\swe}\otimes\frkR_{x\otimes y_{(2)}\otimes\swe}
%+\iii_{\swe}\otimes\iii_{y_{(1)}\lhd\sigma_0(\overline{z})}\otimes\frkR_{x\otimes y_{(2)}\otimes\swe}
%+\iii_{\swe}\otimes\iii_{y_{(1)}\lhd\swe}\otimes\frkR_{x\otimes y_{(2)}\otimes\sigma_0(\overline{z})}\\
&\overset{\eqref{cen-lei-2-alg-coprod0}}=&\iii_{z}\otimes\iii_{y\lhd\swe}\otimes\frkR_{x\otimes\swe\otimes\swe}
+\iii_{z}\otimes\iii_{\swe\lhd\swe}\otimes\frkR_{x\otimes\sigma_0(\overline{y})\otimes\swe}
+\iii_{\swe}\otimes\iii_{y\lhd\sigma_0(\overline{z})}\otimes\frkR_{x\otimes\swe\otimes\swe}\\
&&+\iii_{\swe}\otimes\iii_{\swe\lhd\sigma_0(\overline{z})}\otimes\frkR_{x\otimes\sigma_0(\overline{y})\otimes\swe}
+\iii_{\swe}\otimes\iii_{y\lhd\swe}\otimes\frkR_{x\otimes\swe\otimes\sigma_0(\overline{z})}
+\iii_{\swe}\otimes\iii_{\swe\lhd\swe}\otimes\frkR_{x\otimes\sigma_0(\overline{y})\otimes\sigma_0(\overline{z})}\\
&\overset{\eqref{cen-lei-2-alg-lhd0},\eqref{cen-lei-2-alg-frkR}}=&
\iii_{z}\otimes\iii_{y}\otimes\iii_{x}
+\iii_{z}\otimes\iii_{\swe}\otimes\iii_{\sigma_0[\overline{x},\overline{y}]_{\overline{\huaL}}}
+\iii_{\swe}\otimes\iii_{\sigma_0[\overline{y},\overline{z}]}\otimes\iii_{x}\\
&&+\iii_{\swe}\otimes\iii_{y}\otimes\iii_{\sigma_0[\overline{x},\overline{z}]_{\overline{\huaL}}}
+\iii_\swe\otimes\iii_\swe\otimes\sigma_1(\overline{\huaJ}_{\overline{x}\otimes\overline{y}\otimes\overline{z}})\\
&=&\iii_{z\otimes y\otimes x+z\otimes\swe\otimes\sigma_0[\overline{x},\overline{y}]_{\overline{\huaL}}+\swe\otimes\sigma_0[\overline{y},\overline{z}]_{\overline{\huaL}}\otimes{x}
+\swe\otimes{y}\otimes\sigma_0[\overline{x},\overline{z}]_{\overline{\huaL}}}
+\iii_\swe\otimes\iii_\swe\otimes\sigma_1(\overline{\huaJ}_{\overline{x}\otimes\overline{y}\otimes\overline{z}}).
\end{eqnarray*}
On the other hand,
by Theorem \ref{cen-Lei-to-sol}, a solution $(B^{\rm Lei},Y^{\rm Lei})$, which is induced by the central Leibniz $2$-algebra $(\huaL,[\cdot,\cdot],\huaJ)$, is as follows:
\begin{eqnarray*}
B^{\rm Lei}(x\otimes y)&=&y\otimes x+\swe\otimes[x,y],\\
B^{\rm Lei}(f\otimes g)&=&g\otimes f+\iii_\swe\otimes[f,g],\\
Y^{\rm Lei}_{x\otimes y\otimes z}&=&\iii_{z\otimes y\otimes x+\swe\otimes[y,z]\otimes x+\swe\otimes y\otimes[x,z]+z\otimes \swe\otimes[x,y]}+\iii_\swe\otimes\iii_\swe\otimes\huaJ_{x\otimes y\otimes z}.
\end{eqnarray*}
Since $(\sigma_0,\sigma_1)$ is a Leibniz $2$-algebra homomorphism from $\overline{\huaL}$ to $\huaL$, we have
\begin{eqnarray}
[\sigma_0(\overline{x}),\sigma_0(\overline{y})]&=&\sigma_0[\overline{x},\overline{y}]_{\overline{\huaL}},
\qquad\forall \overline{x},\overline{y}\in\overline{\huaL_0}.\label{sigma0}\\
{[}\sigma_1(\overline{f}),\sigma_1(\overline{g})]&=&{\sigma_1[\overline{f},\overline{g}]_{\overline{\huaL}}},
\qquad\forall \overline{f},\overline{g}\in\overline{\huaL_1}.\label{sigma1}
\end{eqnarray}
It is obvious that $x-\sigma_0(x),~y-\sigma_0(y)\in\langle\swe\rangle$ and $f-\sigma_1(f),~g-\sigma_1(g)\in\langle\iii_\swe\rangle$, we assume that
\begin{eqnarray*}
x-\sigma_0(\overline{x})&=&a\swe,\quad \,\,\, y-\sigma_0(\overline{y})\,\,\,\,=\,\,\,\,b\swe,\\
f-\sigma_1(\overline{f})&=&a\iii_\swe,\quad g-\sigma_1(\overline{g})\,\,\,\,=\,\,\,\,b\iii_\swe,
\end{eqnarray*}
where $a,b\in\K$. Then we have
\begin{eqnarray}
[x,y]&=&[\sigma_0(\overline{x})+a\swe,\sigma_0(\overline{y})+b\swe]\label{two-bracket0}\\
&=&[\sigma_0(\overline{x}),\sigma_0(\overline{y})]+[a\swe,\sigma_0(\overline{y})]
+[\sigma_0(\overline{x}),b\swe]+[a\swe,b\swe]\nonumber\\
&\overset{\eqref{central-s}}=&[\sigma_0(\overline{x}),\sigma_0(\overline{y})]
\overset{\eqref{sigma0}}=\sigma_0[\overline{x},\overline{y}]_{\overline{\huaL}},\nonumber\\
{[}f,g]&=&[\sigma_1(\overline{f})+a\iii_\swe,\sigma_1(\overline{g})+b\iii_\swe]\label{two-bracket1}\\
&=&[\sigma_1(\overline{f}),\sigma_1(\overline{g})]+[a\iii_\swe,\sigma_1(\overline{g})]
+[\sigma_1(\overline{f}),b\iii_\swe]+[a\iii_\swe,b\iii_\swe]\nonumber\\
&\overset{\eqref{central-element}}=&[\sigma_1(\overline{f}),\sigma_1(\overline{g})]
\overset{\eqref{sigma1}}=\sigma_1[\overline{f},\overline{g}]_{\overline{\huaL}}.\nonumber
\end{eqnarray}
Moreover, we have
\begin{equation*}
\xymatrix@R=1.7pc@C=2pc{
 \txt{$[[x,y],z]$}
 \ar[rr]^-{\huaJ_{x\otimes y\otimes z}}
 \ar@{=}[d]_-{\eqref{two-bracket0}}
 &
 &\txt{$[[x,z],y]+[x,[y,z]]$}
 \ar@{=}[d]^-{\eqref{two-bracket0}}\\
 \txt{$\sigma_0([[\overline{x},\overline{y}]_{\overline{\huaL}},\overline{z}]_{\overline{\huaL}})$}
 \ar[rr]^-{\sigma_1({\overline{\huaJ}}_{\overline{x}\otimes\overline{y}\otimes\overline{z}})}
 & &\txt{$\sigma_0([[\overline{x},\overline{z}]_{\overline{\huaL}},\overline{y}]_{\overline{\huaL}}+[\overline{x},[\overline{y},\overline{z}]_{\overline{\huaL}}]_{\overline{\huaL}})$}
 }
 \end{equation*}
which implies that ${\huaJ_{x\otimes y\otimes z}}={\sigma_1({\overline{\huaJ}}_{\overline{x}\otimes\overline{y}\otimes\overline{z}})}$.
Therefore, we conclude that $(B^{\lhd},Y^{\lhd})$ and $(B^{\rm Lei},Y^{\rm Lei})$ are the same, i.e. diagram \eqref{sol-cen-Lei=sol-lin-2-rack} commutes.
\end{proof}

By Proposition~\ref{decate-Leibniz-2-alg} and Corollary~\ref{decate-lin-2-rack}, we obtain the following result, which is an enhancement of the construction given in  \cite[Proposition~4.7]{Lebed3}, where the author showed that there is a linear rack structure on $\huaL\oplus\mathbb K$ associated to any Leibniz algebra $\huaL$.

\begin{cor}\label{cor:cl-lr}
Any central Leibniz algebra gives rise to a linear rack structure on the underlying vector space. Moreover, we have the following commutative diagram:
\begin{center}
\begin{displaymath}
\xymatrix@C=3ex@R=0.5ex{
  \txt{\rm central Leibniz $2$-algebras\\$(\huaL,[\cdot,\cdot],\huaJ,\swe)$}
  \ar@{-->}[rr]^-{\rm Theorem~\ref{cen-lei-2-alg-to-lin-2-rack}}_-{\rm 2-v.~ s.~ splittable}
  \ar@{-->}[dd]_-{{\rm Decate.}}^-{\text{{\rm Proposition~\ref{decate-Leibniz-2-alg}}}}
  &
  &\txt{\rm linear $2$-racks\\$(\huaL,\Delta,\varepsilon,\lhd,\frkR)$}
 \ar@{-->}[dd]_-{{\rm Decate.}}^-{\text{{\rm Corollary~\ref{decate-lin-2-rack}}}}\\
  &\rotatebox{165}{{\txt{\Huge $\circlearrowright$}}}&\\
  \txt{\rm central Leibniz algebra\\ $(\overline{\huaL_0},[\cdot,\cdot]_{\overline{\huaL_0}},\overline{\swe})$}
 \ar@{-->}[rr]
  &
  &\txt{\rm linear rack\\ $(\overline{\huaL_0},\overline{\Delta},\overline{\varepsilon},{\overline{\lhd}})$}
}
\end{displaymath}
\end{center}
\end{cor}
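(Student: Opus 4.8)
The plan is to exhibit an arbitrary central Leibniz algebra as the decategorification of a \emph{splittable} central Leibniz $2$-algebra, and then to read off the linear rack from Theorem~\ref{cen-lei-2-alg-to-lin-2-rack}(i) via Corollary~\ref{decate-lin-2-rack}. Everything of substance is already packaged in those two results together with Proposition~\ref{decate-Leibniz-2-alg}, so the argument is essentially a composition.

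Concretely, given a central Leibniz algebra $(\g,[\cdot,\cdot]_\g,e)$, I would first form the \emph{discrete} $2$-vector space $\huaL=(\g,\g,\Id_\g,\Id_\g,\Id_\g)$ and equip it with the bracket $[x,y]=[x,y]_\g$ on both objects and morphisms, together with $\huaJ_{x\otimes y\otimes z}=\iii_{[[x,y]_\g,z]_\g}$. Since $s=t=\iii=\Id_\g$, every morphism is an identity morphism, so the only condition needed for $\huaJ_{x\otimes y\otimes z}$ to be a legitimate morphism $[[x,y]_\g,z]_\g\to[[x,z]_\g,y]_\g+[x,[y,z]_\g]_\g$ is the Leibniz identity for $[\cdot,\cdot]_\g$, and the Jacobiator identity \eqref{Jacob} then holds automatically, every arrow appearing in it being an identity morphism. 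As $[\iii_e,f]=[e,f]_\g=0=[f,e]_\g=[f,\iii_e]$ for all $f\in\huaL_1=\g$, the object $e$ is central, so $(\huaL,[\cdot,\cdot],\huaJ,e)$ is a central Leibniz $2$-algebra. Distinct objects of a discrete $2$-vector space are non-isomorphic, hence $\overline{\huaL_0}=\g$, and Proposition~\ref{decate-Leibniz-2-alg} identifies its decategorification with $(\g,[\cdot,\cdot]_\g,e)$.

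Next I would verify splittability: any subspace of $\g$ has a linear complement, so there is a linear section $\sigma_0\colon\g/\langle e\rangle\to\g$ of $\pi_0$, and under $\huaL_1=\g$ the same map serves as a section $\sigma_1$ of $\pi_1$. Since the source, target, and identity maps of $\huaL$ and of $\overline{\huaL}$ are all identities, $\sigma=(\sigma_0,\sigma_1)$ is automatically a $2$-vector space homomorphism $\overline{\huaL}\to\huaL$; that is, $\huaL$ is splittable with respect to $e$. Theorem~\ref{cen-lei-2-alg-to-lin-2-rack}(i) then produces a linear $2$-rack $(\huaL,\Delta,\varepsilon,\lhd,\frkR)$, and Corollary~\ref{decate-lin-2-rack} decategorifies it to a linear rack $(\overline{\huaL_0},\overline\Delta,\overline\varepsilon,\overline\lhd)=(\g,\overline\Delta,\overline\varepsilon,\overline\lhd)$, which proves the first assertion; I take the dashed arrow of the diagram to be this composite. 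The left square then commutes by Proposition~\ref{decate-Leibniz-2-alg}, the right square by Corollary~\ref{decate-lin-2-rack}, the top arrow is Theorem~\ref{cen-lei-2-alg-to-lin-2-rack}, and hence the whole square commutes. As a special case, feeding the trivial central extension $(\g\oplus\K,(0,1))$ of an arbitrary Leibniz algebra $\g$ into this construction recovers the linear rack on $\g\oplus\K$ of \cite[Proposition~4.7]{Lebed3}.

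The heavy lifting is entirely in Theorem~\ref{cen-lei-2-alg-to-lin-2-rack} and Corollary~\ref{decate-lin-2-rack}, so there is no serious obstacle; the only point demanding care is the bookkeeping of decategorification. One must confirm that the discrete $2$-vector space on $\g$ really has $\overline{\huaL_0}=\g$ with $\overline e=e$, so that no information is lost in passing back and forth, and, if one wants the square to commute for a general splittable $\huaL$ rather than just for discrete lifts, that the decategorified operations $\overline\Delta,\overline\varepsilon,\overline\lhd$ depend only on the decategorified data $(\overline{\huaL_0},[\cdot,\cdot]_{\overline{\huaL_0}},\overline\swe)$ and not on the auxiliary section $\sigma$. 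I expect this to be the only mild difficulty, and it is settled by inspecting the explicit formulas \eqref{cen-lei-2-alg-coprod0}--\eqref{cen-lei-2-alg-frkR} in the proof of Theorem~\ref{cen-lei-2-alg-to-lin-2-rack}, modulo isomorphism of linear racks.
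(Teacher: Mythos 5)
Your proposal is correct, but it reaches the conclusion by a genuinely different route than the paper. The paper's proof works entirely at the decategorified level: given $(\g,[\cdot,\cdot]_\g,e)$, it chooses a linear section $\sigma$ of $\pi:\g\to\g/\langle e\rangle$ and writes down the structure maps $\Delta_\g$, $\varepsilon_\g$, $\lhd_\g$, $\widetilde{\lhd}_\g$ explicitly (formulas \eqref{cen-lei-alg-to-lin-rack-copro}--\eqref{cen-lei-alg-to-lin-rack-lhd-inv}), verifying the linear rack axioms directly from $\pi\circ\sigma=\Id$ and the Leibniz identity on $\overline{\g}$; in effect it redoes the proof of Theorem~\ref{cen-lei-2-alg-to-lin-2-rack} in its $1$-categorical shadow. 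You instead lift $(\g,[\cdot,\cdot]_\g,e)$ to the discrete central Leibniz $2$-algebra $(\g,\g,\Id,\Id,\Id)$ with identity Jacobiator, observe that splittability of the underlying $2$-vector space is automatic (every subspace of $\g$ has a complement, and $s=t=\iii=\Id$ makes any linear section a $2$-vector space map), and then compose Theorem~\ref{cen-lei-2-alg-to-lin-2-rack}(i) with Corollary~\ref{decate-lin-2-rack}. Your checks on the discrete lift (that $\huaJ=\iii_{[[x,y]_\g,z]_\g}$ is well-typed exactly by the Leibniz identity, that the Jacobiator identity is vacuous, that $\overline{\huaL_0}=\g$) are all sound, and the resulting formulas coincide with the paper's. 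What your route buys is that the rack axioms need not be reverified--they are inherited from the $2$-rack axioms already established--and it makes transparent why the corollary is unconditional (every vector space is splittable). What the paper's route buys is a self-contained construction with explicit formulas and a commutativity claim addressed for arbitrary splittable central Leibniz $2$-algebras, not only for discrete lifts; for the general case your argument, like the paper's, still leans on an ``inspect the formulas'' step to see that $\overline{\Delta},\overline{\varepsilon},\overline{\lhd}$ agree with the bottom construction, so neither is more rigorous there.
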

\begin{proof}
Let $(\g,[\cdot,\cdot]_\g,e)$ be a central Leibniz algebra, $\overline{\g}=\g/\langle e\rangle$ the quotient vector space and $\pi:\g\to\overline{\g}$ the canonical projection.
Then $\overline{\g}=\g/\langle e\rangle$ with the bracket $[\cdot,\cdot]_{\overline{\g}}$ defined as follows is also a Leibniz algebra:
\begin{eqnarray*}
[\overline{x},\overline{y}]_{\overline{\g}}=\overline{[x,y]},
\qquad\forall\overline{x},\overline{y}\in\overline{\g}.
\end{eqnarray*}
Let $\sigma:\overline{\g}\to\g$ be a section of the projection $\pi$, i.e. $\pi\circ\sigma=\Id$.
Define linear maps $\Delta_\g:\g\to\g\otimes\g$, $\varepsilon_\g:\g\to\K$, $\lhd_\g:\g\otimes\g\to\g$ and $\widetilde{\lhd}_\g:\g\otimes\g\to\g$ as follows:
\begin{eqnarray}
\Delta_\g(x)&=&x\otimes\swe+\swe\otimes\sigma(\overline{x}),\label{cen-lei-alg-to-lin-rack-copro}\\
\varepsilon_\g(x)&=&\Phi\big(x-\sigma(\overline{x})\big),\label{cen-lei-alg-to-lin-rack-coun}\\
x\lhd_\g y&=&\Phi\big(y-\sigma(\overline{y})\big)x+\sigma[\overline{x},\overline{y}]_{\overline{\g}},\label{cen-lei-alg-to-lin-rack-lhd}\\
x~\widetilde{\lhd}_\g~ y&=&\Phi\big(y-\sigma(\overline{y})\big)x-\sigma[\overline{x},\overline{y}]_{\overline{\g}},\label{cen-lei-alg-to-lin-rack-lhd-inv}
\end{eqnarray}
where $x,y\in\g$ and the linear invertible map $\Phi:\langle\swe\rangle\to\K$ define by $\Phi(a\swe)=a$ for any $a\in\K$.
Then we will obtain that $(\g,\Delta_\g,\varepsilon_\g,\lhd_\g)$ is a linear rack by the facts that $\pi\circ\sigma=\Id$ and $[\cdot,\cdot]_{\overline{\g}}$ satisfies the Leibniz identity.
It is obvious that the above diagram is commutative.
\end{proof}

 By Proposition~\ref{decate-Leibniz-2-alg},
 Corollary~\ref{decate-lin-2-rack},
 Theorem~\ref{cen-lei-2-alg-to-lin-2-rack} and Corollary~\ref{cor:cl-lr}, we obtain the diagram \eqref{diagram:main}.

\begin{ex}
Consider the central Leibniz $2$-algebra $(\swg,[\cdot,\cdot],\huaJ,e)$ shown in Example \ref{cen-Lei-to-cen-Lei-2-alg}, where
$\swg=(\g,\g\oplus\R,s,t,\iii)$ is a $2$-vector space and $(\g,[\cdot,\cdot]_\g,e)$ is a central Leibniz algebra.
Let $\pi_0:\g\to\g/\langle e\rangle$ be the canonical projection. Choose a section $\sigma_0:\g/\langle e\rangle\to\g$ of $\pi_0$, i.e. $\pi_0\circ\sigma_0=\Id_{\g/\langle e\rangle}$. Then by Corollary \ref{cor:cl-lr}, there is a linear rack $(\g,\Delta_\g,\varepsilon_\g,\lhd_\g)$, which is defined by \eqref{cen-lei-alg-to-lin-rack-copro}-\eqref{cen-lei-alg-to-lin-rack-lhd}.

As we discussed at the beginning of this section, using \eqref{overline-sti}-\eqref{huaJ}, we obtain a $2$-vector space $\overline{\swg}=(\g/\langle e\rangle,\g\oplus\R/\langle(e,0)\rangle,\overline{s},\overline{t},\overline{\iii})$ and a Leibniz $2$-algebra $(\overline{\swg},[\cdot,\cdot]_{\overline{\swg}},\overline{\huaJ})$.
By identifying the vector space $\g\oplus\R/\langle(e,0)\rangle$ with $\g/\langle e\rangle\oplus\R$, we define two linear maps $\pi_1:\g\oplus\R\to\g\oplus\R/\langle(e,0)\rangle$ and $\sigma_1:\g\oplus\R/\langle(e,0)\rangle\to\g\oplus\R$ as follows:
\begin{eqnarray*}
\pi_1(x,a)&=&(\pi_0(x),a)=(\overline{x},a),\quad\forall(x,a)\in\g\oplus\R,\\
\sigma_1(\overline{x},a)&=&(\sigma_0(\overline{x}),a),\qquad\qquad\,\forall(\overline{x},a)\in\g\oplus\R/\langle(e,0)\rangle.
\end{eqnarray*}
It is obvious that $\pi_1\circ\sigma_1=\Id_{\g\oplus\R/\langle(e,0)\rangle}$, $\pi=(\pi_0,\pi_1):\swg\to\overline{\swg}$ and $\sigma=(\sigma_0,\sigma_1):\overline{\swg}\to{\swg}$ are $2$-vector space homomorphisms, which implies that the $2$-vector space $\swg$ is splittable with respect to the object $e$.
Then by Theorem \ref{cen-lei-2-alg-to-lin-2-rack}, there is a linear $2$-rack $(\swg,\Delta,\varepsilon,\lhd,\frkR)$, which is defined by \eqref{cen-lei-2-alg-coprod0}-\eqref{cen-lei-2-alg-frkR}.

By Proposition \ref{decate-Leibniz-2-alg}, it is easy to see the decategorification of the central Leibniz $2$-algebra $(\swg,[\cdot,\cdot],\huaJ,e)$ is the central Leibniz algebra $(\g,[\cdot,\cdot]_\g,e)$ since the isomorphic class of an arbitrary object $x\in\g$ only contains itself. Moreover, the decategorification of the induced linear $2$-rack $(\swg,\Delta,\varepsilon,\lhd,\frkR)$ is the induced linear rack $(\g,\Delta_\g,\varepsilon_\g,\lhd_\g)$.

Furthermore, there is also a Leibniz algebra structure on the vector space ${\g/\langle e\rangle}$. If the section $\sigma_0:{\g/\langle e\rangle}\to\g$ is a Leibniz algebra homomorphism, that is, $\sigma_0[\overline{x},\overline{y}]_{\g/\langle e\rangle}=[\sigma_0(\overline{x}),\sigma_0(\overline{g})]_\g$,
there is no doubt that $\sigma=(\sigma_0,\sigma_1):\overline{\swg}\to\swg$ is a Leibniz $2$-algebra homomorphism, which implies that the central Leibniz $2$-algebra $\swg$ is splittable with respect to the central object $e$.
At this moment, on the one hand, the solutions of the Yang-Baxter equation $\huaB^{\rm Lei}$ and  $\huaB^{\lhd_\g}$ induced respectively by the central Leibniz algebra $(\g,[\cdot,\cdot]_\g,e)$ and the induced linear rack $(\g,\Delta_\g,\varepsilon_\g,\lhd_\g)$ are the same, which are actually \eqref{cen-Leibniz-sol}. On the other hand, the solutions of the Zamolodchikov Tetrahedron equation $(B^{\rm Lei},Y^{\rm Lei})$ and  $(B^{\rm \lhd},Y^{\rm \lhd})$ induced respectively by the central Leibniz $2$-algebra $(\swg,[\cdot,\cdot],\huaJ,e)$ and the induced linear $2$-rack $(\swg,\Delta,\varepsilon,\lhd,\frkR)$ are the same, which are actually \eqref{cen-lei-to-cen-lei-2-alg-sol-B0}-\eqref{cen-lei-to-cen-lei-2-alg-sol-Y}.
In addition, the decategorification of the solution $(B^{\rm Lei},Y^{\rm Lei})=(B^{\rm \lhd},Y^{\rm \lhd})$ is actually the solution $\huaB^{\rm Lei}=\huaB^{\lhd_\g}$.
Using this example, one can get a more concrete sense of the commutativity of the diagram \eqref{diagram:main}.
\end{ex}

\section{Linear $2$-racks and $2$-racks}\label{sec:r}

In \cite{Cr}, the author gave rise to the notion of 2-racks as a small category $X$ equipped with
two functors $\rhd:X\times X\to X$ and $\lhd:X\times X\to X$,
four natural isomorphisms $\frkL_{x,y,z}:x\rhd(y\rhd z)\to (x\rhd y)\rhd(x\rhd z)$,
$\frkR_{x,y,z}:(x\lhd y)\lhd z\to (x\lhd z)\lhd(y\lhd z)$,
$L_{x,y}:x\rhd(y\lhd x)\to y$ and $R_{x,y}:(x\rhd y)\lhd x\to y$,
where $x,y,z\in X_0$, such that the following conditions hold:
\begin{itemize}
  \item (left) distributor identity:
  \begin{eqnarray*}
  \Big(\frkL_{w,z,y}\rhd\frkL_{w,z,x}\Big)\circ\frkL_{w,z\rhd y,z\rhd x}\circ\Big(\iii_w\rhd\frkL_{z,y,x}\Big)
  &=&\frkL_{w\rhd z,w\rhd y,w\rhd x}\circ\Big((\iii_w\rhd \iii_z)\rhd\frkL_{w,y,x}\Big)\rhd\frkL_{w,z,y\rhd x},
  \end{eqnarray*}
  \item (right) distributor identity:
  \begin{eqnarray*}
  \Big(\frkR_{x,z,w}\lhd\frkR_{y,z,w}\Big)\circ\frkR_{x\lhd z,y\lhd z,w}\circ\Big(\frkR_{x,y,z}\lhd\iii_w\Big)
  &=&\frkR_{x\lhd w,y\lhd w,z\lhd w}\circ\Big(\frkR_{x,y,w}\lhd(\iii_z\lhd \iii_w)\Big)\circ\frkR_{x\lhd y,z,w},
  \end{eqnarray*}
  \item additional coherence laws,
\end{itemize}
where the whole description of the above coherence laws is complicated, as functors $\rhd$ and $\lhd$ induce
$2^3 = 8$ possible ways of performing the third Reidemeister move.

In this section, we introduce the notion of a semistrict $2$-rack, which is a more strict version of the above one and used to describe group-like category of a linear $2$-rack.

\begin{defi}\label{def-2-rack}
A (right) {\bf semistrict $2$-rack} consists of
\begin{itemize}
  \item a small category $X=(X_0,X_1,s,t,\iii,\circ)$, i.e. an internal category in {\sf Set};
  \item a functor $\lhd:X\times X\to X$ such that,
  for any $x\in X_0$, $\bullet\lhd x:X\to X$ is an invertible functor, where for any $y\in X_0$ and $f\in X_1$, $\bullet\lhd x$ is defined as follows:
  $$(\bullet\lhd x)(y)=y\lhd x,\quad(\bullet\lhd x)(f)=f\lhd\iii_x.$$
  %i.e., there is a functor $\bullet\widetilde{\lhd}x:X\to X$ in ${\sf Set}$ such that $(\bullet\lhd x) \widetilde{\lhd}x\backsimeq\Id_X$ and $(\bullet\widetilde{\lhd}x)\lhd x\backsimeq\Id_X$;
  \item a natural isomorphism $\frkR_{x,y,z}:(x\lhd y)\lhd z\to (x\lhd z)\lhd(y\lhd z)$ satisfies the following (right) distributor identity for any $x,y,z,w\in X_0$:
  \begin{eqnarray}\label{2-rack-identity}
 \quad\quad\Big( \frkR_{x,z,w}\lhd\frkR_{y,z,w}\Big)\circ\frkR_{x\lhd z,y\lhd z,w}\circ\Big(\frkR_{x,y,z}\lhd \iii_w\Big)
  &=&\frkR_{x\lhd w,y\lhd w,z\lhd w}\circ\Big(\frkR_{x,y,w}\lhd(\iii_z\lhd \iii_w)\Big)\circ\frkR_{x\lhd y,z,w}.
  \end{eqnarray}
\end{itemize}
The above identity can be described as the following commutative diagram:
{\small\begin{equation*}%
%\begin{array}{l}
\xymatrix@R=1.5pc@C=0.1pc{
 &
 \text{$\Big((x\lhd y)\lhd z\Big)\lhd w$}
 \ar[dl]_-{\text{$\frkR_{x,y,z}\lhd \iii_w$}}
 \ar[dr]^-{\text{$\frkR_{x\lhd y,z,w}$}}
 &\\
 \text{$\Big((x\lhd z)\lhd(y\lhd z)\Big)\lhd w$}
 \ar[d]_-{\text{$\frkR_{x\lhd z,y\lhd z,w}$}}
 &
 &\text{$\Big((x\lhd y)\lhd w\Big)\lhd(z\lhd w)$}
 \ar[d]^-{\text{$\frkR_{x,y,w}\lhd(\iii_z\lhd\iii_w)$}}\\
 \text{$\Big((x\lhd z)\lhd w\Big)\lhd\Big((y\lhd z)\lhd w\Big)$}
 \ar[dr]_-{\text{$\frkR_{x,z,w}\lhd\frkR_{y,z,w}$}}
 &
 &\text{$\Big((x\lhd w)\lhd(y\lhd w)\Big)\lhd(z\lhd w)$}
 \ar[dl]^-{\text{\quad$\frkR_{x\lhd w,y\lhd w,z\lhd w}$}}\\
 &
 \text{$\Big((x\lhd w)\lhd(z\lhd w)\Big)\lhd\Big((y\lhd w)\lhd(z\lhd w)\Big)$}
 &
 }
 %\end{array}
 \end{equation*}}

Moreover, if the natural isomorphism $\frkR$ is the identity natural isomorphism, then $(X,\lhd)$ is called a {\bf strict $2$-rack }.
\end{defi}

\begin{pro}
Let $(X,\lhd)$ be a strict $2$-rack, where $X=(X_0,X_1,s,t,\iii,\circ)$ is a small category. Then $(X_0,\lhd)$ and $(X_1,\lhd)$ are racks.
\end{pro}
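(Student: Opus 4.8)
The statement is a decategorification, so the plan is essentially to unpack structure. Since $\lhd\colon X\times X\to X$ is a functor, it has an object component $X_0\times X_0\to X_0$ and a morphism component $X_1\times X_1\to X_1$; write $\lhd$ for both. For each of $(X_0,\lhd)$ and $(X_1,\lhd)$ I would verify the two axioms of a (right) rack: every right translation $\bullet\lhd c$ is a bijection, and the self-distributive law holds.

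\emph{Self-distributivity.} By definition, $(X,\lhd)$ being \emph{strict} means that the natural isomorphism $\frkR$ is the identity natural isomorphism; this is only meaningful when the two functors $X\times X\times X\to X$ connected by $\frkR$ coincide, i.e.
\[
(-\lhd-)\lhd-\;=\;(-\lhd-)\lhd(-\lhd-),\qquad \frkR_{x,y,z}=\iii_{(x\lhd y)\lhd z}.
\]
Equal functors agree on objects and on morphisms: reading off the object components gives $(x\lhd y)\lhd z=(x\lhd z)\lhd(y\lhd z)$ for all $x,y,z\in X_0$, which is the self-distributive law in $X_0$; reading off the morphism components gives $(f\lhd g)\lhd h=(f\lhd h)\lhd(g\lhd h)$ for all $f,g,h\in X_1$, which is the self-distributive law in $X_1$. (Equivalently, the object identity is just the equality of the source and target of $\frkR_{x,y,z}$.)

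\emph{Right translations.} For each object $x\in X_0$ the functor $\bullet\lhd x\colon X\to X$ is invertible, hence it restricts to a bijection on objects and a bijection on morphisms (inverses being supplied by the inverse functor). The object part is exactly the right translation $\bullet\lhd x\colon X_0\to X_0$, so this is bijective; together with self-distributivity this already shows $(X_0,\lhd)$ is a rack. The morphism part is $f\mapsto f\lhd\iii_x$, so $\bullet\lhd\iii_x\colon X_1\to X_1$ is bijective; this disposes of right translation in $X_1$ by identity morphisms. For an arbitrary morphism $g\colon y\to y'$ of $X$, functoriality of $\lhd$ on $X\times X$ yields, for every $f\colon x\to x'$,
\[
f\lhd g\;=\;(f\lhd\iii_{y'})\circ(\iii_x\lhd g)\;=\;(\iii_{x'}\lhd g)\circ(f\lhd\iii_y).
\]
Thus, hom-set by hom-set, $\bullet\lhd g$ is the bijection $\bullet\lhd\iii_{y'}$ of $X_1$ composed with precomposition by the ``structure morphism'' $\iii_x\lhd g\colon x\lhd y\to x\lhd y'$; since $\bullet\lhd y$ and $\bullet\lhd y'$ are bijective on $X_0$, it follows that $\bullet\lhd g\colon X_1\to X_1$ is a bijection provided every $\iii_x\lhd g$ is an isomorphism of $X$.

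The step I expect to be the main obstacle is exactly the verification that the structure morphisms $\iii_x\lhd g$ are invertible. The family $(\iii_x\lhd g)_{x\in X_0}$ is a natural transformation $\bullet\lhd y\Rightarrow\bullet\lhd y'$ between invertible functors, and I would build its componentwise inverse by transporting along the inverse functors of $\bullet\lhd y$ and $\bullet\lhd y'$, using that those invertible functors preserve and reflect isomorphisms; the relation $(\bullet\lhd z)(\iii_x\lhd g)=\iii_{x\lhd z}\lhd(g\lhd\iii_z)$ — the case $f=\iii_x$, $h=\iii_z$ of the strictness identity — is the convenient bookkeeping device. Once each $\iii_x\lhd g$ is an isomorphism, the preceding paragraph makes every right translation on $X_1$ a bijection, and with the self-distributive law in hand we conclude that $(X_1,\lhd)$ is a rack.
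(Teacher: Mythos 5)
Your handling of self-distributivity and of the object-level rack $(X_0,\lhd)$ is correct and matches the paper. Your reduction of the bijectivity of $\bullet\lhd g$ on $X_1$ (for a morphism $g\colon y\to y'$) to the invertibility of the structure morphisms $\iii_x\lhd g\colon x\lhd y\to x\lhd y'$, via the interchange identity $f\lhd g=(f\lhd\iii_{y'})\circ(\iii_x\lhd g)$, is also correct, and is in fact more transparent than the paper's argument, which writes down a candidate inverse $\bullet\,\widetilde{\lhd}\,g$ and asserts the key identity $(\iii_b\lhd g)\,\widetilde{\lhd}\,g=\iii_b$ from a ``square'' whose horizontal edges are functor applications rather than morphisms of $X$.

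The gap sits exactly at the step you flag, and the repair you propose does not work: a natural transformation between invertible functors need not have invertible components (a natural endotransformation of the identity functor of a one-object category with morphism monoid $M$ is just a central element of $M$, which need not be a unit), and transporting along the inverse functors only produces another morphism that is invertible if and only if the original one is. Worse, the invertibility of $\iii_x\lhd g$ genuinely does not follow from Definition~\ref{def-2-rack}, which only constrains $\bullet\lhd x$ for objects $x$. Concretely, let $X$ be the one-object category whose morphism monoid is $(\mathbb{N}^2,+)$, with $\lhd$ trivial on the unique object and $(a,b)\lhd(m,n)=(a+n,b)$ on morphisms. This is a functor, $\bullet\lhd *$ is the identity functor (hence invertible), one checks $(f\lhd g)\lhd h=(f\lhd h)\lhd(g\lhd h)$ directly, and $\frkR=\iii$ satisfies naturality and \eqref{2-rack-identity}; so this is a strict $2$-rack in the sense of the paper. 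Yet $\bullet\lhd(m,n)$ with $n>0$ is injective but not surjective on $X_1=\mathbb{N}^2$, so $(X_1,\lhd)$ is not a rack. Thus neither your argument nor the paper's closes this step; the claim about $X_1$ requires an additional hypothesis, e.g.\ that all morphisms of $X$ are invertible (as in the group-like category of a linear $2$-rack) or that invertibility of $\bullet\lhd g$ is imposed for all morphisms $g$ rather than only for the identities $\iii_x$.
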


\begin{proof}
Since $\bullet\lhd x:X\to X$ is an invertible functor for any $x\in X_0$,
there is a functor $\bullet~\widetilde{\lhd}~x:X\to X$ such that
$(\bullet~\widetilde{\lhd}~x)\lhd x=\Id_X$ and $(\bullet\lhd x)~\widetilde{\lhd}~x=\Id_X$, that is,
\begin{eqnarray*}
&&\quad(y\lhd x)~\widetilde{\lhd}~x=y,\qquad\quad(y~\widetilde{\lhd}~x)\lhd x=y,\qquad\,\forall y\in X_0,\\
&&\,(f\lhd\iii_x)~\widetilde{\lhd}~\iii_x=f,\qquad(f~\widetilde{\lhd}~\iii_x)\lhd \iii_x=f,\qquad\forall f\in X_1.
\end{eqnarray*}
Moreover, since $\frkR=\Id$, for any $x,y,z\in X_0$, we have
$$(x\lhd y)\lhd z=(x\lhd z)\lhd(y\lhd z),$$
which implies that $(X_0,\lhd)$ is a rack.

For any morphism $g:x\to y$, there is a map $\bullet\lhd g:X_1\to X_1$, where for any map $f:a\to b$ in $X_1$, $(\bullet\lhd g)(f)=f\lhd g$.
we will see that
$\bullet~\widetilde{\lhd}~g:X_1\to X_1$ defined as follows is the inverse of the map $\bullet{\lhd}g$:
%, where for any map $f:a\to b$,
$$
(\bullet~\widetilde{\lhd}~g)(f)=f~\widetilde{\lhd}~g
=(\bullet~\widetilde{\lhd}~y)\circ f\circ(\bullet{\lhd}x):
a~\widetilde{\lhd}~x\to b~\widetilde{\lhd}~y.
$$
More precisely, $f~\widetilde{\lhd}~g$ can be described as:
\begin{equation*}
\begin{array}{l}
\hspace{1mm}\xymatrix@R=2pc@C=2.5pc{
 \txt{$a~\widetilde{\lhd}~x$}
 \ar[r]^-{\bullet{\lhd}x}
 \ar@/_{2.3pc}/[rrr]!U_(.4){\qquad\quad f~\widetilde{\lhd}~g}
 & \txt{$(a~\widetilde{\lhd}~x)\lhd x=a$}
 \ar[r]^-{f}
 & \txt{$b$}
 \ar[r]^-{\bullet~\widetilde{\lhd}~y}
 & \txt{$b~\widetilde{\lhd}~y$.}
 &
 }
 \end{array}
 \end{equation*}
For any $f:a\to b$ and $g:x\to y$ in $X_1$, we have the following commutative diagram, where $f\lhd g$ is written as $(\iii_b\lhd g)\circ(f\lhd\iii_x)$ and the left bottom square implies that $(\iii_b\lhd g)~\widetilde{\lhd}~g=\iii_b$.
Then the blue arrows show that $(f\lhd g)~\widetilde{\lhd}~g=f$:
\begin{equation*}%
%\begin{array}{l}
\xymatrix@R=2.5pc@C=3.5pc{
 \txt{$a$}
 \ar[r]^-{\bullet\lhd x}
 \ar[d]_-{f}
 & \txt{$a\lhd x$}
 \ar[r]^-{\bullet\widetilde{\lhd}x}
 \ar@[blue][d]_-{f\lhd\iii_x}
 & \txt{$(a\lhd x)~\widetilde{\lhd}~x=a$}
 \ar@[blue]@<0.9ex>[l]^-{\bullet\lhd x}
 \ar@/^{4.5pc}/@[blue][dd]^-{f}
 \ar[d]_-{(f\lhd\iii_x)~\widetilde{\lhd}~\iii_x=f}\\
 \txt{$b$}
 \ar[r]^-{\bullet\lhd x}
 \ar[d]_-{\iii_b}
 &\txt{$b\lhd x$}
 \ar[r]^-{\bullet\widetilde{\lhd}x}
 \ar@[blue][d]_-{\iii_b\lhd g}
 & \txt{$(b\lhd x)~\widetilde{\lhd}~x=b$}
 \ar[d]_-{(\iii_b\lhd g)~\widetilde{\lhd}~g=\iii_b}\\
 \txt{$b$}
 \ar[r]^-{\bullet\lhd y}
 & \txt{$b\lhd y$}
 \ar@[blue][r]^-{\bullet\widetilde{\lhd}y}
 & \txt{$(b\lhd y)~\widetilde{\lhd}~y=b$}
 }
 %\end{array}
 \end{equation*}
One can also obtain $(f~\widetilde{\lhd}~g)\lhd g=f$ in a similar way.
Therefore, we obtain that $\bullet{\lhd}g:X_1\to X_1$ is always invertible for any $g\in X_1$. For any maps $f:x\to x'$, $g:y\to y'$ and $h:z\to z'$, since $\frkR=\Id$, we have
\begin{equation*}%
%\begin{array}{l}
\xymatrix@R=2.5pc@C=2.5pc{
 \txt{$(x\lhd y)\lhd z$}
 \ar@{=}[r]
 \ar[d]_-{(f\lhd g)\lhd h}
 & \txt{$(x\lhd z)\lhd(y\lhd z)$}
 \ar[d]^-{(f\lhd h)\lhd(g\lhd h)}\\
 \txt{$(x'\lhd y')\lhd z'$}
 \ar@{=}[r]
 & \txt{$(x'\lhd z')\lhd(y'\lhd z')$}
 }
 %\end{array}
 \end{equation*}
which implies that $(f\lhd g)\lhd h=(f\lhd h)\lhd(g\lhd h)$.
Therefore, $(X_1,\lhd)$ is a rack too.
\end{proof}

\begin{rmk}
The definition of a strict $2$-rack in Definition \ref{def-2-rack} is consistent with the one given in \cite{CW},
where the authors defined a strict $2$-rack $X=(X_0,X_1,s,t,\iii,\circ)$ as an internal category in the category of racks,
i.e. $X_0$ and $X_1$ are racks and $s,t,\iii,\circ$ are rack morphisms.
\end{rmk}

It is well known that the set of group-like elements of a linear rack constitutes a rack. Now we introduce the notion of a group-like category of a linear $2$-rack and show that the group-like category of a linear $2$-rack is a semistrict $2$-rack.

\emptycomment{
\begin{defi}
Let $(V,\Delta,\varepsilon,\lhd,\frkR)$ be a linear $2$-rack, where $V=(V_0,V_1,s,t,\iii)$ is a $2$-vector space. If $f\in V_1$ satisfies
$$\Delta(f)=f\otimes f,$$
then we call $f$ a group element of the linear $2$-rack $(V,\Delta,\varepsilon,\lhd,\frkR)$.
We denote the set of all group elements by $V^\rmG_1$, that is,
$$V^{\rmG}_1=\{f\in V_1~|~\Delta(f)=f\otimes f\}.$$
\end{defi}
For a group element $f\in V^{\rmG}_1$ of the linear $2$-rack $(V,\Delta,\varepsilon,\lhd,\frkR)$, since the linear functor $\Delta$ preserves the source and target maps,  we obtain that
\begin{eqnarray*}
\Delta(s(f))&=&(s\otimes s)\Delta(f)=(s\otimes s)(f\otimes f)=s(f)\otimes s(f),\\
\Delta(t(f))&=&(t\otimes t)\Delta(f)=(t\otimes t)(f\otimes f)=t(f)\otimes t(f).
\end{eqnarray*}
Since $\Delta$ preserves the identity map, then we obtain that
\begin{eqnarray}
\Delta(\iii_{s(f)})&=&(\iii\otimes\iii)(\Delta(s(f)))
=(\iii\otimes\iii)(s(f)\otimes s(f))
=\iii_{s(f)}\otimes\iii_{s(f)},\label{gp-obj-s}\\
\Delta(\iii_{t(f)})&=&(\iii\otimes\iii)(\Delta(t(f)))
=(\iii\otimes\iii)(t(f)\otimes t(f))
=\iii_{t(f)}\otimes\iii_{t(f)},\label{gp-obj-t}
\end{eqnarray}
which implies that $\iii_{s(f)},\iii_{t(f)}\in V^\rmG_1$.
Let $V^\rmG_0\triangleq s(V^\rmG_1)\cup t(V^\rmG_1)$, which is obviously a subset of the set $V_0$.

Then we obtain that
$V^g=(V^g_0,V^g_1,s,t,\iii,\circ)$ is actually a small subcategory of the category $V$, where $s,t,\iii$ are inherited from the $2$-vector space $V$ and $\circ$ is induced by $V$.
Moreover, by \eqref{counit}, we have
\begin{eqnarray}\label{counit-gp-elem}
&&\varepsilon(x)=1,\quad\varepsilon(f)=1,\quad\forall x\in V^g_0,~f\in V^g_1.
\end{eqnarray}
\begin{pro}
Let $(V,\Delta,\varepsilon,\lhd,\frkR)$ be a linear $2$-rack, where $V=(V_0,V_1,s,t,\iii)$ is a $2$-vector space.
Then $V^\rmG=(V^\rmG_0,V^\rmG_1,s,t,\iii,\circ)$ is a small subcategory of the category $V$, where $s,t,\iii,\circ$ are inherited from the category $V$.
Moreover, $(V^\rmG,\lhd^\rmG,\frkR^\rmG)$ is a $2$-rack, where
$\lhd^\rmG$ and $\frkR^\rmG$ are obtained respectively from the functor $\lhd$ and the natural isomorphism $\frkR$ by forgetting their linear structures.
\end{pro}

\begin{proof}
First, we show that, for any composable maps $f,g\in V^\rmG_1$ and any object $x\in V^\rmG_0$, $g\circ f\in V^\rmG_1$ and $\iii_x\in V^\rmG_1$ hold, which implies that $V^\rmG=(V^\rmG_0,V^\rmG_1,s,t,\iii,\circ)$ is a small subcategory of the category $V$. Indeed, since the linear functor $\Delta$ preserves composition, we have
$$
\Delta(g\circ f)=\Delta(g)\circ\Delta(f)
=(g\otimes g)\circ(f\otimes f)
=(g\circ f)\otimes(g\circ f),
$$
which implies $g\circ f\in V^\rmG_1$.
By $V^\rmG_0=s(V^\rmG_1)\cup t(V^\rmG_1)$ and \eqref{gp-obj-s}-\eqref{gp-obj-t}, it is obvious that $\iii_x\in V^\rmG_1$ for any object $x\in V^\rmG_0$.

Next, we show that the functor $\lhd^\rmG$ and the natural isomorphism $\frkR^\rmG$ are well-defined on the small category $V^\rmG$.
That is, for any $x,y,z\in V^\rmG_0$, we need to show that $x\lhd^\rmG y\in V^\rmG_0$ and $\frkR^\rmG_{x,y,z}\in V^\rmG_1$. Indeed, by \eqref{rack-coproduct}, we have
\begin{eqnarray*}
&&\Delta(x\lhd^\rmG y)=\Delta(x\lhd y)
%\overset{\eqref{rack-coproduct}}
=(x\lhd y)\otimes(x\lhd y)
=(x\lhd^\rmG y)\otimes(x\lhd^\rmG y),
\end{eqnarray*}
which implies that $x\lhd^\rmG y\in V^\rmG_0$.

for any $x\in V^g_0\subseteq V_0$, $\bullet\lhd x:V^g\to V^g$ is an invertible functor.
By \eqref{rack-inv}, \eqref{counit-gp-elem} and the definition of $V^g_0$, $V^g_1$, we have
\begin{eqnarray*}
&&(y\lhd x)~\widetilde{\lhd}~x=(\Id\otimes\varepsilon)(y\otimes x)=y=(y~\widetilde{\lhd}~x)\lhd x,\\
&&(f\lhd\iii_x)~\widetilde{\lhd}~\iii_x=(\Id\otimes\varepsilon)(f\otimes \iii_x)=f=(f~\widetilde{\lhd}~\iii_x)\lhd\iii_x,
\end{eqnarray*}
where $x,y\in V^g_0$ and $f,\iii_x\in V^g_1$.

For any $x,y,z,w\in V^g_0$, by the fact that $\Delta(z)=z\otimes z$ and $\Delta(w)=w\otimes w$, we obtain that the equality \eqref{lin-distri} for $\frkR_{x,y,z}:(x\lhd y)\lhd z\to (x\lhd z)\lhd(y\lhd z)$ is actually \eqref{2-rack-identity}.
Therefore, $(V^g,\lhd,\frkR)$ is a $2$-rack.
\end{proof}
}
\begin{defi}
Let $(V,\Delta,\varepsilon,\lhd,\frkR)$ be a linear $2$-rack, where $V=(V_0,V_1,s,t,\iii)$ is a $2$-vector space. If $x\in V_0$ satisfies
\begin{eqnarray}\label{gp-obj}
\Delta(x)=x\otimes x,
\end{eqnarray}
then we call $x$ a {\bf group-like object} of the linear $2$-rack $(V,\Delta,\varepsilon,\lhd,\frkR)$.
\end{defi}
We denote the set of all group-like objects by $V^\rmG_0$, that is,
$$V^{\rmG}_0=\{x\in V_0~|~\Delta(x)=x\otimes x\},$$
which is obviously a subset of the set $V_0$.
Define a subset $V_1^\rmG$ of the set $V_1$ by
$$V_1^\rmG=\{f\in V_1~|~s(f)\in V_0^\rmG~{\rm and}~t(f)\in V_0^\rmG\},$$
that is, $f\in V_1^\rmG$ if and only if
\begin{eqnarray}\label{gp-morp}
\Delta(s(f))=s(f)\otimes s(f),\quad\Delta(t(f))=t(f)\otimes t(f).
\end{eqnarray}
We call $f\in V_1^\rmG$ a {\bf group-like morphism}.

\begin{thm}
Let $(V,\Delta,\varepsilon,\lhd,\frkR)$ be a linear $2$-rack, where $V=(V_0,V_1,s,t,\iii)$ is a $2$-vector space.
Then $V^\rmG=(V^\rmG_0,V^\rmG_1,s,t,\iii,\circ)$ is a small subcategory of the category $V$, where $s,t,\iii,\circ$ are inherited from the category $V$ and we call $V^\rmG$ the {\bf group-like category} of the linear $2$-rack $(V,\Delta,\varepsilon,\lhd,\frkR)$.
Moreover, $(V^\rmG,\lhd^\rmG,\frkR^\rmG)$ is a semistrict $2$-rack, where
$\lhd^\rmG$ and $\frkR^\rmG$ are obtained respectively from the functor $\lhd$ and the natural isomorphism $\frkR$ by forgetting their linear structures.
\end{thm}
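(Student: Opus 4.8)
The plan is to verify, in turn, the three items of Definition~\ref{def-2-rack}, obtaining each of them by restricting the corresponding structure of the linear $2$-rack $(V,\Delta,\varepsilon,\lhd,\frkR)$ and invoking the axioms \eqref{coasso}--\eqref{lin-distri}. First I would check that $V^\rmG=(V^\rmG_0,V^\rmG_1,s,t,\iii,\circ)$ is a small subcategory of $V$. By \eqref{gp-morp} the maps $s,t$ send $V^\rmG_1$ into $V^\rmG_0$; since $s\circ\iii=t\circ\iii=\Id$, for $x\in V^\rmG_0$ the morphism $\iii_x$ has source and target $x$, so $\iii$ restricts to $V^\rmG_0\to V^\rmG_1$; and if $f,g\in V^\rmG_1$ are composable then $s(g\circ f)=s(f)$ and $t(g\circ f)=t(g)$ lie in $V^\rmG_0$, so $\circ$ restricts. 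Associativity and unit laws are inherited from $V$. I also record, using \eqref{counit}, that $\varepsilon(x)\,x=x$, hence $\varepsilon(x)=1$ for every nonzero group-like object $x$, and similarly $\varepsilon(\iii_x)=1$; this normalization is what will let me use \eqref{rack-inv}.

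Next I would show that $\lhd$ restricts to a functor $\lhd^\rmG:V^\rmG\times V^\rmG\to V^\rmG$ for which each $\bullet\lhd x$ is invertible. Closure on objects is immediate from \eqref{rack-coproduct}: if $\Delta(x)=x\otimes x$ and $\Delta(y)=y\otimes y$, then $\Delta(x\lhd y)=(\lhd\otimes\lhd)\circ(\Id\otimes\tau\otimes\Id)(x\otimes x\otimes y\otimes y)=(x\lhd y)\otimes(x\lhd y)$, so $x\lhd y\in V^\rmG_0$; closure on morphisms then follows since $\lhd$ is a functor, sending $f\otimes g$ to a morphism with endpoints $s(f)\lhd s(g)$ and $t(f)\lhd t(g)$ in $V^\rmG_0$. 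For invertibility I would use the linear functor $\widetilde{\lhd}$ of \eqref{rack-inv}. For $x\in V^\rmG_0$ we have $\Delta(x)=x\otimes x$ and $\Delta(\iii_x)=\iii_x\otimes\iii_x$, so evaluating \eqref{rack-inv} on $y\otimes x$ and on $f\otimes\iii_x$ and using $\varepsilon(x)=\varepsilon(\iii_x)=1$ yields $(y\lhd x)~\widetilde{\lhd}~x=y=(y~\widetilde{\lhd}~x)\lhd x$ and $(f\lhd\iii_x)~\widetilde{\lhd}~\iii_x=f=(f~\widetilde{\lhd}~\iii_x)\lhd\iii_x$, so $\bullet\lhd x$ and $\bullet~\widetilde{\lhd}~x$ are mutually inverse on $V_0$ and on morphisms. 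It then remains to see that $\widetilde{\lhd}$ preserves the group-like parts: if $x,y\in V^\rmG_0$ and $u:=y~\widetilde{\lhd}~x$, then $u\lhd x=y$, so applying \eqref{rack-coproduct} to $u\otimes x$ gives $(u_{(1)}\lhd x)\otimes(u_{(2)}\lhd x)=\Delta(y)=y\otimes y$, and applying the linear map $(\bullet~\widetilde{\lhd}~x)\otimes(\bullet~\widetilde{\lhd}~x)$ to both sides (which undoes $\lhd x$ in each factor) gives $\Delta(u)=u\otimes u$; the morphism case is analogous, using that $\widetilde{\lhd}$ preserves $s$ and $t$. Hence $\bullet\lhd x:V^\rmG\to V^\rmG$ is an invertible functor for every $x\in V^\rmG_0$.

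Finally I would restrict $\frkR$. For $x,y,z\in V^\rmG_0$ we have $z_{(1)}\otimes z_{(2)}=\Delta(z)=z\otimes z$, so $\frkR_{x\otimes y\otimes z}$ is a morphism $(x\lhd y)\lhd z\to(x\lhd z)\lhd(y\lhd z)$ whose endpoints lie in $V^\rmG_0$ by the closure just established; thus $\frkR$ restricts to a natural isomorphism $\frkR^\rmG$ of exactly the shape required in Definition~\ref{def-2-rack}, naturality being inherited from that of $\frkR$. The distributor identity \eqref{2-rack-identity} for $\frkR^\rmG$ is then obtained by specializing \eqref{lin-distri} to $z,w\in V^\rmG_0$: with $z_{(1)}=z_{(2)}=z$, $w_{(1)}=w_{(2)}=w$ and $w_{(1)(1)}=w_{(1)(2)}=w_{(2)(1)}=w_{(2)(2)}=w$, equation \eqref{lin-distri} reduces verbatim to \eqref{2-rack-identity}. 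Putting these together shows that $(V^\rmG,\lhd^\rmG,\frkR^\rmG)$ is a semistrict $2$-rack. I expect the one genuinely non-formal point to be the invariance of the group-like parts under $\widetilde{\lhd}$ — and hence the invertibility of $\bullet\lhd x$ as an endofunctor of $V^\rmG$ rather than merely as a map of underlying sets — which is the step I would write out in full; the rest is bookkeeping with the functoriality of $s,t,\iii,\circ,\Delta,\varepsilon,\lhd$ and $\frkR$.
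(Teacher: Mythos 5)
Your proposal is correct and follows essentially the same route as the paper's proof: closure of $V^\rmG$ under $\circ$, $\iii$, $\lhd$ and $\frkR$ via \eqref{gp-morp} and \eqref{rack-coproduct}, invertibility of $\bullet\lhd x$ via $\widetilde{\lhd}$, \eqref{counit} and \eqref{rack-inv}, and the observation that \eqref{lin-distri} specializes to \eqref{2-rack-identity} when $z,w$ are group-like. The one place you go beyond the paper is the explicit check that $\widetilde{\lhd}$ preserves group-like objects (so that $\bullet~\widetilde{\lhd}~x$ really is an endofunctor of $V^\rmG$ rather than just a one-sided inverse on underlying sets), a point the paper leaves implicit; your argument for it --- applying $(\bullet~\widetilde{\lhd}~x)\otimes(\bullet~\widetilde{\lhd}~x)$ to $\Delta(u\lhd x)=y\otimes y$ --- is sound.
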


\begin{proof}
First, in order to show that $V^\rmG=(V^\rmG_0,V^\rmG_1,s,t,\iii,\circ)$ is a small subcategory of the category $V$, we need to show that, for any composable maps $f,g\in V^\rmG_1$ and any object $x\in V^\rmG_0$, $g\circ f\in V^\rmG_1$ and $\iii_x\in V^\rmG_1$ hold.
Indeed, by the compatible condition between the source map, the target map and the composition, we have
\begin{eqnarray*}
\Delta(s(g\circ f))&=&\Delta(s(f))
=s(f)\otimes s(f)
=s(g\circ f)\otimes s(g\circ f),\\
\Delta(t(g\circ f))&=&\Delta(t(g))
=t(g)\otimes t(g)
=t(g\circ f)\otimes t(g\circ f).
\end{eqnarray*}
Then we have $s(g\circ f),t(g\circ f)\in V^\rmG_0$, which implies $g\circ f\in V^\rmG_1$.
For $x\in V^\rmG_0$, since $s(\iii_x)=t(\iii_x)=x\in V^\rmG_0$, we have $\iii_x\in V^\rmG_1$.

Second, we show that the functor $\lhd^\rmG$ and the natural isomorphism $\frkR^\rmG$ are well-defined on the small category $V^\rmG$.
That is, for any $x,y,z\in V^\rmG_0$ and $f,g\in V^\rmG_1$, we need to show that $x\lhd^\rmG y\in V^\rmG_0$, $f\lhd^\rmG g\in V^\rmG_1$ and $\frkR^\rmG_{x,y,z}\in V^\rmG_1$. Indeed, by \eqref{rack-coproduct}, we have
\begin{eqnarray*}
&&\Delta(x\lhd^\rmG y)=\Delta(x\lhd y)
\overset{\eqref{rack-coproduct},\eqref{gp-obj}}
=(x\lhd y)\otimes(x\lhd y)
=(x\lhd^\rmG y)\otimes(x\lhd^\rmG y),
\end{eqnarray*}
which implies that $x\lhd^\rmG y\in V^\rmG_0$.
Since $\lhd$ preserves the source and target maps, we have
\begin{eqnarray*}
\Delta(s(f\lhd^\rmG g))
&=&\Delta(s(f\lhd g))
=\Delta(s(f)\lhd s(g))
\overset{\eqref{rack-coproduct},\eqref{gp-morp}}
=(s(f)\lhd s(g))\otimes(s(f)\lhd s(g))\\
&=&(s(f)\lhd^\rmG s(g))\otimes(s(f)\lhd^\rmG s(g)),\\
\Delta(t(f\lhd^\rmG g))
&=&\Delta(t(f\lhd g))
=\Delta(t(f)\lhd t(g))
\overset{\eqref{rack-coproduct},\eqref{gp-morp}}
=(t(f)\lhd t(g))\otimes(t(f)\lhd t(g))\\
&=&(t(f)\lhd^\rmG t(g))\otimes(t(f)\lhd^\rmG t(g)),
\end{eqnarray*}
which implies that $f\lhd^\rmG g\in V_1^\rmG$.
Moreover, since $x\lhd^\rmG y\in V^\rmG_0$ for any $x,y\in V^\rmG_0$, we have
$s(\frkR^\rmG_{x,y,z})=(x\lhd^\rmG y)\lhd^\rmG z\in V_0^\rmG$ and $t(\frkR^\rmG_{x,y,z})=(x\lhd^\rmG z)\lhd^\rmG(y\lhd^\rmG z)\in V_0^\rmG$, which implies that $\frkR^\rmG_{x,y,z}\in V_1^\rmG$.

Third, we show that, for any $x\in V_0^\rmG$, $\bullet\lhd^\rmG x:V^\rmG\to V^\rmG$ is an invertible functor.
Denote by $\widetilde{\lhd^\rmG}$ the linear functor $\widetilde{\lhd}$ forgetting its linear structure, where $\widetilde{\lhd}$ satisfies \eqref{rack-inv}.
Since the linear functor $\Delta$ preserves the identity map, for any $x\in V_0^\rmG$, we have $\Delta(\iii_x)=\iii_{\Delta(x)}=\iii_{x\otimes x}=\iii_x\otimes\iii_x$.
Then by \eqref{counit}, we have
\begin{eqnarray*}\label{counit-gp-elem}
&&\varepsilon(x)=1,\quad\varepsilon(\iii_x)=1,\quad\forall x\in V^\rmG_0.
\end{eqnarray*}
Thus for any $y\in V^\rmG_0$ and $f\in V_1^\rmG$, we have
\begin{eqnarray*}
(y\lhd^\rmG x)~\widetilde{\lhd^\rmG}~x
&=&(y\lhd x)~\widetilde{\lhd}~x
\overset{\eqref{rack-inv}}=(\Id\otimes\varepsilon)(y\otimes x)=y,\\
(y~\widetilde{\lhd^\rmG}~x)\lhd^\rmG x
&=&(y~\widetilde{\lhd}~x)\lhd x
\overset{\eqref{rack-inv}}=(\Id\otimes\varepsilon)(y\otimes x)
=y,\\
(f\lhd^\rmG\iii_x)~\widetilde{\lhd^\rmG}~\iii_x
&=&(f\lhd\iii_x)~\widetilde{\lhd}~\iii_x
\overset{\eqref{rack-inv}}=(\Id\otimes\varepsilon)(f\otimes \iii_x)
=f,\\
(f~\widetilde{\lhd^\rmG}~\iii_x){\lhd^\rmG}\iii_x
&=&(f\lhd\iii_x)~\widetilde{\lhd}~\iii_x
\overset{\eqref{rack-inv}}=(\Id\otimes\varepsilon)(f\otimes \iii_x)
=f,
\end{eqnarray*}
which implies that $\bullet\lhd^\rmG x:V^\rmG\to V^\rmG$ is an invertible functor for any $x\in V_0^\rmG$.

Finally, it is easy to see that the condition \eqref{lin-distri} for $\frkR_{x,y,z}:(x\lhd y)\lhd z\to (x\lhd z)\lhd(y\lhd z)$ is actually the condition \eqref{2-rack-identity}.
Therefore, $(V^\rmG,\lhd^\rmG,\frkR^\rmG)$ is a semistrict $2$-rack.
\end{proof}

At the end of this section, we give an example of strict $2$-racks from a strict $2$-group $(G,\otimes,\III)$, which is a strict monoidal category where all objects are invertible with respect to $\otimes$ and all morphisms are invertible under the composition.
Here $G$ is a small category with a set of objects $G_0$ and a set of morphisms $G_1$. Denote the inverse of an object $g\in G_0$ with respect to $\otimes$ by $g^{\dagger}$, that is, $g\otimes g^{\dagger}=\III=g^{\dagger}\otimes g$.
For any morphism $\alpha:g\to h$, there is a morphism ${\iii_{g^{\dagger}}~\otimes~ \alpha^{-1}~\otimes~\iii_{h^{\dagger}}}$ from $g^{\dagger}$ to $h^{\dagger}$ as follows:
$$g^{\dagger}=g^{\dagger}\otimes\III=g^{\dagger}\otimes h\otimes h^{\dagger}\xrightarrow{\iii_{g^{\dagger}}~\otimes~ \alpha^{-1}~\otimes~\iii_{h^{\dagger}}}g^{\dagger}\otimes g\otimes h^{\dagger}=\III\otimes h^{\dagger}=h^{\dagger}.$$

\begin{pro}
Let $(G,\otimes,\III)$ be a strict $2$-group, $X$ a small category with a set of objects $X_0$ and a set of morphisms $X_1$, and $F:G\times X\to X$ a functor such that
\begin{eqnarray*}
F(g\otimes h, x)&=&F(g, F(h, x)),\qquad\forall g,h\in G_0~and~x\in X_0,\\
F(\alpha\otimes\beta, \xi)&=&F(\alpha, F(\beta, \xi)),\qquad\forall \alpha,\beta\in G_1 ~and~\xi\in X_1.
\end{eqnarray*}
Define $\lhd:(G\times X)\times(G\times X)\to G\times X$ for any objects $(g,x),(h,y)\in G_0\times X_0$ and any morphisms
$(\alpha,\xi):(g_1,x_1)\to(g_2,x_2),~(\beta,\eta):(h_1,y_1)\to(h_2,y_2)$ as follows:
\emptycomment{
\begin{equation*}
  \xymatrix@R=0.1pc@C=2.5pc{
  \txt{$\lhd:(G\times X)\times(G\times X)$} \ar[rr] && \txt{$G\times X\qquad\qquad\qquad\qquad\qquad\qquad\qquad\quad\,\,$}  \\
  \txt{$\big((g_1,x_1),(h_1,y_1)\big)$} \ar@{|->}[rr]
  \ar[ddddd]^{\big((\alpha,\xi),(\beta,\eta)\big)} && \txt{$(g_1,x_1)\lhd(h_1,y_1)=\big(h_1\otimes g_1\otimes h_1^{-1}, F(h_1,x_1)\big)$}
  \ar@<-19ex>[ddddd]^{(\alpha,\xi)\lhd(\beta,\eta)=\big(\beta\otimes\alpha\otimes(\iii_{h_1^{-1}}\otimes\beta^{-1}\otimes\iii_{h_2^{-1}}), F(\beta,\xi)\big)}\\
  &&\\&&\\&&\\&&\\
  \txt{$\big((g_2,x_2),(h_2,y_2)\big)$} \ar@{|->}[rr] && \txt{$(g_2,x_2)\lhd(h_2,y_2)=\big(h_2\otimes g_2\otimes h_2^{-1}, F(h_2,x_2)\big)$}
  }
\end{equation*}
}
\begin{eqnarray*}
(g,x)\lhd(h,y)&=&\big(h\otimes g\otimes h^{\dagger}, F(h,x)\big),\\
(\alpha,\xi)\lhd(\beta,\eta)&=&
\big(\beta\otimes\alpha\otimes(\iii_{h_1^{\dagger}}\otimes\beta^{-1}\otimes\iii_{h_2^{\dagger}}), F(\beta,\xi)\big).
\end{eqnarray*}
Then $(X,\lhd)$ is strict $2$-rack.
\end{pro}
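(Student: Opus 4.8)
The plan is to check the three defining requirements of a strict $2$-rack from Definition~\ref{def-2-rack} for the small category $G\times X$ equipped with $\lhd$ (this product category being what the statement intends): (i) $\lhd$ is a functor $(G\times X)\times(G\times X)\to G\times X$; (ii) $\bullet\lhd(h,y)$ is an invertible endofunctor for every object $(h,y)\in G_0\times X_0$; (iii) the identity $(a\lhd b)\lhd c=(a\lhd c)\lhd(b\lhd c)$ holds \emph{on the nose} for objects and for morphisms, so that the identity natural transformation is a legitimate choice of $\frkR$ and the distributor identity~\eqref{2-rack-identity} becomes vacuous. The conceptual skeleton is that $\lhd$ is the "semidirect" combination of two familiar structures: on the $G$-coordinate it is conjugation, $g\lhd h=h\otimes g\otimes h^{\dagger}$, the conjugation rack of the group $(G_0,\otimes)$, lifted to the conjugation rack of the group $(G_1,\otimes)$ of morphisms; on the $X$-coordinate it records the action, $x\lhd y=F(h,x)$ when $y$ lies over $h$. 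I will use two preliminary facts: first, in a strict $2$-group $(G_1,\otimes,\iii_{\III})$ is a group, the $\otimes$-inverse of $\beta\colon h_1\to h_2$ being $\overline{\beta}:=\iii_{h_1^{\dagger}}\otimes\beta^{-1}\otimes\iii_{h_2^{\dagger}}\colon h_1^{\dagger}\to h_2^{\dagger}$ (a one-line interchange-law computation shows $\beta\otimes\overline{\beta}=\iii_{\III}=\overline{\beta}\otimes\beta$), and $\otimes$-inversion is compatible with $\circ$, since by interchange $\overline{\beta_2}\circ\overline{\beta_1}$ is again a two-sided $\otimes$-inverse of $\beta_2\circ\beta_1$ and hence equals $\overline{\beta_2\circ\beta_1}$; second, $F$ is a \emph{unital} action, i.e.\ $F(\III,-)=\Id$ on objects and $F(\iii_{\III},-)=\Id$ on morphisms (this will be needed only for invertibility, and is genuinely needed — without it $\bullet\lhd(h,y)$ can fail to be invertible).

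For (i), observe that the $G$-component of $(\alpha,\xi)\lhd(\beta,\eta)$ is exactly $\beta\otimes\alpha\otimes\overline{\beta}$, i.e.\ conjugation by $\beta$ in $(G_1,\otimes)$, since the morphism $\iii_{h_1^{\dagger}}\otimes\beta^{-1}\otimes\iii_{h_2^{\dagger}}$ appearing in the definition is $\overline{\beta}$. Because $s,t\colon G_1\to G_0$ are strict monoidal functors, this component runs from $h_1\otimes g_1\otimes h_1^{\dagger}$ to $h_2\otimes g_2\otimes h_2^{\dagger}$ (the cancellations $h_i^{\dagger}h_i=\III=h_ih_i^{\dagger}$ disposing of the extra factors), so $\lhd$ respects sources and targets; it sends $(\iii_g,\iii_x)\lhd(\iii_h,\iii_y)$ to the identity of $(hgh^{\dagger},F(h,x))$ because $\overline{\iii_h}=\iii_{h^{\dagger}}$; and it respects $\circ$ because for composable pairs the interchange law gives $(\beta_2\otimes\alpha_2\otimes\overline{\beta_2})\circ(\beta_1\otimes\alpha_1\otimes\overline{\beta_1})=(\beta_2\circ\beta_1)\otimes(\alpha_2\circ\alpha_1)\otimes(\overline{\beta_2}\circ\overline{\beta_1})$ — all three slots being $\circ$-composable — combined with $\overline{\beta_2}\circ\overline{\beta_1}=\overline{\beta_2\circ\beta_1}$. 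The $X$-component $F(\beta,\xi)$ is functorial because $F$ is. Hence $\lhd$ is a functor.

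For (ii), fix an object $(h,y)$; then $\bullet\lhd(h,y)$ acts by $(g,x)\mapsto(hgh^{\dagger},F(h,x))$ on objects and by $(\alpha,\xi)\mapsto(\iii_h\otimes\alpha\otimes\iii_{h^{\dagger}},F(\iii_h,\xi))$ on morphisms (using $\overline{\iii_h}=\iii_{h^{\dagger}}$). Define $\bullet\,\widetilde{\lhd}\,(h,y)$ by $(g',x')\mapsto(h^{\dagger}g'h,F(h^{\dagger},x'))$ and $(\alpha',\xi')\mapsto(\iii_{h^{\dagger}}\otimes\alpha'\otimes\iii_h,F(\iii_{h^{\dagger}},\xi'))$; the same bookkeeping shows this is a functor, and the two composites equal the identity because $h^{\dagger}h=\III=hh^{\dagger}$, $\iii_h\otimes\iii_{h^{\dagger}}=\iii_{\III}=\iii_{h^{\dagger}}\otimes\iii_h$ in $G_1$, $F(h^{\dagger},F(h,x))=F(h^{\dagger}\otimes h,x)=F(\III,x)=x$, and likewise $F(\iii_{h^{\dagger}},F(\iii_h,\xi))=F(\iii_{\III},\xi)=\xi$. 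For (iii), on objects $((g,x)\lhd(h,y))\lhd(k,z)=(khgh^{\dagger}k^{\dagger},F(k\otimes h,x))$ while $((g,x)\lhd(k,z))\lhd((h,y)\lhd(k,z))=((khk^{\dagger})(kgk^{\dagger})(kh^{\dagger}k^{\dagger}),F(khk^{\dagger}\otimes k,x))$; the $G$-parts agree by associativity and $k^{\dagger}k=\III$, and the $X$-parts agree because $khk^{\dagger}\otimes k=k\otimes h$. On morphisms the $G$-parts agree because conjugation in the group $G_1$ is self-distributive, and the $X$-parts agree because $F(\gamma\otimes\beta\otimes\overline{\gamma},F(\gamma,\xi))=F(\gamma\otimes\beta\otimes\overline{\gamma}\otimes\gamma,\xi)=F(\gamma\otimes\beta,\xi)=F(\gamma,F(\beta,\xi))$, using $\overline{\gamma}\otimes\gamma=\iii_{\III}$. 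Thus $(a\lhd b)\lhd c=(a\lhd c)\lhd(b\lhd c)$ identically, the identity transformation is a valid $\frkR$, \eqref{2-rack-identity} holds trivially, and $(G\times X,\lhd)$ is a strict $2$-rack.

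The step most in need of care will be the morphism-level functoriality of $\lhd$ in (i) — specifically the identifications $\overline{\beta}=\iii_{s(\beta)^{\dagger}}\otimes\beta^{-1}\otimes\iii_{t(\beta)^{\dagger}}$ and $\overline{\beta_2\circ\beta_1}=\overline{\beta_2}\circ\overline{\beta_1}$, and applying the interchange law to three-fold tensor products of morphisms whose sources and targets in $G_0$ must be tracked precisely. The object-level identities, the invertibility of $\bullet\lhd(h,y)$, and the reduction of \eqref{2-rack-identity} are then routine, provided the unital action axioms for $F$ are taken as part of the hypotheses.
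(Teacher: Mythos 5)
Your proof is correct and verifies the same three axioms in the same order as the paper (functoriality of $\lhd$, invertibility of $\bullet\lhd(h,y)$, strict self-distributivity), but the execution differs in ways worth recording. Where the paper proves $\iii_{h_1^{\dagger}}\otimes\beta^{-1}\beta'^{-1}\otimes\iii_{h_3^{\dagger}}=(\iii_{h_2^{\dagger}}\otimes\beta'^{-1}\otimes\iii_{h_3^{\dagger}})\circ(\iii_{h_1^{\dagger}}\otimes\beta^{-1}\otimes\iii_{h_2^{\dagger}})$ by an explicit two-stage interchange-law manipulation with padding by $\iii_\III$, you identify $\overline{\beta}=\iii_{h_1^{\dagger}}\otimes\beta^{-1}\otimes\iii_{h_2^{\dagger}}$ as the $\otimes$-inverse of $\beta$ in the group $(G_1,\otimes)$ and get $\overline{\beta_2\circ\beta_1}=\overline{\beta_2}\circ\overline{\beta_1}$ from uniqueness of two-sided inverses plus one application of interchange; this is shorter and makes the $G$-component transparently the conjugation rack of $G_1$, which also disposes of the morphism-level self-distributivity (conjugation in any group is self-distributive) that the paper does not check explicitly even though $\frkR=\Id$ being a natural transformation requires the two composite functors to agree on morphisms as well as objects. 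You are also right to flag unitality: the paper's stated hypotheses on $F$ give only $F(h^{\dagger},F(h,x))=F(\III,x)$ with $F(\III,-)$ idempotent, and since $F(\III,-)\circ F(h,-)=F(h,-)$, invertibility of $\bullet\lhd(h,y)$ is in fact \emph{equivalent} to $F(\III,-)=\Id$ (and $F(\iii_\III,-)=\Id$ on morphisms); the paper asserts that $\bullet\lhd(h^{\dagger},x)$ is an inverse without noting this, so your explicit inclusion of the unital-action axiom closes a small gap rather than introducing an extra assumption. Finally, note that the proposition's conclusion should read that $(G\times X,\lhd)$ is a strict $2$-rack, as both your argument and the paper's own proof actually establish.
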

\begin{proof}
First, we show that $\lhd$ defined as above is a functor, that is, $\lhd$ is compatible with the source map, the target map, the composition and the identity map.
Indeed, since the functor $F$ preserves the source and target maps, for any maps
$(\alpha,\xi):(g_1,x_1)\to(g_2,x_2),~(\beta,\eta):(h_1,y_1)\to(h_2,y_2)$, it is obvious that
\begin{eqnarray*}
s\big((\alpha,\xi)\lhd(\beta,\eta))
&=&s(\beta\otimes\alpha\otimes(\iii_{h_1^{\dagger}}\otimes\beta^{-1}\otimes\iii_{h_2^{\dagger}}), F(\beta,\xi)\big)\\
&=&(h_1\otimes g_1\otimes{h_1^{\dagger}}, F(h_1,x_1))\\
&=&(g_1,x_1)\lhd(h_1,y_1)
=s(\alpha,\xi)\lhd s(\beta,\eta),\\
t\big((\alpha,\xi)\lhd(\beta,\eta))
&=&t(\beta\otimes\alpha\otimes(\iii_{h_1^{\dagger}}\otimes\beta^{-1}\otimes\iii_{h_2^{\dagger}}),F(\beta,\xi))\\
&=&(h_2\otimes g_2\otimes{h_2^{\dagger}},F(h_2,x_2))\\
&=&(g_2,x_2)\lhd(h_2,y_2)
=t(\alpha,\xi)\lhd t(\beta,\eta),
\end{eqnarray*}
which implies that $\lhd$ is compatible with the source and target maps.
Given two composable maps
%\begin{center}
\begin{equation*}
\begin{array}{l}
\hspace{1mm}\xymatrix@R=5pc@C=5pc{
 \txt{$\big((g_1,x_1),(h_1,y_1)\big)$}
 \ar[r]^{\big((\alpha,\xi),(\beta,\eta)\big)}
 \ar@/_{2.5pc}/[rr]!U_(.4){\txt\scriptsize{$\qquad\qquad\qquad\big((\alpha',\xi'),(\beta',\eta')\big)\circ\big((\alpha,\xi),(\beta,\eta)\big)=\big((\alpha'\alpha,\xi'\xi),(\beta'\beta,\eta'\eta)\big)$}}
 & \txt{$\big((g_2,x_2),(h_2,y_2)\big)$}
 \ar[r]^{\big((\alpha',\xi'),(\beta',\eta')\big)}
 & \txt{$\big((g_3,x_3),(h_3,y_3)\big)$}
 }
 \end{array}
 \end{equation*}
%\end{center}
on the one hand, we have
\begin{eqnarray*}
(\alpha'\alpha,\xi'\xi)\lhd(\beta'\beta,\eta'\eta)
&=&\big(\beta'\beta\otimes\alpha'\alpha\otimes(\iii_{h_1^{\dagger}}\otimes\beta^{-1}\beta'^{-1}\otimes\iii_{h_3^{\dagger}}), F(\beta'\beta,\xi'\xi)\big),
\end{eqnarray*}
on the other hand, since the functor $F$ preserves the composition, we have
\begin{eqnarray*}
&&{\big((\alpha',\xi')\lhd(\beta',\eta')\big)}\circ{\big((\alpha,\xi)\lhd(\beta,\eta)\big)}\\
&=&{\big(\beta'\otimes\alpha'\otimes(\iii_{h_2^{\dagger}}\otimes\beta'^{-1}\otimes\iii_{h_3^{\dagger}}), F(\beta',\xi')\big)}\circ
{\big(\beta\otimes\alpha\otimes(\iii_{h_1^{\dagger}}\otimes\beta^{-1}\otimes\iii_{h_2^{\dagger}}), F(\beta,\xi)\big)}\\
&=&{\big(\beta'\beta\otimes\alpha'\alpha\otimes
\big((\iii_{h_2^{\dagger}}\otimes\beta'^{-1}\otimes\iii_{h_3^{\dagger}})\circ(\iii_{h_1^{\dagger}}\otimes\beta^{-1}\otimes\iii_{h_2^{\dagger}})\big), F(\beta',\xi')\circ F(\beta,\xi)\big)}\\
&=&{\big(\beta'\beta\otimes\alpha'\alpha\otimes
\big((\iii_{h_2^{\dagger}}\otimes\beta'^{-1}\otimes\iii_{h_3^{\dagger}})\circ(\iii_{h_1^{\dagger}}\otimes\beta^{-1}\otimes\iii_{h_2^{\dagger}})\big), F(\beta'\beta,\xi'\xi)\big)}.
\end{eqnarray*}
Since $G$ is a strict monoidal category, for any objects $g,h\in G_0$ and morphism $\alpha:g\to h$, we have
\begin{equation*}
  \xymatrix@R=1.5pc@C=-0.15pc{
  \txt{$g\otimes\III$}
  \ar[d]_-{\alpha\otimes\iii_\III}
  &\txt{$=$}
  &\txt{$g$}
  \ar[d]_-{\alpha}
  &\txt{$=$}
  &\txt{$\III\otimes g$}
  \ar[d]^-{\iii_\III\otimes\alpha}
  &&&&&&&&&&
  \txt{$g\otimes g^{\dagger}$}
  \ar[d]_-{\iii_g\otimes\iii_{g^{\dagger}}}
  &\txt{$=$}
  &\txt{$\III$}
  \ar[d]_-{\iii_\III}
  &\txt{$=$}
  &\txt{$g^{\dagger}\otimes g$}
  \ar[d]^-{\iii_{g^{\dagger}}\otimes\iii_g}\\
  \txt{$h\otimes\III$}
  &\txt{$=$}
  &\txt{$h$}
  &\txt{$=$}
  &\txt{$\III\otimes h$}
  &&&&&&&&&&
  \txt{$g\otimes g^{\dagger}$}
  &\txt{$=$}
  &\txt{$\III$}
  &\txt{$=$}
  &\txt{$g^{\dagger}\otimes g$}
  }
\end{equation*}
which implies that $\alpha\otimes\iii_\III=\alpha=\iii_\III\otimes\alpha$ and $\iii_g\otimes\iii_{g^{\dagger}}=\iii_\III=\iii_{g^{\dagger}}\otimes\iii_g$.
Then we obtain that
\begin{eqnarray*}
\iii_{h_1^{\dagger}}\otimes \beta^{-1}\beta'^{-1}\otimes\iii_{h_3^{\dagger}}
&=&(\iii_{h_1^{\dagger}}\otimes \beta^{-1}\otimes\iii_{h_3^{\dagger}})\circ(\iii_{h_1^{\dagger}}\otimes \beta'^{-1}\otimes\iii_{h_3^{\dagger}})\\
&=&(\iii_{h_1^{\dagger}}\otimes \beta^{-1}\otimes\iii_\III\otimes\iii_{h_3^{\dagger}})\circ(\iii_{h_1^{\dagger}}\otimes\iii_\III\otimes \beta'^{-1}\otimes\iii_{h_3^{\dagger}})\\
&=&(\iii_{h_1^{\dagger}}\otimes \beta^{-1}\otimes\iii_{h_2^{\dagger}}\otimes\iii_{h_2}\otimes\iii_{h_3^{\dagger}})
\circ(\iii_{h_1^{\dagger}}\otimes\iii_{h_2}\otimes\iii_{h_2^{\dagger}}\otimes\beta'^{-1}\otimes\iii_{h_3^{\dagger}})\\
&=&\iii_{h_1^{\dagger}}\otimes \beta^{-1}\otimes\iii_{h_2^{\dagger}}\otimes\beta'^{-1}\otimes\iii_{h_3^{\dagger}},\\
(\iii_{h_2^{\dagger}}\otimes\beta'^{-1}\otimes\iii_{h_3^{\dagger}})\circ(\iii_{h_1^{\dagger}}\otimes\beta^{-1}\otimes\iii_{h_2^{\dagger}})
&=&(\iii_\III\otimes\iii_{h_2^{\dagger}}\otimes\beta'^{-1}\otimes\iii_{h_3^{\dagger}})\circ(\iii_{h_1^{\dagger}}\otimes\beta^{-1}\otimes\iii_{h_2^{\dagger}}\otimes\iii_\III)\\
&=&(\iii_{h_1^{\dagger}}\otimes\iii_{h_1}\otimes\iii_{h_2^{\dagger}}\otimes\beta'^{-1}\otimes\iii_{h_3^{\dagger}})\circ(\iii_{h_1^{\dagger}}\otimes\beta^{-1}\otimes\iii_{h_2^{\dagger}}\otimes\iii_{h_3}\otimes\iii_{h_3^{\dagger}})\\
&=&\iii_{h_1^{\dagger}}\otimes \beta^{-1}\otimes\iii_{h_2^{\dagger}}\otimes\beta'^{-1}\otimes\iii_{h_3^{\dagger}},
\end{eqnarray*}
which implies that $\iii_{h_1^{\dagger}}\otimes \beta^{-1}\beta'^{-1}\otimes\iii_{h_3^{\dagger}}=(\iii_{h_2^{\dagger}}\otimes\beta'^{-1}\otimes\iii_{h_3^{\dagger}})\circ(\iii_{h_1^{\dagger}}\otimes\beta^{-1}\otimes\iii_{h_2^{\dagger}})$.
Thus we have $$(\alpha'\alpha,\xi'\xi)\lhd(\beta'\beta,\eta'\eta)={\big((\alpha',\xi')\lhd(\beta',\eta')\big)}\circ{\big((\alpha,\xi)\lhd(\beta,\eta)\big)},$$
which implies that $\lhd$ preserves the composition. For any $(g,x),(h,y)\in G_0\times X_0$, since the functor $F$ preserves the identity map, we have
\begin{eqnarray*}
(\iii_g,\iii_x)\lhd(\iii_h,\iii_y)
&=&\big(\iii_h\otimes\iii_g\otimes(\iii_{h^{\dagger}}\otimes\iii_h^{-1}\otimes\iii_{h^{\dagger}}), F(\iii_h,\iii_x)\big)\\
&=&\big(\iii_h\otimes\iii_g\otimes\iii_{h^{\dagger}}\otimes\iii_h\otimes\iii_{h^{\dagger}}, \iii_{F(h,x)}\big)\\
&=&\big(\iii_h\otimes\iii_g\otimes\iii_{h^{\dagger}}, \iii_{F(h,x)}\big)\\
&=&\big(\iii_{h\otimes g\otimes{h^{\dagger}}}, \iii_{F(h,x)}\big)\\
&=&\iii_{({h\otimes g\otimes{h^{\dagger}}}, {F(h,x)})}\\
&=&\iii_{(g,x)\lhd(h,y)},
\end{eqnarray*}
which implies that $\lhd$ preserves the identity map. Therefore, $\lhd$ is a functor.

Second, it is obvious that, for any $(h,y)\in G_0\times X_0$, $\bullet\lhd(h,y):G\times X\to G\times X$ defined as follows is functor:
\begin{eqnarray*}
(\bullet\lhd(h,y))(g,x)&=&(g,x)\lhd(h,y)=\big(h\otimes g\otimes h^{\dagger}, F(h,x)\big),\qquad\,\,\,\,\,\forall (g,x),(h,y)\in G_0\times X_0,\\
(\bullet\lhd(h,y))(\alpha,\xi)&=&(\alpha,\xi)\lhd(\iii_h,\iii_x)=\big(\iii_h\otimes\alpha\otimes\iii_{h^{\dagger}}, F(\iii_h,\xi)\big),\quad\forall(\alpha,\xi)\in G_1\times X_1.
\end{eqnarray*}
\emptycomment{
\begin{equation*}
  \xymatrix@R=0.1pc@C=2.5pc{
  \txt{$\bullet\lhd(h,y):G\times X$} \ar[rr] && \txt{$G\times X\qquad\qquad\qquad\qquad\qquad\qquad\qquad\quad\,\,$}  \\
  \txt{$\qquad\qquad\,\,(g_1,x_1)$} \ar@{|->}[rr]
  \ar@<7ex>[ddddd]^{(\alpha,\xi)} && \txt{$(g_1,x_1)\lhd(h,y)=\big(h\otimes g_1\otimes h^{-1}, F(h,x_1)\big)$}
  \ar@<-16ex>[ddddd]^{(\alpha,\xi)\lhd(\iii_h,\iii_x)=\big(\iii_h\otimes\alpha\otimes(\iii_{h^{-1}}\otimes\iii_h^{-1}\otimes\iii_{h^{-1}}), F(\iii_h,\xi)\big)
  =\big(\iii_h\otimes\alpha\otimes\iii_{h^{-1}}, F(\iii_h,\xi)\big)}\\
  &&\\&&\\&&\\&&\\
  \txt{$\qquad\qquad\,\,(g_2,x_2)$} \ar@{|->}[rr] && \txt{$(g_2,x_2)\lhd(h,y)=\big(h\otimes g_2\otimes h^{-1}, F(h,x_2)\big)$}
  }
\end{equation*}
}
Moreover, the functor $\bullet\lhd(h,y)$ is an invertible functor with $\bullet\lhd(h^{\dagger},x)$ as its inverse.

Finally, for any $(g,x),(h,y),(k,z)\in G_0\times X_0$, we have
\begin{eqnarray*}
\big((g,x)\lhd(h,y)\big)\lhd(k,z)
&=&(h\otimes g\otimes h^{\dagger},F(h,x))\lhd(k,z)\\
&=&\big(k\otimes h\otimes g\otimes h^{\dagger}\otimes k^{\dagger},F(k,F(h,x))\big)\\
&=&\big(k\otimes h\otimes g\otimes h^{\dagger}\otimes k^{\dagger},F(k\otimes h,x))\big),\\
\big((g,x)\lhd(k,z)\big)\lhd\big((h,y)\lhd(k,z)\big)
&=&(k\otimes g\otimes k^{\dagger},F(k,x))\lhd(k\otimes h\otimes k^{\dagger},F(k,y))\\
&=&\big(k\otimes h\otimes k^{\dagger}\otimes k\otimes g\otimes k^{\dagger}\otimes k\otimes h^{\dagger}\otimes k^{\dagger},F(k\otimes h\otimes k^{\dagger},F(k,x))\big)\\
&=&\big(k\otimes h\otimes g\otimes h^{\dagger}\otimes k^{\dagger},F(k\otimes h\otimes k^{\dagger}\otimes k,x))\big)\\
&=&\big(k\otimes h\otimes g\otimes h^{\dagger}\otimes k^{\dagger},F(k\otimes h,x))\big),
\end{eqnarray*}
which implies that $\big((g,x)\lhd(h,y)\big)\lhd(k,z)=\big((g,x)\lhd(k,z)\big)\lhd\big((h,y)\lhd(k,z)\big)$.
Therefore, $(G\times X,\lhd)$ is a strict $2$-rack.
\end{proof}

\emptycomment{
\section{$2$-racks and Zamolodchikov Tetrahedron maps}
\begin{pro}
Let $X$ be a small category, $\lhd:X\times X\to X$ a functor and $\frkR_{x,y,z}:(x\lhd y)\lhd z\to (x\lhd z)\lhd(y\lhd z)$ a natural isomorphism. Define a functor $B:X\times X\to X\times X$ as $B(x,y)=(y,x\lhd y)$, where $x$ and $y$ are both either objects or morphisms in $X$. Define a natural transformation $Y:(B\times\Id)(\Id\times B)(B\times\Id)\Rightarrow(\Id\times B)(B\times\Id)(\Id\times B)$ as
$Y_{x,y,z}=(\Id_z,\Id_{y\lhd z},\frkR_{x,y,z})$.
Then $(B,Y)$ is a solution of the Zamolodchikov tetrahedron equation if and only if $(X,\lhd,\frkR)$ is a $2$-rack.
\end{pro}
\begin{proof}
Let $\widetilde{\lhd}:X\times X\to X$ be a functor and define a functor $\widetilde{B}:X\times X\to X\times X$ by $\widetilde{B}(x,y)=(y~\widetilde{\lhd}~x,x)$,
where $x$ and $y$ are both either objects or morphisms in $X$.
Then $B$ and $\widetilde{B}$ are equivalent functors if and only if $\bullet\lhd x$ and $\bullet~\widetilde{\lhd}~x$ are equivalent functors for any $x\in X_0$.
By direct calculations, $(B,Y)$ satisfies the Zamolodchikov tetrahedron equation if and only if $\frkR$ satisfies the distributor identity.
Therefore, $(B,Y)$ is a solution of the Zamolodchikov tetrahedron equation if and only if $(X,\lhd,\frkR)$ is a $2$-rack.
\end{proof}
}

\section*{Appendix}
The left and right side of the Zamolodchikov Tetrahedron equation can be showed as the following diagrams:
\begin{itemize}
\item the left side of the Zamolodchikov Tetrahedron equation:
  {\footnotesize\[
  \xymatrix{
  V\otimes V\otimes V\otimes V
    \ar@/^9pc/[rrrrrrrrr]^{(B\otimes\Id\otimes\Id)(\Id\otimes B\otimes\Id)
    (B\otimes\Id\otimes\Id)(\Id\otimes\Id\otimes B)
    (\Id\otimes B\otimes\Id)(B\otimes\Id\otimes\Id)}_{}="0"
    \ar@/^5pc/[rrrrrrrrr]^{(\Id\otimes B\otimes\Id)(B\otimes\Id\otimes\Id)
    (\Id\otimes B\otimes\Id)(\Id\otimes\Id\otimes B)
     (\Id\otimes B\otimes\Id)(B\otimes\Id\otimes\Id)}_{}="1"
    \ar@/^1pc/[rrrrrrrrr]^{(\Id\otimes B\otimes\Id)(B\otimes\Id\otimes\Id)
    (\Id\otimes\Id\otimes B)(\Id\otimes B\otimes\Id)
    (\Id\otimes\Id\otimes B)(B\otimes\Id\otimes\Id)}_{}="2"
    \ar@/_1pc/[rrrrrrrrr]_{(\Id\otimes B\otimes\Id)(\Id\otimes\Id\otimes B)
    (B\otimes\Id\otimes\Id)(\Id\otimes B\otimes\Id)
    (B\otimes\Id\otimes\Id)(\Id\otimes\Id\otimes B)}_{}="3"
    \ar@/_5pc/[rrrrrrrrr]_{(\Id\otimes B\otimes\Id)(\Id\otimes\Id\otimes B)
    (\Id\otimes B\otimes\Id)(B\otimes\Id\otimes\Id)
    (\Id\otimes B\otimes\Id)(\Id\otimes\Id\otimes B)}_{}="4"
    \ar@/_9pc/[rrrrrrrrr]_{(\Id\otimes\Id\otimes B)(\Id\otimes B\otimes\Id)
    (\Id\otimes\Id\otimes B)(B\otimes\Id\otimes\Id)
    (\Id\otimes B\otimes\Id)(\Id\otimes\Id\otimes B)}_{}="5"
    \ar@{=>}"0"+(-20,-5);"1"+(-20,5) ^{\color{red}{(Y\otimes\Id)\ast\Id_{(\Id\otimes\Id\otimes B)
     (\Id\otimes B\otimes\Id)(B\otimes\Id\otimes\Id)}}}
    \ar@{=>}"1"+(-20,-3);"2"+(-20,5) ^{\color{red}{\Id_{(\Id\otimes B\otimes\Id)(B\otimes\Id\otimes\Id)}
    \ast(\Id\otimes Y)\ast\Id_{B\otimes\Id\otimes\Id}}}
    \ar@{=}"2"+(0,-2);"3"+(0,2)
    \ar@{=>}"3"+(-20,-5);"4"+(-20,3) ^{\color{red}{\Id_{(\Id\otimes B\otimes\Id)(\Id\otimes\Id\otimes B)}
    \ast(Y\otimes\Id)\ast\Id_{\Id\otimes\Id\otimes B}}}
    \ar@{=>}"4"+(-20,-5);"5"+(-20,5) ^{\color{red}{(\Id\otimes Y)\ast\Id_{(B\otimes\Id\otimes\Id)
    (\Id\otimes B\otimes\Id)(\Id\otimes\Id\otimes B)}}}
   &&&&&&&&&  V\otimes V\otimes V\otimes V
  }
  \]}
  \item the right side of the Zamolodchikov Tetrahedron equation:
  {\footnotesize\[
  \xymatrix{
  V\otimes V\otimes V\otimes V
    \ar@/^9pc/[rrrrrrrrr]^{(B\otimes\Id\otimes\Id)(\Id\otimes B\otimes\Id)
    (\Id\otimes\Id\otimes B)(B\otimes\Id\otimes\Id)
    (\Id\otimes B\otimes\Id)(B\otimes\Id\otimes\Id)}_{}="0"
    \ar@/^5pc/[rrrrrrrrr]^{(B\otimes\Id\otimes\Id)(\Id\otimes B\otimes\Id)
    (\Id\otimes\Id\otimes B)(\Id\otimes B\otimes\Id)
     (B\otimes\Id\otimes\Id)(\Id\otimes B\otimes\Id)}_{}="1"
    \ar@/^1pc/[rrrrrrrrr]^{(B\otimes\Id\otimes\Id)(\Id\otimes\Id\otimes B)
    (\Id\otimes B\otimes\Id)(\Id\otimes\Id\otimes B)
    (B\otimes\Id\otimes\Id)(\Id\otimes B\otimes\Id)}_{}="2"
    \ar@/_1pc/[rrrrrrrrr]_{(\Id\otimes\Id\otimes B)(B\otimes\Id\otimes\Id)
    (\Id\otimes B\otimes\Id)(B\otimes\Id\otimes\Id)
    (\Id\otimes\Id\otimes B)(\Id\otimes B\otimes\Id)}_{}="3"
    \ar@/_5pc/[rrrrrrrrr]_{(\Id\otimes\Id\otimes B)(\Id\otimes B\otimes\Id)
    (B\otimes\Id\otimes\Id)(\Id\otimes B\otimes\Id)
    (\Id\otimes\Id\otimes B)(\Id\otimes B\otimes\Id)}_{}="4"
    \ar@/_9pc/[rrrrrrrrr]_{(\Id\otimes\Id\otimes B)(\Id\otimes B\otimes\Id)
    (B\otimes\Id\otimes\Id)(\Id\otimes\Id\otimes B)
    (\Id\otimes B\otimes\Id)(\Id\otimes\Id\otimes B)}_{}="5"
    \ar@{=>}"0"+(-20,-5);"1"+(-20,5) ^{\color{red}{(\Id_{B\otimes\Id\otimes\Id)(\Id\otimes B\otimes\Id)(\Id\otimes\Id\otimes B)}\ast(Y\otimes\Id)}}
    \ar@{=>}"1"+(-20,-3);"2"+(-20,5) ^{\color{red}{\Id_{B\otimes\Id\otimes\Id}\ast(\Id\otimes Y)\ast\Id_{(B\otimes\Id\otimes\Id)(\Id\otimes B\otimes\Id)}}}
    \ar@{=}"2"+(0,-2);"3"+(0,2)
    \ar@{=>}"3"+(-20,-5);"4"+(-20,3) ^{\color{red}{\Id_{\Id\otimes\Id\otimes B}
    \ast(Y\otimes\Id)\ast\Id_{(\Id\otimes\Id\otimes B)(\Id\otimes B\otimes\Id)}}}
    \ar@{=>}"4"+(-20,-5);"5"+(-20,5) ^{\color{red}{\Id_{(\Id\otimes\Id\otimes B)(\Id\otimes B\otimes\Id)(B\otimes\Id\otimes\Id)}\ast(\Id\otimes Y)}}
  &&&&&&&&&  V\otimes V\otimes V\otimes V
  }
  \]}
\end{itemize}

In fact, each of the above functors can be associated with a braiding diagram. For example, the braid diagram corresponding to
$$(B\otimes\Id\otimes\Id)(\Id\otimes B\otimes\Id)(B\otimes\Id\otimes\Id)(\Id\otimes\Id\otimes B)(\Id\otimes B\otimes\Id)(B\otimes\Id\otimes\Id)$$
is the following graph from the top to the bottom:
\[
\xy
   (0,0)*{}="00";
   (5,0)*{}="10";
   (10,0)*{}="20";
   (15,0)*{}="30";
   (0,5)*{}="01";
   (5,5)*{}="11";
   (10,5)*{}="21";
   (15,5)*{}="31";
   (0,10)*{}="02";
   (5,10)*{}="12";
   (10,10)*{}="22";
   (15,10)*{}="32";
   (0,15)*{}="03";
   (5,15)*{}="13";
   (10,15)*{}="23";
   (15,15)*{}="33";
   (0,-5)*{}="0-1";
   (5,-5)*{}="1-1";
   (10,-5)*{}="2-1";
   (15,-5)*{}="3-1";
   (0,-10)*{}="0-2";
   (5,-10)*{}="1-2";
   (10,-10)*{}="2-2";
   (15,-10)*{}="3-2";
   (0,-15)*{}="0-3";
   (5,-15)*{}="1-3";
   (10,-15)*{}="2-3";
   (15,-15)*{}="3-3";
   (2,13)*{}="213";
   (3,12)*{}="312";
   (7,8)*{}="78";
   (8,7)*{}="87";
   (12,3)*{}="123";
   (13,2)*{}="132";
   (2,-2)*{}="c2-2";
   (3,-3)*{}="c3-3";
   (7,-7)*{}="7-7";
   (8,-8)*{}="8-8";
   (2,-12)*{}="2-12";
   (3,-13)*{}="3-13";
   "0-3";"2-1" **[red]@{-};
   "2-1";"20" **[red]@{-};
   "20";"31" **[red]@{-};
   "31";"33" **[red]@{-};
   "1-3";"3-13" **[green]@{-};
   "2-12";"0-2" **[green]@{-};
   "0-2";"0-1" **[green]@{-};
   "0-1";"10" **[green]@{-};
   "10";"11" **[green]@{-};
   "11";"22" **[green]@{-};
   "22";"23" **[green]@{-};
   "2-3";"2-2" **[blue]@{-};
   "2-2";"8-8" **[blue]@{-};
   "7-7";"c3-3" **[blue]@{-};
   "c2-2";"00" **[blue]@{-};
   "00";"02" **[blue]@{-};
   "02";"13" **[blue]@{-};
   "3-3";"30" **@{-};
   "30";"132" **@{-};
   "123";"87" **@{-};
   "78";"312" **@{-};
   "213";"03" **@{-};
\endxy
\]
where each crossing represents the action of $B$ on two strands.
Then a natural transformation represents a transformation process between two braiding diagrams.
For example, the natural transformation
$$(Y\otimes\Id)\ast\Id_{(\Id\otimes\Id\otimes B)(\Id\otimes B\otimes\Id)(B\otimes\Id\otimes\Id)}$$
corresponds to the following transformation process:
{\footnotesize\[
 \xy
   (0,0)*{}="00";
   (5,0)*{}="10";
   (10,0)*{}="20";
   (15,0)*{}="30";
   (0,5)*{}="01";
   (5,5)*{}="11";
   (10,5)*{}="21";
   (15,5)*{}="31";
   (0,10)*{}="02";
   (5,10)*{}="12";
   (10,10)*{}="22";
   (15,10)*{}="32";
   (0,15)*{}="03";
   (5,15)*{}="13";
   (10,15)*{}="23";
   (15,15)*{}="33";
   (0,-5)*{}="0-1";
   (5,-5)*{}="1-1";
   (10,-5)*{}="2-1";
   (15,-5)*{}="3-1";
   (0,-10)*{}="0-2";
   (5,-10)*{}="1-2";
   (10,-10)*{}="2-2";
   (15,-10)*{}="3-2";
   (0,-15)*{}="0-3";
   (5,-15)*{}="1-3";
   (10,-15)*{}="2-3";
   (15,-15)*{}="3-3";
   (2,13)*{}="213";
   (3,12)*{}="312";
   (7,8)*{}="78";
   (8,7)*{}="87";
   (12,3)*{}="123";
   (13,2)*{}="132";
   (2,-2)*{}="c2-2";
   (3,-3)*{}="c3-3";
   (7,-7)*{}="7-7";
   (8,-8)*{}="8-8";
   (2,-12)*{}="2-12";
   (3,-13)*{}="3-13";
   "0-3";"2-1" **[red]@{-};
   "2-1";"20" **[red]@{-};
   "20";"31" **[red]@{-};
   "31";"33" **[red]@{-};
   "1-3";"3-13" **[green]@{-};
   "2-12";"0-2" **[green]@{-};
   "0-2";"0-1" **[green]@{-};
   "0-1";"10" **[green]@{-};
   "10";"11" **[green]@{-};
   "11";"22" **[green]@{-};
   "22";"23" **[green]@{-};
   "2-3";"2-2" **[blue]@{-};
   "2-2";"8-8" **[blue]@{-};
   "7-7";"c3-3" **[blue]@{-};
   "c2-2";"00" **[blue]@{-};
   "00";"02" **[blue]@{-};
   "02";"13" **[blue]@{-};
   "3-3";"30" **@{-};
   "30";"132" **@{-};
   "123";"87" **@{-};
   "78";"312" **@{-};
   "213";"03" **@{-};
\endxy
\quad \longrightarrow \quad
  \xy
   (0,0)*{}="00";
   (5,0)*{}="10";
   (10,0)*{}="20";
   (15,0)*{}="30";
   (0,5)*{}="01";
   (5,5)*{}="11";
   (10,5)*{}="21";
   (15,5)*{}="31";
   (0,10)*{}="02";
   (5,10)*{}="12";
   (10,10)*{}="22";
   (15,10)*{}="32";
   (0,15)*{}="03";
   (5,15)*{}="13";
   (10,15)*{}="23";
   (15,15)*{}="33";
   (0,-5)*{}="0-1";
   (5,-5)*{}="1-1";
   (10,-5)*{}="2-1";
   (15,-5)*{}="3-1";
   (0,-10)*{}="0-2";
   (5,-10)*{}="1-2";
   (10,-10)*{}="2-2";
   (15,-10)*{}="3-2";
   (0,-15)*{}="0-3";
   (5,-15)*{}="1-3";
   (10,-15)*{}="2-3";
   (15,-15)*{}="3-3";
   (2,13)*{}="213";
   (3,12)*{}="312";
   (7,8)*{}="78";
   (8,7)*{}="87";
   (12,3)*{}="123";
   (13,2)*{}="132";
   (7,-2)*{}="7-2";
   (8,-3)*{}="8-3";
   (2,-7)*{}="2-7";
   (3,-8)*{}="3-8";
   (7,-12)*{}="7-12";
   (8,-13)*{}="8-13";
   "0-3";"0-2" **[red]@{-};
   "0-2";"31" **[red]@{-};
   "31";"33" **[red]@{-};
   "1-3";"2-2" **[green]@{-};
   "2-2";"2-1" **[green]@{-};
   "2-1";"8-3" **[green]@{-};
   "7-2";"10" **[green]@{-};
   "10";"11" **[green]@{-};
   "11";"22" **[green]@{-};
   "22";"23" **[green]@{-};
   "2-3";"8-13" **[blue]@{-};
   "7-12";"3-8" **[blue]@{-};
   "2-7";"0-1" **[blue]@{-};
   "0-1";"02" **[blue]@{-};
   "02";"13" **[blue]@{-};
   "3-3";"30" **@{-};
   "30";"132" **@{-};
   "123";"87" **@{-};
   "78";"312" **@{-};
   "213";"03" **@{-};
\endxy
\]}
Therefore, the left and right sides of the Zamolodchikov Tetrahedron equation give rise to two distinct transformation processes from the following left braided diagram to the right one:
\[
\xy
   (0,0)*{}="00";
   (5,0)*{}="10";
   (10,0)*{}="20";
   (15,0)*{}="30";
   (0,5)*{}="01";
   (5,5)*{}="11";
   (10,5)*{}="21";
   (15,5)*{}="31";
   (0,10)*{}="02";
   (5,10)*{}="12";
   (10,10)*{}="22";
   (15,10)*{}="32";
   (0,15)*{}="03";
   (5,15)*{}="13";
   (10,15)*{}="23";
   (15,15)*{}="33";
   (0,-5)*{}="0-1";
   (5,-5)*{}="1-1";
   (10,-5)*{}="2-1";
   (15,-5)*{}="3-1";
   (0,-10)*{}="0-2";
   (5,-10)*{}="1-2";
   (10,-10)*{}="2-2";
   (15,-10)*{}="3-2";
   (0,-15)*{}="0-3";
   (5,-15)*{}="1-3";
   (10,-15)*{}="2-3";
   (15,-15)*{}="3-3";
   (2,13)*{}="213";
   (3,12)*{}="312";
   (7,8)*{}="78";
   (8,7)*{}="87";
   (12,3)*{}="123";
   (13,2)*{}="132";
   (2,-2)*{}="c2-2";
   (3,-3)*{}="c3-3";
   (7,-7)*{}="7-7";
   (8,-8)*{}="8-8";
   (2,-12)*{}="2-12";
   (3,-13)*{}="3-13";
   "0-3";"2-1" **[red]@{-};
   "2-1";"20" **[red]@{-};
   "20";"31" **[red]@{-};
   "31";"33" **[red]@{-};
   "1-3";"3-13" **[green]@{-};
   "2-12";"0-2" **[green]@{-};
   "0-2";"0-1" **[green]@{-};
   "0-1";"10" **[green]@{-};
   "10";"11" **[green]@{-};
   "11";"22" **[green]@{-};
   "22";"23" **[green]@{-};
   "2-3";"2-2" **[blue]@{-};
   "2-2";"8-8" **[blue]@{-};
   "7-7";"c3-3" **[blue]@{-};
   "c2-2";"00" **[blue]@{-};
   "00";"02" **[blue]@{-};
   "02";"13" **[blue]@{-};
   "3-3";"30" **@{-};
   "30";"132" **@{-};
   "123";"87" **@{-};
   "78";"312" **@{-};
   "213";"03" **@{-};
\endxy
\quad=\quad
\xy
   (0,0)*{}="00";
   (5,0)*{}="10";
   (10,0)*{}="20";
   (15,0)*{}="30";
   (0,5)*{}="01";
   (5,5)*{}="11";
   (10,5)*{}="21";
   (15,5)*{}="31";
   (0,10)*{}="02";
   (5,10)*{}="12";
   (10,10)*{}="22";
   (15,10)*{}="32";
   (0,15)*{}="03";
   (5,15)*{}="13";
   (10,15)*{}="23";
   (15,15)*{}="33";
   (0,-5)*{}="0-1";
   (5,-5)*{}="1-1";
   (10,-5)*{}="2-1";
   (15,-5)*{}="3-1";
   (0,-10)*{}="0-2";
   (5,-10)*{}="1-2";
   (10,-10)*{}="2-2";
   (15,-10)*{}="3-2";
   (0,-15)*{}="0-3";
   (5,-15)*{}="1-3";
   (10,-15)*{}="2-3";
   (15,-15)*{}="3-3";
   (2,13)*{}="213";
   (3,12)*{}="312";
   (7,8)*{}="78";
   (8,7)*{}="87";
   (2,3)*{}="c23";
   (3,2)*{}="c32";
   (12,-2)*{}="12-2";
   (13,-3)*{}="13-3";
   (7,-7)*{}="7-7";
   (8,-8)*{}="8-8";
   (2,-12)*{}="2-12";
   (3,-13)*{}="3-13";
   "0-3";"30" **[red]@{-};
   "30";"33" **[red]@{-};
   "1-3";"3-13" **[green]@{-};
   "2-12";"0-2" **[green]@{-};
   "0-2";"00" **[green]@{-};
   "00";"22" **[green]@{-};
   "22";"23" **[green]@{-};
   "2-3";"2-2" **[blue]@{-};
   "2-2";"8-8" **[blue]@{-};
   "7-7";"1-1" **[blue]@{-};
   "1-1";"10" **[blue]@{-};
   "10";"c32" **[blue]@{-};
   "c23";"01" **[blue]@{-};
   "01";"02" **[blue]@{-};
   "02";"13" **[blue]@{-};
   "3-3";"3-1" **@{-};
   "3-1";"13-3" **@{-};
   "12-2";"20" **@{-};
   "20";"21" **@{-};
   "21";"87" **@{-};
   "78";"312" **@{-};
   "213";"03" **@{-};
\endxy
\quad\longrightarrow\quad
\xy
   (0,0)*{}="00";
   (5,0)*{}="10";
   (10,0)*{}="20";
   (15,0)*{}="30";
   (0,5)*{}="01";
   (5,5)*{}="11";
   (10,5)*{}="21";
   (15,5)*{}="31";
   (0,10)*{}="02";
   (5,10)*{}="12";
   (10,10)*{}="22";
   (15,10)*{}="32";
   (0,15)*{}="03";
   (5,15)*{}="13";
   (10,15)*{}="23";
   (15,15)*{}="33";
   (0,-5)*{}="0-1";
   (5,-5)*{}="1-1";
   (10,-5)*{}="2-1";
   (15,-5)*{}="3-1";
   (0,-10)*{}="0-2";
   (5,-10)*{}="1-2";
   (10,-10)*{}="2-2";
   (15,-10)*{}="3-2";
   (0,-15)*{}="0-3";
   (5,-15)*{}="1-3";
   (10,-15)*{}="2-3";
   (15,-15)*{}="3-3";
   (12,13)*{}="1213";
   (13,12)*{}="1312";
   (7,8)*{}="78";
   (8,7)*{}="87";
   (2,3)*{}="c23";
   (3,2)*{}="c32";
   (12,-2)*{}="12-2";
   (13,-3)*{}="13-3";
   (7,-7)*{}="7-7";
   (8,-8)*{}="8-8";
   (12,-12)*{}="12-12";
   (13,-13)*{}="13-13";
   "0-3";"00" **[red]@{-};
   "00";"33" **[red]@{-};
   "1-3";"1-2" **[green]@{-};
   "1-2";"30" **[green]@{-};
   "30";"32" **[green]@{-};
   "32";"1312" **[green]@{-};
   "1213";"23" **[green]@{-};
   "2-3";"3-2" **[blue]@{-};
   "3-2";"3-1" **[blue]@{-};
   "3-1";"13-3" **[blue]@{-};
   "12-2";"20" **[blue]@{-};
   "20";"21" **[blue]@{-};
   "21";"87" **[blue]@{-};
   "78";"12" **[blue]@{-};
   "12";"13" **[blue]@{-};
   "3-3";"13-13" **@{-};
   "12-12";"8-8" **@{-};
   "7-7";"1-1" **@{-};
   "1-1";"10" **@{-};
   "10";"c32" **@{-};
   "c23";"01" **@{-};
   "01";"03" **@{-};
\endxy
\quad=\quad
\xy
   (0,0)*{}="00";
   (5,0)*{}="10";
   (10,0)*{}="20";
   (15,0)*{}="30";
   (0,5)*{}="01";
   (5,5)*{}="11";
   (10,5)*{}="21";
   (15,5)*{}="31";
   (0,10)*{}="02";
   (5,10)*{}="12";
   (10,10)*{}="22";
   (15,10)*{}="32";
   (0,15)*{}="03";
   (5,15)*{}="13";
   (10,15)*{}="23";
   (15,15)*{}="33";
   (0,-5)*{}="0-1";
   (5,-5)*{}="1-1";
   (10,-5)*{}="2-1";
   (15,-5)*{}="3-1";
   (0,-10)*{}="0-2";
   (5,-10)*{}="1-2";
   (10,-10)*{}="2-2";
   (15,-10)*{}="3-2";
   (0,-15)*{}="0-3";
   (5,-15)*{}="1-3";
   (10,-15)*{}="2-3";
   (15,-15)*{}="3-3";
   (12,13)*{}="1213";
   (13,12)*{}="1312";
   (7,8)*{}="78";
   (8,7)*{}="87";
   (12,3)*{}="123";
   (13,2)*{}="132";
   (2,-2)*{}="c2-2";
   (3,-3)*{}="c3-3";
   (7,-7)*{}="7-7";
   (8,-8)*{}="8-8";
   (12,-12)*{}="12-12";
   (13,-13)*{}="13-13";
   "0-3";"0-1" **[red]@{-};
   "0-1";"10" **[red]@{-};
   "10";"11" **[red]@{-};
   "11";"33" **[red]@{-};
   "1-3";"1-2" **[green]@{-};
   "1-2";"2-1" **[green]@{-};
   "2-1";"20" **[green]@{-};
   "20";"31" **[green]@{-};
   "31";"32" **[green]@{-};
   "32";"1312" **[green]@{-};
   "1213";"23" **[green]@{-};
   "2-3";"3-2" **[blue]@{-};
   "3-2";"30" **[blue]@{-};
   "30";"132" **[blue]@{-};
   "123";"87" **[blue]@{-};
   "78";"12" **[blue]@{-};
   "12";"13" **[blue]@{-};
   "3-3";"13-13" **@{-};
   "12-12";"8-8" **@{-};
   "7-7";"c3-3" **@{-};
   "c2-2";"00" **@{-};
   "00";"03" **@{-};
\endxy
\]
More precisely,
\begin{itemize}
\item the left side of the Zamolodchikov Tetrahedron Equation corresponds to the following process:
{\footnotesize\[
 \xy
   (0,0)*{}="00";
   (5,0)*{}="10";
   (10,0)*{}="20";
   (15,0)*{}="30";
   (0,5)*{}="01";
   (5,5)*{}="11";
   (10,5)*{}="21";
   (15,5)*{}="31";
   (0,10)*{}="02";
   (5,10)*{}="12";
   (10,10)*{}="22";
   (15,10)*{}="32";
   (0,15)*{}="03";
   (5,15)*{}="13";
   (10,15)*{}="23";
   (15,15)*{}="33";
   (0,-5)*{}="0-1";
   (5,-5)*{}="1-1";
   (10,-5)*{}="2-1";
   (15,-5)*{}="3-1";
   (0,-10)*{}="0-2";
   (5,-10)*{}="1-2";
   (10,-10)*{}="2-2";
   (15,-10)*{}="3-2";
   (0,-15)*{}="0-3";
   (5,-15)*{}="1-3";
   (10,-15)*{}="2-3";
   (15,-15)*{}="3-3";
   (2,13)*{}="213";
   (3,12)*{}="312";
   (7,8)*{}="78";
   (8,7)*{}="87";
   (12,3)*{}="123";
   (13,2)*{}="132";
   (2,-2)*{}="c2-2";
   (3,-3)*{}="c3-3";
   (7,-7)*{}="7-7";
   (8,-8)*{}="8-8";
   (2,-12)*{}="2-12";
   (3,-13)*{}="3-13";
   "0-3";"2-1" **[red]@{-};
   "2-1";"20" **[red]@{-};
   "20";"31" **[red]@{-};
   "31";"33" **[red]@{-};
   "1-3";"3-13" **[green]@{-};
   "2-12";"0-2" **[green]@{-};
   "0-2";"0-1" **[green]@{-};
   "0-1";"10" **[green]@{-};
   "10";"11" **[green]@{-};
   "11";"22" **[green]@{-};
   "22";"23" **[green]@{-};
   "2-3";"2-2" **[blue]@{-};
   "2-2";"8-8" **[blue]@{-};
   "7-7";"c3-3" **[blue]@{-};
   "c2-2";"00" **[blue]@{-};
   "00";"02" **[blue]@{-};
   "02";"13" **[blue]@{-};
   "3-3";"30" **@{-};
   "30";"132" **@{-};
   "123";"87" **@{-};
   "78";"312" **@{-};
   "213";"03" **@{-};
\endxy
\quad \longrightarrow \quad
  \xy
   (0,0)*{}="00";
   (5,0)*{}="10";
   (10,0)*{}="20";
   (15,0)*{}="30";
   (0,5)*{}="01";
   (5,5)*{}="11";
   (10,5)*{}="21";
   (15,5)*{}="31";
   (0,10)*{}="02";
   (5,10)*{}="12";
   (10,10)*{}="22";
   (15,10)*{}="32";
   (0,15)*{}="03";
   (5,15)*{}="13";
   (10,15)*{}="23";
   (15,15)*{}="33";
   (0,-5)*{}="0-1";
   (5,-5)*{}="1-1";
   (10,-5)*{}="2-1";
   (15,-5)*{}="3-1";
   (0,-10)*{}="0-2";
   (5,-10)*{}="1-2";
   (10,-10)*{}="2-2";
   (15,-10)*{}="3-2";
   (0,-15)*{}="0-3";
   (5,-15)*{}="1-3";
   (10,-15)*{}="2-3";
   (15,-15)*{}="3-3";
   (2,13)*{}="213";
   (3,12)*{}="312";
   (7,8)*{}="78";
   (8,7)*{}="87";
   (12,3)*{}="123";
   (13,2)*{}="132";
   (7,-2)*{}="7-2";
   (8,-3)*{}="8-3";
   (2,-7)*{}="2-7";
   (3,-8)*{}="3-8";
   (7,-12)*{}="7-12";
   (8,-13)*{}="8-13";
   "0-3";"0-2" **[red]@{-};
   "0-2";"31" **[red]@{-};
   "31";"33" **[red]@{-};
   "1-3";"2-2" **[green]@{-};
   "2-2";"2-1" **[green]@{-};
   "2-1";"8-3" **[green]@{-};
   "7-2";"10" **[green]@{-};
   "10";"11" **[green]@{-};
   "11";"22" **[green]@{-};
   "22";"23" **[green]@{-};
   "2-3";"8-13" **[blue]@{-};
   "7-12";"3-8" **[blue]@{-};
   "2-7";"0-1" **[blue]@{-};
   "0-1";"02" **[blue]@{-};
   "02";"13" **[blue]@{-};
   "3-3";"30" **@{-};
   "30";"132" **@{-};
   "123";"87" **@{-};
   "78";"312" **@{-};
   "213";"03" **@{-};
\endxy
\quad \longrightarrow \quad
\xy
   (0,0)*{}="00";
   (5,0)*{}="10";
   (10,0)*{}="20";
   (15,0)*{}="30";
   (0,5)*{}="01";
   (5,5)*{}="11";
   (10,5)*{}="21";
   (15,5)*{}="31";
   (0,10)*{}="02";
   (5,10)*{}="12";
   (10,10)*{}="22";
   (15,10)*{}="32";
   (0,15)*{}="03";
   (5,15)*{}="13";
   (10,15)*{}="23";
   (15,15)*{}="33";
   (0,-5)*{}="0-1";
   (5,-5)*{}="1-1";
   (10,-5)*{}="2-1";
   (15,-5)*{}="3-1";
   (0,-10)*{}="0-2";
   (5,-10)*{}="1-2";
   (10,-10)*{}="2-2";
   (15,-10)*{}="3-2";
   (0,-15)*{}="0-3";
   (5,-15)*{}="1-3";
   (10,-15)*{}="2-3";
   (15,-15)*{}="3-3";
   (2,13)*{}="213";
   (3,12)*{}="312";
   (12,8)*{}="128";
   (13,7)*{}="137";
   (7,3)*{}="73";
   (8,2)*{}="82";
   (12,-2)*{}="12-2";
   (13,-3)*{}="13-3";
   (2,-7)*{}="2-7";
   (3,-8)*{}="3-8";
   (7,-12)*{}="7-12";
   (8,-13)*{}="8-13";
   "0-3";"0-2" **[red]@{-};
   "0-2";"1-1" **[red]@{-};
   "1-1";"10" **[red]@{-};
   "10";"32" **[red]@{-};
   "32";"33" **[red]@{-};
   "1-3";"2-2" **[green]@{-};
   "2-2";"2-1" **[green]@{-};
   "2-1";"30" **[green]@{-};
   "30";"31" **[green]@{-};
   "31";"137" **[green]@{-};
   "128";"22" **[green]@{-};
   "22";"23" **[green]@{-};
   "2-3";"8-13" **[blue]@{-};
   "7-12";"3-8" **[blue]@{-};
   "2-7";"0-1" **[blue]@{-};
   "0-1";"02" **[blue]@{-};
   "02";"13" **[blue]@{-};
   "3-3";"3-1" **@{-};
   "3-1";"13-3" **@{-};
   "12-2";"82" **@{-};
   "73";"11" **@{-};
   "11";"12" **@{-};
   "12";"312" **@{-};
   "213";"03" **@{-};
\endxy
\quad = \quad
\xy
   (0,0)*{}="00";
   (5,0)*{}="10";
   (10,0)*{}="20";
   (15,0)*{}="30";
   (0,5)*{}="01";
   (5,5)*{}="11";
   (10,5)*{}="21";
   (15,5)*{}="31";
   (0,10)*{}="02";
   (5,10)*{}="12";
   (10,10)*{}="22";
   (15,10)*{}="32";
   (0,15)*{}="03";
   (5,15)*{}="13";
   (10,15)*{}="23";
   (15,15)*{}="33";
   (0,-5)*{}="0-1";
   (5,-5)*{}="1-1";
   (10,-5)*{}="2-1";
   (15,-5)*{}="3-1";
   (0,-10)*{}="0-2";
   (5,-10)*{}="1-2";
   (10,-10)*{}="2-2";
   (15,-10)*{}="3-2";
   (0,-15)*{}="0-3";
   (5,-15)*{}="1-3";
   (10,-15)*{}="2-3";
   (15,-15)*{}="3-3";
   (12,13)*{}="1213";
   (13,12)*{}="1312";
   (2,8)*{}="28";
   (3,7)*{}="37";
   (7,3)*{}="73";
   (8,2)*{}="82";
   (2,-2)*{}="2-2";
   (3,-3)*{}="3--3";
   (12,-7)*{}="12-7";
   (13,-8)*{}="13-8";
   (7,-12)*{}="7-12";
   (8,-13)*{}="8-13";
   "0-3";"0-1" **[red]@{-};
   "0-1";"21" **[red]@{-};
   "21";"22" **[red]@{-};
   "22";"33" **[red]@{-};
   "1-3";"3-1" **[green]@{-};
   "3-1";"32" **[green]@{-};
   "32";"1312" **[green]@{-};
   "1213";"23" **[green]@{-};
   "2-3";"8-13" **[blue]@{-};
   "7-12";"1-2" **[blue]@{-};
   "1-2";"1-1" **[blue]@{-};
   "1-1";"3--3" **[blue]@{-};
   "2-2";"00" **[blue]@{-};
   "00";"01" **[blue]@{-};
   "01";"12" **[blue]@{-};
   "12";"13" **[blue]@{-};
   "3-3";"3-2" **@{-};
   "3-2";"13-8" **@{-};
   "12-7";"2-1" **@{-};
   "2-1";"20" **@{-};
   "20";"73" **@{-};
   "73";"37" **@{-};
   "28";"02" **@{-};
   "02";"03" **@{-};
\endxy
\quad \longrightarrow \quad
\xy
   (0,0)*{}="00";
   (5,0)*{}="10";
   (10,0)*{}="20";
   (15,0)*{}="30";
   (0,5)*{}="01";
   (5,5)*{}="11";
   (10,5)*{}="21";
   (15,5)*{}="31";
   (0,10)*{}="02";
   (5,10)*{}="12";
   (10,10)*{}="22";
   (15,10)*{}="32";
   (0,15)*{}="03";
   (5,15)*{}="13";
   (10,15)*{}="23";
   (15,15)*{}="33";
   (0,-5)*{}="0-1";
   (5,-5)*{}="1-1";
   (10,-5)*{}="2-1";
   (15,-5)*{}="3-1";
   (0,-10)*{}="0-2";
   (5,-10)*{}="1-2";
   (10,-10)*{}="2-2";
   (15,-10)*{}="3-2";
   (0,-15)*{}="0-3";
   (5,-15)*{}="1-3";
   (10,-15)*{}="2-3";
   (15,-15)*{}="3-3";
   (12,13)*{}="1213";
   (13,12)*{}="1312";
   (7,8)*{}="78";
   (8,7)*{}="87";
   (2,3)*{}="c23";
   (3,2)*{}="c32";
   (7,-2)*{}="7-2";
   (8,-3)*{}="8-3";
   (12,-7)*{}="12-7";
   (13,-8)*{}="13-8";
   (7,-12)*{}="7-12";
   (8,-13)*{}="8-13";
   "0-3";"00" **[red]@{-};
   "00";"33" **[red]@{-};
   "1-3";"3-1" **[green]@{-};
   "3-1";"32" **[green]@{-};
   "32";"1312" **[green]@{-};
   "1213";"23" **[green]@{-};
   "2-3";"8-13" **[blue]@{-};
   "7-12";"1-2" **[blue]@{-};
   "1-2";"1-1" **[blue]@{-};
   "1-1";"20" **[blue]@{-};
   "20";"21" **[blue]@{-};
   "21";"87" **[blue]@{-};
   "78";"12" **[blue]@{-};
   "12";"13" **[blue]@{-};
   "3-3";"3-2" **@{-};
   "3-2";"13-8" **@{-};
   "12-7";"8-3" **@{-};
   "7-2";"c32" **@{-};
   "c23";"01" **@{-};
   "01";"03" **@{-};
\endxy
\quad \longrightarrow \quad
\xy
   (0,0)*{}="00";
   (5,0)*{}="10";
   (10,0)*{}="20";
   (15,0)*{}="30";
   (0,5)*{}="01";
   (5,5)*{}="11";
   (10,5)*{}="21";
   (15,5)*{}="31";
   (0,10)*{}="02";
   (5,10)*{}="12";
   (10,10)*{}="22";
   (15,10)*{}="32";
   (0,15)*{}="03";
   (5,15)*{}="13";
   (10,15)*{}="23";
   (15,15)*{}="33";
   (0,-5)*{}="0-1";
   (5,-5)*{}="1-1";
   (10,-5)*{}="2-1";
   (15,-5)*{}="3-1";
   (0,-10)*{}="0-2";
   (5,-10)*{}="1-2";
   (10,-10)*{}="2-2";
   (15,-10)*{}="3-2";
   (0,-15)*{}="0-3";
   (5,-15)*{}="1-3";
   (10,-15)*{}="2-3";
   (15,-15)*{}="3-3";
   (12,13)*{}="1213";
   (13,12)*{}="1312";
   (7,8)*{}="78";
   (8,7)*{}="87";
   (2,3)*{}="c23";
   (3,2)*{}="c32";
   (12,-2)*{}="12-2";
   (13,-3)*{}="13-3";
   (7,-7)*{}="7-7";
   (8,-8)*{}="8-8";
   (12,-12)*{}="12-12";
   (13,-13)*{}="13-13";
   "0-3";"00" **[red]@{-};
   "00";"33" **[red]@{-};
   "1-3";"1-2" **[green]@{-};
   "1-2";"30" **[green]@{-};
   "30";"32" **[green]@{-};
   "32";"1312" **[green]@{-};
   "1213";"23" **[green]@{-};
   "2-3";"3-2" **[blue]@{-};
   "3-2";"3-1" **[blue]@{-};
   "3-1";"13-3" **[blue]@{-};
   "12-2";"20" **[blue]@{-};
   "20";"21" **[blue]@{-};
   "21";"87" **[blue]@{-};
   "78";"12" **[blue]@{-};
   "12";"13" **[blue]@{-};
   "3-3";"13-13" **@{-};
   "12-12";"8-8" **@{-};
   "7-7";"1-1" **@{-};
   "1-1";"10" **@{-};
   "10";"c32" **@{-};
   "c23";"01" **@{-};
   "01";"03" **@{-};
\endxy
\]}
  \item the right side of the Zamolodchikov Tetrahedron Equation corresponds to the following process:
{\footnotesize\[
 \xy
   (0,0)*{}="00";
   (5,0)*{}="10";
   (10,0)*{}="20";
   (15,0)*{}="30";
   (0,5)*{}="01";
   (5,5)*{}="11";
   (10,5)*{}="21";
   (15,5)*{}="31";
   (0,10)*{}="02";
   (5,10)*{}="12";
   (10,10)*{}="22";
   (15,10)*{}="32";
   (0,15)*{}="03";
   (5,15)*{}="13";
   (10,15)*{}="23";
   (15,15)*{}="33";
   (0,-5)*{}="0-1";
   (5,-5)*{}="1-1";
   (10,-5)*{}="2-1";
   (15,-5)*{}="3-1";
   (0,-10)*{}="0-2";
   (5,-10)*{}="1-2";
   (10,-10)*{}="2-2";
   (15,-10)*{}="3-2";
   (0,-15)*{}="0-3";
   (5,-15)*{}="1-3";
   (10,-15)*{}="2-3";
   (15,-15)*{}="3-3";
   (2,13)*{}="213";
   (3,12)*{}="312";
   (7,8)*{}="78";
   (8,7)*{}="87";
   (2,3)*{}="c23";
   (3,2)*{}="c32";
   (12,-2)*{}="12-2";
   (13,-3)*{}="13-3";
   (7,-7)*{}="7-7";
   (8,-8)*{}="8-8";
   (2,-12)*{}="2-12";
   (3,-13)*{}="3-13";
   "0-3";"30" **[red]@{-};
   "30";"33" **[red]@{-};
   "1-3";"3-13" **[green]@{-};
   "2-12";"0-2" **[green]@{-};
   "0-2";"00" **[green]@{-};
   "00";"22" **[green]@{-};
   "22";"23" **[green]@{-};
   "2-3";"2-2" **[blue]@{-};
   "2-2";"8-8" **[blue]@{-};
   "7-7";"1-1" **[blue]@{-};
   "1-1";"10" **[blue]@{-};
   "10";"c32" **[blue]@{-};
   "c23";"01" **[blue]@{-};
   "01";"02" **[blue]@{-};
   "02";"13" **[blue]@{-};
   "3-3";"3-1" **@{-};
   "3-1";"13-3" **@{-};
   "12-2";"20" **@{-};
   "20";"21" **@{-};
   "21";"87" **@{-};
   "78";"312" **@{-};
   "213";"03" **@{-};
\endxy
\quad\longrightarrow\quad
 \xy
   (0,0)*{}="00";
   (5,0)*{}="10";
   (10,0)*{}="20";
   (15,0)*{}="30";
   (0,5)*{}="01";
   (5,5)*{}="11";
   (10,5)*{}="21";
   (15,5)*{}="31";
   (0,10)*{}="02";
   (5,10)*{}="12";
   (10,10)*{}="22";
   (15,10)*{}="32";
   (0,15)*{}="03";
   (5,15)*{}="13";
   (10,15)*{}="23";
   (15,15)*{}="33";
   (0,-5)*{}="0-1";
   (5,-5)*{}="1-1";
   (10,-5)*{}="2-1";
   (15,-5)*{}="3-1";
   (0,-10)*{}="0-2";
   (5,-10)*{}="1-2";
   (10,-10)*{}="2-2";
   (15,-10)*{}="3-2";
   (0,-15)*{}="0-3";
   (5,-15)*{}="1-3";
   (10,-15)*{}="2-3";
   (15,-15)*{}="3-3";
   (7,13)*{}="713";
   (8,12)*{}="812";
   (2,8)*{}="28";
   (3,7)*{}="37";
   (7,3)*{}="73";
   (8,2)*{}="82";
   (12,-2)*{}="12-2";
   (13,-3)*{}="13-3";
   (7,-7)*{}="7-7";
   (8,-8)*{}="8-8";
   (2,-12)*{}="2-12";
   (3,-13)*{}="3-13";
   "0-3";"30" **[red]@{-};
   "30";"33" **[red]@{-};
   "1-3";"3-13" **[green]@{-};
   "2-12";"0-2" **[green]@{-};
   "0-2";"01" **[green]@{-};
   "01";"23" **[green]@{-};
   "2-3";"2-2" **[blue]@{-};
   "2-2";"8-8" **[blue]@{-};
   "7-7";"1-1" **[blue]@{-};
   "1-1";"10" **[blue]@{-};
   "10";"21" **[blue]@{-};
   "21";"22" **[blue]@{-};
   "22";"812" **[blue]@{-};
   "713";"13" **[blue]@{-};
   "3-3";"3-1" **@{-};
   "3-1";"13-3" **@{-};
   "12-2";"82" **@{-};
   "73";"37" **@{-};
   "28";"02" **@{-};
   "02";"03" **@{-};
\endxy
\quad\longrightarrow\quad
 \xy
   (0,0)*{}="00";
   (5,0)*{}="10";
   (10,0)*{}="20";
   (15,0)*{}="30";
   (0,5)*{}="01";
   (5,5)*{}="11";
   (10,5)*{}="21";
   (15,5)*{}="31";
   (0,10)*{}="02";
   (5,10)*{}="12";
   (10,10)*{}="22";
   (15,10)*{}="32";
   (0,15)*{}="03";
   (5,15)*{}="13";
   (10,15)*{}="23";
   (15,15)*{}="33";
   (0,-5)*{}="0-1";
   (5,-5)*{}="1-1";
   (10,-5)*{}="2-1";
   (15,-5)*{}="3-1";
   (0,-10)*{}="0-2";
   (5,-10)*{}="1-2";
   (10,-10)*{}="2-2";
   (15,-10)*{}="3-2";
   (0,-15)*{}="0-3";
   (5,-15)*{}="1-3";
   (10,-15)*{}="2-3";
   (15,-15)*{}="3-3";
   (7,13)*{}="713";
   (8,12)*{}="812";
   (2,8)*{}="28";
   (3,7)*{}="37";
   (12,3)*{}="123";
   (13,2)*{}="132";
   (7,-2)*{}="7-2";
   (8,-3)*{}="8-3";
   (13,-8)*{}="13-8";
   (12,-7)*{}="12-7";
   (2,-12)*{}="2-12";
   (3,-13)*{}="3-13";
   "0-3";"1-2" **[red]@{-};
   "1-2";"1-1" **[red]@{-};
   "1-1";"31" **[red]@{-};
   "31";"33" **[red]@{-};
   "1-3";"3-13" **[green]@{-};
   "2-12";"0-2" **[green]@{-};
   "0-2";"01" **[green]@{-};
   "01";"23" **[green]@{-};
   "2-3";"2-2" **[blue]@{-};
   "2-2";"3-1" **[blue]@{-};
   "3-1";"30" **[blue]@{-};
   "30";"132" **[blue]@{-};
   "123";"21" **[blue]@{-};
   "21";"22" **[blue]@{-};
   "22";"812" **[blue]@{-};
   "713";"13" **[blue]@{-};
   "3-3";"3-2" **@{-};
   "3-2";"13-8" **@{-};
   "12-7";"8-3" **@{-};
   "7-2";"10" **@{-};
   "10";"11" **@{-};
   "11";"37" **@{-};
   "28";"02" **@{-};
   "02";"03" **@{-};
\endxy
\quad = \quad
 \xy
   (0,0)*{}="00";
   (5,0)*{}="10";
   (10,0)*{}="20";
   (15,0)*{}="30";
   (0,5)*{}="01";
   (5,5)*{}="11";
   (10,5)*{}="21";
   (15,5)*{}="31";
   (0,10)*{}="02";
   (5,10)*{}="12";
   (10,10)*{}="22";
   (15,10)*{}="32";
   (0,15)*{}="03";
   (5,15)*{}="13";
   (10,15)*{}="23";
   (15,15)*{}="33";
   (0,-5)*{}="0-1";
   (5,-5)*{}="1-1";
   (10,-5)*{}="2-1";
   (15,-5)*{}="3-1";
   (0,-10)*{}="0-2";
   (5,-10)*{}="1-2";
   (10,-10)*{}="2-2";
   (15,-10)*{}="3-2";
   (0,-15)*{}="0-3";
   (5,-15)*{}="1-3";
   (10,-15)*{}="2-3";
   (15,-15)*{}="3-3";
   (7,13)*{}="713";
   (8,12)*{}="812";
   (12,8)*{}="128";
   (13,7)*{}="137";
   (2,3)*{}="c23";
   (3,2)*{}="c32";
   (7,-2)*{}="7-2";
   (8,-3)*{}="8-3";
   (2,-7)*{}="2-7";
   (3,-8)*{}="3-8";
   (12,-12)*{}="12-12";
   (13,-13)*{}="13-13";
   "0-3";"0-2" **[red]@{-};
   "0-2";"20" **[red]@{-};
   "20";"21" **[red]@{-};
   "21";"32" **[red]@{-};
   "32";"33" **[red]@{-};
   "1-3";"1-2" **[green]@{-};
   "1-2";"3-8" **[green]@{-};
   "2-7";"0-1" **[green]@{-};
   "0-1";"00" **[green]@{-};
   "00";"11" **[green]@{-};
   "11";"12" **[green]@{-};
   "12";"23" **[green]@{-};
   "2-3";"3-2" **[blue]@{-};
   "3-2";"31" **[blue]@{-};
   "31";"137" **[blue]@{-};
   "128";"812" **[blue]@{-};
   "713";"13" **[blue]@{-};
   "3-3";"13-13" **@{-};
   "12-12";"2-2" **@{-};
   "2-2";"2-1" **@{-};
   "2-1";"8-3" **@{-};
   "7-2";"c32" **@{-};
   "c23";"01" **@{-};
   "01";"03" **@{-};
\endxy
\quad\longrightarrow\quad
 \xy
   (0,0)*{}="00";
   (5,0)*{}="10";
   (10,0)*{}="20";
   (15,0)*{}="30";
   (0,5)*{}="01";
   (5,5)*{}="11";
   (10,5)*{}="21";
   (15,5)*{}="31";
   (0,10)*{}="02";
   (5,10)*{}="12";
   (10,10)*{}="22";
   (15,10)*{}="32";
   (0,15)*{}="03";
   (5,15)*{}="13";
   (10,15)*{}="23";
   (15,15)*{}="33";
   (0,-5)*{}="0-1";
   (5,-5)*{}="1-1";
   (10,-5)*{}="2-1";
   (15,-5)*{}="3-1";
   (0,-10)*{}="0-2";
   (5,-10)*{}="1-2";
   (10,-10)*{}="2-2";
   (15,-10)*{}="3-2";
   (0,-15)*{}="0-3";
   (5,-15)*{}="1-3";
   (10,-15)*{}="2-3";
   (15,-15)*{}="3-3";
   (7,13)*{}="713";
   (8,12)*{}="812";
   (12,8)*{}="128";
   (13,7)*{}="137";
   (7,3)*{}="73";
   (8,2)*{}="82";
   (2,-2)*{}="c2-2";
   (3,-3)*{}="c3-3";
   (7,-7)*{}="7-7";
   (8,-8)*{}="8-8";
   (12,-12)*{}="12-12";
   (13,-13)*{}="13-13";
   "0-3";"0-1" **[red]@{-};
   "0-1";"32" **[red]@{-};
   "32";"33" **[red]@{-};
   "1-3";"1-2" **[green]@{-};
   "1-2";"2-1" **[green]@{-};
   "2-1";"20" **[green]@{-};
   "20";"82" **[green]@{-};
   "73";"11" **[green]@{-};
   "11";"12" **[green]@{-};
   "12";"23" **[green]@{-};
   "2-3";"3-2" **[blue]@{-};
   "3-2";"31" **[blue]@{-};
   "31";"137" **[blue]@{-};
   "128";"812" **[blue]@{-};
   "713";"13" **[blue]@{-};
   "3-3";"13-13" **@{-};
   "12-12";"8-8" **@{-};
   "7-7";"c3-3" **@{-};
   "c2-2";"00" **@{-};
   "00";"03" **@{-};
\endxy
\quad\longrightarrow\quad
 \xy
   (0,0)*{}="00";
   (5,0)*{}="10";
   (10,0)*{}="20";
   (15,0)*{}="30";
   (0,5)*{}="01";
   (5,5)*{}="11";
   (10,5)*{}="21";
   (15,5)*{}="31";
   (0,10)*{}="02";
   (5,10)*{}="12";
   (10,10)*{}="22";
   (15,10)*{}="32";
   (0,15)*{}="03";
   (5,15)*{}="13";
   (10,15)*{}="23";
   (15,15)*{}="33";
   (0,-5)*{}="0-1";
   (5,-5)*{}="1-1";
   (10,-5)*{}="2-1";
   (15,-5)*{}="3-1";
   (0,-10)*{}="0-2";
   (5,-10)*{}="1-2";
   (10,-10)*{}="2-2";
   (15,-10)*{}="3-2";
   (0,-15)*{}="0-3";
   (5,-15)*{}="1-3";
   (10,-15)*{}="2-3";
   (15,-15)*{}="3-3";
   (12,13)*{}="1213";
   (13,12)*{}="1312";
   (7,8)*{}="78";
   (8,7)*{}="87";
   (12,3)*{}="123";
   (13,2)*{}="132";
   (2,-2)*{}="c2-2";
   (3,-3)*{}="c3-3";
   (7,-7)*{}="7-7";
   (8,-8)*{}="8-8";
   (12,-12)*{}="12-12";
   (13,-13)*{}="13-13";
   "0-3";"0-1" **[red]@{-};
   "0-1";"10" **[red]@{-};
   "10";"11" **[red]@{-};
   "11";"33" **[red]@{-};
   "1-3";"1-2" **[green]@{-};
   "1-2";"2-1" **[green]@{-};
   "2-1";"20" **[green]@{-};
   "20";"31" **[green]@{-};
   "31";"32" **[green]@{-};
   "32";"1312" **[green]@{-};
   "1213";"23" **[green]@{-};
   "2-3";"3-2" **[blue]@{-};
   "3-2";"30" **[blue]@{-};
   "30";"132" **[blue]@{-};
   "123";"87" **[blue]@{-};
   "78";"12" **[blue]@{-};
   "12";"13" **[blue]@{-};
   "3-3";"13-13" **@{-};
   "12-12";"8-8" **@{-};
   "7-7";"c3-3" **@{-};
   "c2-2";"00" **@{-};
   "00";"03" **@{-};
\endxy
\]}
\end{itemize}
The Zamolodchikov Tetrahedron equation  means that the above two processes are the same.

\vspace{3mm}

\noindent
{\bf Acknowledgements. } This research is supported by NSFC (12471060, W2412041). We give warmest thanks to Rong Tang for helpful discussions.

\noindent
{\bf Declaration of interests. } The authors have no conflicts of interest to disclose.

\noindent
{\bf Data availability. } Data sharing is not applicable to this article as no new data were created or analyzed in this study.

 \end{document}